\newtheorem{oq}{Open Question}
\theoremstyle{plain}
\newtheorem{theorem}{Theorem}[section]
\newtheorem{lemma}[theorem]{Lemma}
\newtheorem{observation}[theorem]{Observation}
\newtheorem{question}[theorem]{Question}
\theoremstyle{definition}
\newcommand{\depth}{\textsf{depth}}
\newcommand{\poly}{\text{poly}}
\newcommand{\parent}{\textsf{parent}}
\newcommand{\pos}{\mathtt{pos}}
\newcommand{\mpnote}[1]{{\color{blue}[Merav: #1]}}
\begin{document}

	\title{New Distributed Interactive Proofs for Planarity: \\ A Matter of Left and Right\thanks{This project is partially funded by the European Research Council (ERC) under the European Union’s Horizon 2020 research and innovation programme, grant agreement No.\ 949083}}
	\author{Yuval Gil\footnote{Weizmann Institute of Science. yuval.gil@weizmann.ac.il} \and Merav Parter \footnote{Weizmann Institute of Science. merav.parter@weizmann.ac.il}
	}
	\date{}
	
	\maketitle	
	
	\begin{abstract}
	We provide new distributed interactive proofs (DIP) for planarity and related graph families. The notion of a \emph{distributed interactive proof} (DIP) was introduced by Kol, Oshman, and Saxena (PODC 2018). In this setting, the verifier consists of $n$ nodes connected by a communication graph $G$. The prover is a single entity that communicates with all nodes by short messages. The goal is to verify that the graph $G$ satisfies a certain property (e.g., planarity) in a small number of rounds, and with a small communication bound, denoted as the \emph{proof size}.
	
	Prior work by Naor, Parter and Yogev (SODA 2020) presented a DIP for planarity that uses three interaction rounds and a proof size of $O(\log n)$. Feuilloley et al.\ (PODC 2020) showed that the same can be achieved  with a single interaction round and without randomization, by providing a proof labeling scheme with a proof size of $O(\log n)$. In a subsequent work, Bousquet, Feuilloley, and Pierron (OPODIS 2021) achieved the same bound for related graph families such as outerplanarity, series-parallel graphs, and graphs of treewidth at most $2$. In this work, we design new DIPs that use exponentially shorter proofs compared to the state-of-the-art bounds. Our main results are:
	
	\begin{itemize}
		\item There is a $5$-round protocol with $O(\log\log n)$ proof size for outerplanarity. 
		\item There is a $5$-round protocol with $O(\log\log n)$ proof size for verifying embedded planarity and $O(\log\log n+\log \Delta)$ proof size for general planar graphs, where $\Delta$ is the maximum degree in the graph. In the former setting, it is assumed that an embedding of the graph is given (e.g., each node holds a clockwise orientation of its neighbors) and the goal is to verify that it is a valid planar embedding. The latter result should be compared with the non-interactive setting for which there is lower bound of $\Omega(\log n)$ bits for graphs with $\Delta=O(1)$ by Feuilloley et al. (PODC 2020).
		\item  The non-interactive deterministic lower bound of $\Omega(\log n)$ bits by Feuilloley et al. (PODC 2020) can be extended to hold even if the verifier is randomized. Moreover, the lower bound holds even with the assumption that the verifier's randomness comes in the form of an unbounded random string \emph{shared} among the nodes.
	\end{itemize}
	We also show that our DIPs can be extended to protocols with similar bounds for verifying series-parallel graphs and graphs with tree-width at most $2$. Perhaps surprisingly, our results demonstrate that the key technical barrier for obtaining $o(\log\log n)$ labels for all our problems is a basic sorting verification task in which all nodes are embedded on an oriented path $P \subseteq G$ and it is desired for each node to distinguish between its left and right $G$-neighbors. 
\end{abstract}
	\newpage
	\tableofcontents
	\newpage
	\section{Introduction}

Planarity is a fundamental graph property that has been widely studied due to its rich combinatorial structure and numerous algorithmic applications. While in the centralized setting, the task of verifying if a given graph is planar can be done in linear time \cite{hopcroft1974efficient}, in the distributed setting the running time depends linearly on the diameter of the graph \cite{GhaffariH16}. The non-local nature of planarity motivates the use of a powerful, but potentially untrusted, \emph{prover} that can aid the distributed verification by providing each node a short string of advice, a.k.a. a \emph{proof label}. The nodes then engage in brief communication to collectively determine whether to accept or reject the provided proof. This framework has been formalized into \emph{proof labeling schemes} by Korman, Kutten and Peleg \cite{KormanKP10}. In this work we focus on the interactive extension of this model to \emph{distributed interactive proofs} (DIP) as proposed by 
Kol, Oshman, and Saxena \cite{kol2018interactive}. In this setting, the nodes are allowed to interact with the prover through multiple rounds of communication. The key complexity measures are the number of interaction rounds and the total proof size.

The first evidence of the power of such proof systems for certifying planarity was provided by Naor, Parter, and Yogev \cite{NaorPY20}. Their result for planarity was in fact implied by a more general machinery that translates any (centralized) computation in $O(n)$ time -- such as, the centralized planarity verification of \cite{hopcroft1974efficient} -- into a three-round distributed interactive protocol with $O(\log n)$ proof size. 
Subsequent work by Feuilloley et al. \cite{FeuilloleyFMRRT21,FeuilloleyF0RRT23} demonstrated that the same proof size could be achieved with just a single interaction round, effectively reducing to the classical proof labeling scheme setting. Their work is accompanied by a matching lower bound of $\Omega(\log n)$ bits, that holds already for graphs with maximum degree of $O(1)$. 
These developments bring us back to the fundamental question of whether interaction truly provides an advantage in certifying planarity.

\begin{question}\label{q:1}
What is the power of distributed interactive proofs for certifying planarity? 
\end{question}
We address this question by providing new DIPs for planarity and related graph families. Namely, we obtain constant-round protocols with a proof size of $O(\log \log n)$ for outerplanarity, embedded planarity, series-parallel graphs, and graphs of treewidth at most $2$; and a proof size of $O(\log \log n+\log \Delta)$ for planarity in graphs of maximum degree $\Delta$ (we distinguish between embedded planarity in which we assume that a graph embedding is given in a distributed manner, and planarity in which no embedding is given; see Section \ref{section:planar} for full details). We also show that the $\Omega(\log n)$ lower bound of \cite{FeuilloleyFMRRT21} can be extended to one-round DIPs. Therefore, our results give the first evidence to the advantage provided from interaction in planarity certification.

\paragraph{Model.} In this paper, we consider \emph{distributed interactive proofs (DIPs)} based on the model of \cite{kol2018interactive}.\footnote{We note that \cite{kol2018interactive} uses the terms dAM and dMAM to denote the special cases of a DIP protocol with $2$ and $3$ rounds, respectively.} In the DIP setting, instances are graphs $G=(V,E)$ taken from some universe $\mathcal{U}$ and the goal is to distinguish between \emph{yes-instances} that come from a yes-family $\mathcal{F}_{Y}\subset \mathcal{U}$ and \emph{no-instances} that come from a no-family $\mathcal{F}_{N}=\mathcal{U}-\mathcal{F}_{Y}$.\footnote{One can easily adapt the setting so that instances also include some local information to the nodes (e.g., identifiers, weights, etc.). We chose to avoid this additional notation as the results of this paper apply to graphs without local node information.} A DIP is an interactive protocol between a distributed \emph{verifier} operating concurrently at all nodes of the graph and a centralized \emph{prover} that can see the entire instance. The prover and verifier interact back and forth in \emph{rounds}. Let $\mathcal{I}_{\mathtt{vrf}}$ denote the rounds in which the verifier interacts with the prover and let $\mathcal{I}_{\mathtt{prv}}$ denote the rounds in which the prover interacts with the verifier.

Our protocols are \emph{public-coin} which means that in each round $i\in \mathcal{I}_{\mathtt{vrf}}$, the verifier at each node $v\in V$ interacts by drawing a random bitstring $\rho_{v}^{i}\in \{0,1\}^{*}$ and sending it to the prover (in particular, the verifier cannot hide any random bits from the prover). The prover interacts with the verifier in rounds $i\in \mathcal{I}_{\mathtt{prv}}$ by sending a message $\mu_{v}^{i}\in \{0,1\}^{*}$ to each node $v\in V$. Keeping up with the terminology of \cite{KormanKP10}, we sometimes refer to the messages sent by the prover as \emph{labels}. The interaction ends with a round in which the prover interacts with the verifier, after which the verifier at each node $v\in V$ computes a local yes/no output based on: (1) the random bitstrings $\rho_{v}^{i}$ drawn by $v$ throughout the protocol; (2) the labels $\mu_{v}^{i}$ assigned to $v$ by the prover throughout the protocol; and (3) the labels $\mu_{u}^{i}$ assigned to $v$'s neighbors $u\in N(v)$ by the prover throughout the protocol. We say that the verifier \emph{accepts} the instance if all nodes output `yes', and that the verifier \emph{rejects} the instance if at least one node outputs `no'.

As standard, the correctness of a proof system is defined by \emph{completeness} and \emph{soundness} requirements. The completeness requirements asks that if $G\in \mathcal{F}_{Y}$, then there exists an \emph{honest} prover causing the verifier to accept the instance; whereas the soundness requirement asks that if $G\in \mathcal{F}_{N}$, then for any prover, the verifier rejects the instance. In the DIP setting, the correctness requirements are relaxed so that the completeness and soundness hold with probabilities $1-\epsilon_{c}$ and $1-\epsilon_{s}$, respectively, for some parameters $0\leq \epsilon_{c}, \epsilon_{s}<1/2$. In this case, we refer to $\epsilon_{c}$ as the \emph{completeness error} and to $\epsilon_{s}$ as the \emph{soundness error}. A protocol is said to have \emph{perfect completeness} if $\epsilon_{c}=0$. A DIP protocol is measured by the amount of prover-verifier communication it requires. Namely, the objective is to design protocols with a small number of interaction rounds and a small \emph{proof size} which is defined as the size of the longest label assigned by the honest prover during the protocol.

\smallskip
\noindent\textbf{The Challenge of Going Below the $\log n$ Barrier.} As observed in \cite{NaorPY20},
achieving sub-logarithmic proof lengths presents a significant challenge in the DIP setting and also serves as a lower bound for numerous problems in the non-interactive setting. This difficulty arises because many fundamental operations—such as identifying neighboring nodes, counting, or specifying node IDs — intrinsically require $\log n$ bits. While \cite{NaorPY20} made important initial progress in this area, their results apply to a more permissive variant of the DIP model, where nodes are allowed to send different messages to each of their neighbors. Indeed, their key technique is based on a rooted spanning tree provided by the prover such that every node identifies its tree-parent based on its internal port-numbering. Therefore, for each node to be able to learn its children in the tree (which is crucial to their protocols), every node has to send a distinct message to its parent.  

In contrast, our work operates within the more restrictive DIP framework defined by Kol et al. \cite{kol2018interactive}, where nodes may only forward the proofs they receive to their neighbors. This constraint aligns with the non-interactive proof labeling model introduced in \cite{KormanKP10}, where a node’s decision is based solely on its own proof and those of its neighbors. This key difference in model assumptions becomes especially important in the sub-logarithmic setting, effectively preventing us from directly applying the techniques developed in \cite{NaorPY20}. 

\paragraph{Our Results.}
We present new distributed interactive proofs for various well-studied graph families. The first graph family considered is that of path-outerplanar graphs. Previously, \cite{FeuilloleyFMRRT21} showed that path-outerplanarity admits a proof labeling scheme with a proof size of $O(\log n)$. We improve upon the communication complexity of that result by designing a protocol with exponentially shorter proof labels as specified in the following theorem.
\begin{theorem}\label{theorem:path-outerplanar}
There exists a distributed interactive proof for path-outerplanarity running in $5$ interaction rounds. The proof admits perfect completeness, a soundness error of $1/\poly\log n$, and a proof size of $O(\log \log n)$. 
\end{theorem}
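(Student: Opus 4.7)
The plan is to reduce path-outerplanarity verification to two subtasks: certifying that $G$ contains a Hamiltonian path $P$, and certifying that the non-path edges, viewed as arcs above $P$, form a non-crossing (laminar) family. A graph is path-outerplanar iff such a pair $(P,\text{arc assignment})$ exists, so the prover supplies this data through short labels and the verifier checks both conditions in $5$ interaction rounds. The skeleton is the $\mathsf{MAMAM}$ pattern: rounds $1,3,5$ carry prover messages, and rounds $2,4$ carry verifier randomness.

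In round $1$ the prover labels every node $v$ with: (i) which at most two of $v$'s $G$-neighbors are its path-neighbors on $P$; (ii) a single bit per $G$-neighbor declaring whether it lies to $v$'s left or right on $P$; and (iii) for each arc-edge incident to $v$, a ``nesting depth'' value, to be transmitted in $\mathbb{F}_{p}$ once $p$ is fixed. Each node performs purely local consistency checks: the edges partition into path/arc classes, the left/right bits are mutually reversed across every edge, and the arcs incident to $v$ glue into a valid parenthesization --- arcs closing at $v$ are paired by depth to the topmost currently-open arcs, opening arcs push fresh levels, and the depths of multiple arcs opening (resp.\ closing) at $v$ are strictly monotone going rightward (resp.\ leftward).

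Two global obligations remain: (P1) that $P$ is a single Hamiltonian path rather than a disjoint union of paths and cycles, and (P2) that the local nesting-depth labels come from a globally valid balanced-parenthesis sequence along $P$. Both reduce to the Arthur--Merlin sorting/fingerprint primitive emphasized in the introduction. In round $2$ the verifier samples a uniform prime $p$ of $O(\log\log n)$ bits with $p=\Theta(\poly\log n)$. In round $3$ the prover sends each node $v$ a position $\pi(v)\bmod p$ together with a running depth $d(v)\bmod p$; adjacent nodes on $P$ verify that $\pi(u)\equiv\pi(v)+1\pmod{p}$ and that $d$ changes by the locally visible ``opens minus closes.'' To rule out modular wrap-around --- several nodes sharing a position, or depth cycling through negatives --- the verifier draws a random $r\in\mathbb{F}_{p}$ in round $4$, and in round $5$ the prover sends each $v$ the partial fingerprint $Q(v)=\prod_{u:\pi(u)\le\pi(v)}(r-\pi(u))\bmod p$. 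The asserted right-endpoint checks $Q=\prod_{j=1}^{n}(r-j)\bmod p$; by Schwartz--Zippel this pins $\pi$ down as a permutation of $\{1,\dots,n\}$ with failure probability $O(n/p)$. An analogous fingerprint on the depth-increment sequence anchors $d(v)\in\{0,\dots,n\}$ without wrap. All labels are in $\mathbb{F}_{p}$ or are single bits, so the proof size is $O(\log\log n)$.

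The main obstacle I expect is the non-crossing check in (P2): although each arc's two endpoints can see one another's depth label, two arcs $(v_i,v_j)$ and $(v_k,v_\ell)$ with $i<k<j<\ell$ are arbitrarily far apart on $P$, so some global sketch must be encoded into the $O(\log\log n)$-bit labels. The plan is to let the depth labels carry this information implicitly: once the fingerprinted depth sequence is certified, the local ``opens at smaller depth exit further right, closes at larger depth came from nearer left'' rules at each node are enough to force a valid parenthesization, equivalently non-crossing. The delicate point is the equivalence --- arguing that no dishonest depth labeling simultaneously passes all local checks yet produces a crossing --- which I would prove by exchanging at the leftmost offending crossing and deriving a depth inconsistency. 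A final union bound over the $O(1)$ polynomial-identity tests keeps the overall soundness error at $1/\poly\log n$, matching the statement, while perfect completeness follows since the honest depth and position assignments satisfy every local and fingerprint equation exactly.
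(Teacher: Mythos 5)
There is a genuine gap, and it sits at the heart of your plan: the global fingerprint checks are vacuous at the field size your label budget allows. You sample a prime $p=\Theta(\poly\log n)$ and then ask the rightmost node to verify $Q=\prod_{j=1}^{n}(r-j)\bmod p$ against $\prod_{u}(r-\pi(u))\bmod p$. These are degree-$n$ polynomials over $\mathbb{F}_{p}$, so the Schwartz--Zippel bound you invoke is $O(n/p)\gg 1$; equivalently, since $x^{p}\equiv x$ over $\mathbb{F}_{p}$, wildly different multisets of positions (and depth increments) produce identical fingerprints with constant or even certain probability. To make a degree-$n$ identity test sound you need $p=\Omega(n\cdot\poly\log n)$, i.e.\ $\Theta(\log n)$-bit field elements, which destroys the $O(\log\log n)$ proof size. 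The same objection kills both of your global obligations: (P1) Hamiltonicity via the position fingerprint, and (P2) anchoring the depths $d(v)$ against mod-$p$ wrap-around. This is precisely the obstruction the paper's protocol is built to avoid: it never fingerprints an object of size $n$ over a small field. Instead it partitions $P$ into blocks of $\lceil\log n\rceil$ nodes, spreads each block's $\Theta(\log n)$-bit position one bit per node, verifies consecutiveness of adjacent blocks and cross-edge comparisons via multiset-equality tests whose multisets have size only $\poly\log n$ (so a $\poly\log n$-size field genuinely gives $1/\poly\log n$ soundness), and separately certifies that $P$ is a spanning path using the Naor--Parter--Yogev spanning-tree verification protocol with constant-size labels, amplified by repetition.

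A second, independent concern is your non-crossing check. Nesting depths are $\Theta(\log n)$-bit quantities, so you must send them reduced mod $p$, and a cheating prover can then exploit depth collisions exactly where the two far-apart crossing arcs meet; your only defense is again the broken global fingerprint, and the ``exchange at the leftmost offending crossing'' argument is left as a sketch precisely at the step where it would have to rule this out. The paper's nesting verification takes a different and crucially randomized route: the prover first commits to which edge is the longest left/right edge at each node, then each node samples a fresh random string $s_{v}$ \emph{after} that commitment, and edges are identified by names $(s_{u},s_{v})$ with successor/above pointers; soundness comes from the fact that a crossing forces some chain of local equalities to equate names containing independent random strings, which collide with probability $2^{-\Theta(\ell)}$. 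If you want to salvage your depth-based approach, you would need to replace the global fingerprints by some block-local or commit-then-randomize mechanism of this kind; as written, the soundness claim does not hold.
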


Building upon the path-outerplanarity protocol, we provide a protocol for (general) outerplanarity with the same asymptotic communication guarantees.
\begin{theorem}\label{theorem:outerplanar}
		There exists a distributed interactive proof for outerplanarity running in $5$ interaction rounds. The proof admits perfect completeness, a soundness error of $1/\poly\log n$, and a proof size of $O(\log \log n)$.
\end{theorem}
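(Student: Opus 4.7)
The plan is to reduce general outerplanarity to path-outerplanarity, exploiting the block-cut decomposition. Two structural facts drive the reduction: a graph is outerplanar if and only if each of its $2$-connected blocks is outerplanar, and every $2$-connected outerplanar graph has a unique Hamiltonian cycle, namely the boundary of its outer face. These facts let the prover reveal, for each block, the cycle that must end up as the outer face.

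First I would have the prover supply a certificate of the block-cut tree of $G$: a marking of the cut vertices, an assignment of each edge to its containing block, and the induced tree structure connecting blocks through cut vertices. Each bridge-block is trivially outerplanar. For each $2$-connected block $B$, the prover additionally provides the cyclic order along its Hamiltonian cycle $C_B$ together with a distinguished edge $e_B\in C_B$; removing $e_B$ turns $B$ into a path-outerplanar instance that can be verified by invoking Theorem~\ref{theorem:path-outerplanar}. All such verifications are piggybacked onto the same $5$ interaction rounds already used by the path-outerplanarity protocol, so round complexity is preserved.

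The main technical obstacle is keeping the proof size at $O(\log\log n)$ at cut vertices that belong to many blocks: a naive per-block invocation costs $O(\deg(v)\cdot\log\log n)$ at such a vertex. To avoid this blow-up, I would fuse the per-block invocations into a single global one by having the prover supply a closed walk $W$ around the outer face of the entire embedding. The walk $W$ traverses each block's Hamiltonian cycle (broken at $e_B$) once and each bridge twice, and the appearances of each vertex on $W$ are determined by a fixed local rotation that the prover also certifies. Each vertex then only needs its $O(\log\log n)$-bit position label inherited from the path-outerplanarity protocol, plus $O(1)$ extra bits per incident edge to record that edge's role on $W$. Equivalently, one can view $W$ as a single Hamiltonian-path instance on an ``unrolled'' auxiliary graph that each node simulates locally, and apply Theorem~\ref{theorem:path-outerplanar} once rather than once per block.

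Completeness and the round count follow directly from Theorem~\ref{theorem:path-outerplanar}. Soundness combines three ingredients: (i) soundness of block-cut tree verification, which forces the nodes to agree on a genuine partition into blocks; (ii) uniqueness of the Hamiltonian cycle in each $2$-connected outerplanar block, so that a cheating prover cannot disguise a non-outerplanar block behind a fake cycle; and (iii) soundness of path-outerplanarity applied to each $B\setminus\{e_B\}$. A union bound over the at most $n$ blocks preserves the $1/\poly\log n$ soundness error, possibly after boosting the per-block error by standard parallel repetition before combining.
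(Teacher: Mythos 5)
Your high-level strategy matches the paper's: decompose along the block-cut tree, use that a biconnected outerplanar block is path-outerplanar with respect to a Hamiltonian path whose endpoints are joined by an edge (Theorem \ref{theorem:biconnected-outerplanar}), and run the path-outerplanarity protocol inside each block. You also correctly isolate the real obstacle, namely cut vertices that lie in many blocks. But your proposed fix --- a single closed walk $W$ around the outer face, ``unrolled'' into one path-outerplanarity instance --- does not actually remove the obstacle as described. A cut vertex belonging to $k$ blocks appears $k$ times on $W$, so in the unrolled instance it owns $k$ distinct copies, each of which must receive its own block-construction index, position bits, and $\mathtt{name}/\mathtt{succ}/\mathtt{above}$ fields from the prover; your claim that each vertex ``only needs its $O(\log\log n)$-bit position label'' is false for exactly the vertices that caused the problem, and without an explicit scheme for deferring each copy's labels (and randomness) to distinct neighbors --- analogous to what the paper does in the planar-embedding reduction, where the label of copy $x_i(v)$ is handed to the child $c_i(v)$ --- the proof size at a cut vertex is again $\Theta(\deg(v)\log\log n)$. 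The paper's proof of Theorem \ref{theorem:outerplanar} resolves this differently and concretely: it roots each block's Hamiltonian path at the $C$-separating cut node, uses distance-mod-$3$ labels so every node in $C$ can identify that cut node, has the $C$-leader draw the cut node's randomness, and defers the cut node's labels to its neighbors inside $C$ (exploiting that the separating node is the leftmost path node, so no ordering claims about it need verification). Your proposal is missing this mechanism or an equivalent one, and that is the crux of the theorem.

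Two further points. First, your soundness accounting via ``a union bound over the at most $n$ blocks'' is both unnecessary and harmful: a union bound over $n$ events of probability $1/\poly\log n$ gives nothing, and boosting each block to error $1/\poly(n)$ by repetition would cost $\Theta(\log n)$ bits, destroying the bound. The right argument (and the paper's) is that in a no-instance some single component or consistency check is violated, and the verifier in that component rejects with probability $1-1/\poly\log n$; no union bound over blocks is needed. Second, ``an assignment of each edge to its containing block'' must not be done with explicit block identifiers (that costs $\Omega(\log n)$ bits); the paper instead verifies component membership with short random nonces drawn by leaders and cut nodes, together with a spanning-tree check (Lemma \ref{lemma:npy-spanning-tree}) on the union of the block paths. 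The uniqueness of the Hamiltonian cycle in a biconnected outerplanar graph, which you invoke for soundness, is not needed: soundness comes directly from the soundness of the path-outerplanarity protocol on whatever path the prover commits to.
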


We then move on to consider the case of planar graphs. In this context, we consider two verification tasks referred to as \emph{planar embedding} and \emph{planarity}. In the planar embedding task, an embedding of the graph is given in a distributed manner and the goal is to decide if it is a valid planar embedding (i.e., if no edges cross); see formal definition in Section \ref{section:planar}. In the planarity task, the goal is simply to decide if the given graph is planar. The details of our protocols for these tasks are given in the following two theorems.
\begin{theorem}\label{theorem:embedding}
		There exists a distributed interactive proof for planar embedding running in $5$ interaction rounds. The proof admits perfect completeness, a soundness error of $1/\poly\log n$, and a proof size of $O(\log \log n)$.
\end{theorem}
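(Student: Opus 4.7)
The approach rests on Euler's formula: given a rotation system on a connected graph $G=(V,E)$, the standard face-tracing rule (after arriving at $v$ along $(u,v)$, leave $v$ via the neighbor succeeding $u$ in $v$'s clockwise rotation) produces some number $f$ of closed facial walks, and the embedding is planar if and only if $n-m+f=2$. My protocol therefore (i) certifies prover-supplied face labels against the rotation system, and (ii) certifies Euler's identity, both with proofs of size $O(\log\log n)$ inside the $5$-round budget.

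\emph{Face-label certification.} For every dart $(u,v)$ the prover supplies a face identifier $\phi(u,v)$. Each node $v$ locally checks that $\phi(u,v)=\phi(v,w)$, where $w$ is the neighbor following $u$ in $v$'s rotation, so that identifiers are constant along facial walks. To fit into $O(\log\log n)$ bits, $\phi(u,v)$ is transmitted modulo a random prime $p=\Theta(\log^{c} n)$ sampled by the verifier in the first round via its public coin; since any two distinct integers in $[1,n]$ are equal modulo a random such $p$ only with probability $O(1/\log^{c-1} n)$, a constant number of independent repetitions forces the soundness error down to $1/\poly\log n$ without inflating the label length.

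\emph{Certifying Euler's identity.} The prover provides a spanning tree $T\subseteq E$, marking tree-edges by a single bit. In a valid planar embedding, the non-tree darts, viewed as edges of the face-adjacency multigraph induced by $\phi$, form a spanning tree $T^{*}$ of the dual; conversely, if both $T$ and $T^{*}$ are spanning trees and the pair partitions $E$, then $|T|+|T^{*}|=(n-1)+(f-1)=m$, which is precisely $n-m+f=2$. Verifying that $T$ (respectively $T^{*}$) is a spanning tree with $O(\log\log n)$-bit labels is reduced, via an Euler tour embedded into $G$ as a walk, to the ``left/right on an oriented path'' primitive highlighted in the abstract, which is the same $O(\log\log n)$ subroutine underlying Theorem~\ref{theorem:outerplanar}. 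This allows the verifier to simultaneously certify the two trees and their edge-disjointness within the remaining rounds.

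\emph{Main obstacle.} The delicate step is certifying $T^{*}$: its vertices (the faces) are cyclic sequences of darts distributed across many nodes of $G$, and the only handles the verifier has on them are the hashed labels $\phi(\cdot)\bmod p$. I plan to realize an Euler tour of $T^{*}$ as a walk inside $G$ that alternates between traversals of facial boundaries (following the rotation) and crossings of non-tree darts, and then argue that surviving both the hash-based face-consistency check and the path-sorting primitive applied to this walk forces the dual tour to visit every face twice and every non-tree dart once, hence that $T^{*}$ is indeed a spanning tree of the dual. Managing the interaction between the modular hashing error and the $O(\log\log n)$ soundness of the sorting primitive, so that the overall error remains $1/\poly\log n$ within five rounds, is where the technical effort concentrates.
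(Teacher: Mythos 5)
Your high-level strategy (spanning tree $T$ of $G$, Euler tour guided by the rotation system, reduce to the LR-sorting primitive) is close in spirit to the paper's, but the step you yourself flag as the "main obstacle" is a genuine gap, and the mechanism you propose for it does not work as described. You reduce planarity of the rotation system to Euler's formula, i.e.\ to certifying that the non-tree edges form a spanning tree $T^{*}$ of the dual whose vertices are the facial walks. But your face-label certification does not bind the prover to the true faces: the prover itself supplies the residues $\phi(u,v)\bmod p$, so the verifier's random prime constrains nothing --- the only thing the local corner checks enforce is that labels are constant along each facial walk, which is a deterministic check needing no randomness. What the dual-tree verification actually needs is the ability to tell \emph{distinct} faces apart (otherwise a prover on a genus-$\geq 1$ rotation system can reuse identifiers and fake extra dual vertices, inflating $f$ to $m-n+2$), and with $O(\log\log n)$-bit identifiers and up to $\Theta(n)$ faces, collisions are unavoidable by pigeonhole; the "probability $O(1/\log^{c-1}n)$ per pair, constant repetitions" argument cannot be union-bounded over the $\Theta(n^{2})$ relevant pairs. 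So the Euler count $f=m-n+2$ is never actually certified, and the proposal's key soundness claim (that the walk-based check "forces the dual tour to visit every face twice and every non-tree dart once") is only announced, not proved. Certifying a global property of the dual --- whose vertices are distributed cyclic sequences of darts --- with $O(\log\log n)$ bits is essentially as hard as the original problem, and nothing in the sketch reduces it to the LR-sorting primitive.

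The paper avoids face counting altogether. It uses the characterization (derived from Feuilloley et al.) that $\rho(G)$ is a planar embedding if and only if the auxiliary graph $h(G,T,\rho(G))$ --- the Euler tour of $T$ ordered by the rotation, with each non-tree edge $(u,v)$ attached at the tour positions $x_{i(e,u)}(u),x_{i(e,v)}(v)$ --- is path-outerplanar with respect to that tour. This replaces your global dual-spanning-tree condition with a \emph{nesting} condition along a single oriented path, which is exactly what the path-outerplanarity protocol (and hence LR-sorting) certifies with $O(\log\log n)$ bits; each node locally simulates its $\chi(v)+1$ tour copies, and $T$ itself is verified by the constant-size spanning-tree protocol of Naor--Parter--Yogev with $\Theta(\log\log n)$-fold repetition. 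If you want to salvage your route, you would need a commitment mechanism in the style of the paper's block construction and multiset-equality aggregation to anchor face identities to verifier randomness sampled after the prover commits --- machinery your proposal does not provide.
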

	
\begin{theorem}\label{theorem:planarity}
There exists a distributed interactive proof for planarity running in $5$ interaction rounds. The proof admits perfect completeness, a soundness error of $1/\poly\log n$, and a proof size of $O(\log \log n+\log \Delta)$.
\end{theorem}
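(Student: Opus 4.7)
The plan is to derive \Cref{theorem:planarity} from \Cref{theorem:embedding} by a reduction from planarity to embedded planarity. If $G$ is planar, the honest prover computes a planar embedding of $G$ and conveys it to the nodes over the 5 interaction rounds; the verifier then uses \Cref{theorem:embedding} as a black box to certify that the conveyed embedding is indeed planar. Perfect completeness is immediate, and the soundness of the composed protocol will follow from a union bound over the constant number of sub-protocol invocations.

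The main technical hurdle is transmitting the embedding within the tight $O(\log\log n+\log\Delta)$ per-label budget, since a rotation at $v$ naively requires $\Omega(\deg(v)\log\Delta)$ bits. I would have the prover label each directed edge $(v,u)$ with its cyclic position $p_v(u)\in[\deg(v)]$, but split the cost across the two endpoints and across the prover rounds: $v$ learns $p_v(u)$ in part by inspecting what the prover sends to its neighbor $u$ (neighbor labels are visible in the DIP model), while carefully selected verifier challenges ensure that over the 5 rounds the relevant edge-position values are communicated and cross-checked. Each round's label carries only $O(\log\Delta)$ bits for embedding information plus an additive $O(\log\log n)$ reserved for the embedded-planarity sub-protocol.

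The verifier then needs to certify two properties: (i) that the claimed positions $\{p_v(u):u\in N(v)\}$ form a genuine permutation of $[\deg(v)]$ at every vertex, and (ii) that the resulting rotation system is planar. Property (ii) is handled by invoking \Cref{theorem:embedding}; property (i) is precisely the sorting/left–right verification primitive highlighted in the abstract — once $v$'s neighbors are oriented along a local path according to their claimed positions, the same left/right sub-routine lets $v$ check distinctness and full coverage of $[\deg(v)]$. The principal obstacle is a cheating prover who declares locally-valid position labels that fail to assemble into a true permutation at some vertex, or whose claimed rotation system harbors a crossing; the sorting sub-protocol rules out the former and \Cref{theorem:embedding} rules out the latter, so the composed protocol inherits perfect completeness and the $1/\poly\log n$ soundness.
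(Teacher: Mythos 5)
Your high-level route is the same as the paper's: reduce planarity to embedded planarity by having the prover supply a rotation system and then invoke Theorem \ref{theorem:embedding} as a black box, with completeness and soundness following exactly as you say. The gap is in the one step that actually carries the $O(\log\Delta)$ overhead: your mechanism for transmitting the rotation system does not fit the label budget. If, for each directed edge $(v,u)$, the value $p_v(u)$ is written into $u$'s label (so that $v$ reads it off its neighbor), then a node $u$ must carry one position value for \emph{each} of its neighbors, i.e.\ $\Omega(\deg(u)\log\Delta)$ bits; symmetrically, storing the values at $v$ itself costs $\Omega(\deg(v)\log\Delta)$. ``Splitting across the two endpoints'' or across the $5$ rounds only saves constant factors, and unspecified ``verifier challenges'' cannot create bandwidth: some node incident to a high-degree vertex still has to hold $\Omega(\Delta\log\Delta)$ bits under your allocation. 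The paper's proof (Lemma \ref{lemma:embedding-to-planarity}) closes exactly this hole using the arboricity-$3$ structure of planar graphs: the edge set is partitioned into three rooted forests, communicated with \emph{constant}-size labels via Lemma \ref{lemma:tree-advice}, and the pair $(\rho_u(e),\rho_v(e))$ for an edge $e=(u,v)$ is written only into the label of the child endpoint of $e$ in its forest (Lemma \ref{lemma:edge-labels}). Since every node is a child in at most three forest edges, each label carries $O(\log\Delta)$ bits of rotation data, yet both endpoints of every edge can see and agree on that edge's pair because neighbor labels are visible and the forest pointers identify which stored pair belongs to which edge. Without this (or an equivalent constant-out-degree orientation plus a constant-size way for both endpoints to match stored values to edges), your protocol does not achieve proof size $O(\log\log n+\log\Delta)$.

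A secondary point: your property (i) does not need the LR-sorting primitive at all. Once the $\rho_v(e)$ values of all edges incident to $v$ are visible in $v$'s view, checking that they form a permutation of $\{0,\dots,\deg(v)-1\}$ is a purely local, zero-communication test; moreover, the paper's soundness argument does not even hinge on it, since in a non-planar graph no assignment of clockwise orderings can induce a planar embedding, so the soundness of Theorem \ref{theorem:embedding} already forces rejection. Invoking the global sorting sub-protocol here adds complication without benefit.
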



We also consider the two closely related graph families of series-parallel graphs and graphs of treewidth at most $2$. We obtain the following two results.
\begin{theorem}\label{theorem:series-parallel}
There exists a distributed interactive proof for series-parallel graphs running in $5$ interaction rounds. The proof admits perfect completeness, a soundness error of $1/\poly\log n$, and a proof size of $O(\log \log n)$.
\end{theorem}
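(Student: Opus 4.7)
The plan is to build on the embedded planarity protocol of Theorem \ref{theorem:embedding} together with an SP-tree decomposition witness. A graph is series-parallel iff each of its 2-connected components admits a recursive decomposition as series and parallel compositions, witnessed by an SP-tree whose leaves are the edges of the component and whose internal nodes are labeled as either series-nodes or parallel-nodes.

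First, I would handle the reduction to 2-connected components. The prover supplies each node with a block identifier (per incident edge) together with a short certificate establishing the block-cut structure. Since a graph is series-parallel iff each of its blocks is, it suffices to verify each block separately. Identifying blocks and cut-vertices locally, as well as verifying the block-cut-tree structure, can be done with $O(\log \log n)$-bit labels using the path-embedding and sorting-verification technique developed earlier in the paper.

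Next, within each 2-connected block $B$, the prover designates terminals $(s_B, t_B)$ and a planar embedding of $B$ in which $s_B$ and $t_B$ lie on the outer face; the embedded planarity protocol of Theorem \ref{theorem:embedding} verifies this embedding. The prover then supplies an SP-tree $T_B$ for $B$: each edge $e$ of $B$ receives a label describing its corresponding leaf of $T_B$ together with $O(1)$ pointers encoding its position in $T_B$. By traversing $T_B$ in an Eulerian order, the edges of $B$ are arranged along an oriented path $Q \subseteq B$, and consecutive edges on $Q$ correspond to sibling leaves whose lowest common ancestor in $T_B$ is either a series-node or a parallel-node. Validity of the SP-tree reduces to checking a nested-bracket matching along $Q$ and verifying that, wherever a parallel-node appears, the two subtrees share the same terminal pair while series-nodes share only a single middle vertex. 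Each of these local consistency conditions reduces to a sorting/equality verification on $Q$, exactly the technical core highlighted in the abstract.

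The main obstacle is avoiding the naive $\Omega(\log n)$-bit labels required to name positions in $T_B$: the SP-tree may have depth $\Theta(n)$ and large arity. This is overcome via the interactive sorting-verification subroutine, in which the prover sends short randomized fingerprints (of $O(\log \log n)$ bits) whose equality is tested across matched brackets; a random challenge from the verifier achieves soundness $1/\poly \log n$ in $O(1)$ interaction rounds. Perfect completeness follows because an honest prover can always supply the correct block decomposition, planar embedding, and SP-tree, and soundness follows by a union bound over the failure probabilities of the embedded planarity sub-protocol and of the bracket/matching verification.
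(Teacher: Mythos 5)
There are two genuine gaps. First, your opening reduction --- ``a graph is series-parallel iff each of its blocks is'' --- is not the class this theorem is about. The paper deliberately separates series-parallel graphs (Theorem \ref{theorem:series-parallel}) from graphs of treewidth at most $2$ (Theorem \ref{theorem:treewidth}), and the statement ``every biconnected component is series-parallel'' characterizes the latter, not the former (Lemma \ref{lemma:tw2-sp-equivalence}). Concretely, $K_{1,3}$ (or any tree with a vertex of degree $3$) has only single edges as blocks, each trivially series-parallel, yet it is not a (two-terminal) series-parallel graph and admits no nested ear decomposition; your protocol would accept it. The paper's proof never decomposes into blocks for this theorem: it certifies a nested ear decomposition of the whole graph, using Eppstein's characterization (Lemma \ref{lemma:nested-ear}). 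Also, the step where the prover exhibits a planar embedding of each block with both terminals on the outer face, verified via Theorem \ref{theorem:embedding}, buys nothing: it is not sufficient ($K_4$ is planar and biconnected but not series-parallel), so all the certification weight falls on your SP-tree check anyway.

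Second, the SP-tree verification is where the real difficulty lies, and your sketch does not make it local. The leaves of the SP-tree are edges of $B$; matched brackets, sibling leaves, and lowest-common-ancestor conditions relate pairs of edges whose endpoints can be arbitrarily far apart in $G$, and the Euler-tour order $Q$ of the SP-tree is not a path of $G$, so ``testing fingerprint equality across matched brackets'' is not something two adjacent nodes can do --- there is no graph edge along which those fingerprints meet, and no reduction of this check to LR-sorting or multiset equality is given. This is precisely the obstacle the paper's proof is engineered around: in a nested ear decomposition, condition (3) says the ears attached to $P_i$ are properly nested within $P_i$, i.e., the check is literally path-outerplanarity on each ear, and a nested ear $P_j$ with endpoints $u,v\in P_i$ is simulated as a virtual non-path edge $(u,v)$ by replicating its would-be edge-label along the interior nodes of $P_j$, so both endpoints can read it from adjacent nodes. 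Your proposal needs an analogous mechanism (or a different decomposition whose consistency conditions are carried by actual adjacencies); as written, encoding ``position in $T_B$'' with $O(1)$ pointers plus fingerprints does not yield a locally checkable certificate with $O(\log\log n)$-bit labels.
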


\begin{theorem}\label{theorem:treewidth}
There exists a distributed interactive proof for graphs of treewidth at most $2$ running in $5$ interaction rounds. The proof admits perfect completeness, a soundness error of $1/\poly\log n$, and a proof size of $O(\log \log n)$.
\end{theorem}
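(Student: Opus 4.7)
The plan is to reduce verification of treewidth at most $2$ to Theorem~\ref{theorem:series-parallel}, via the classical characterization that $G$ has treewidth at most $2$ if and only if every biconnected component of $G$ is series-parallel (equivalently, $G$ is $K_4$-minor-free). In particular, any completeness/soundness result for certifying series-parallelity can be leveraged here, provided the reduction preserves the $O(\log\log n)$ proof size and the $5$-round budget.

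First, I would have the prover provide a succinct certificate of the block-cut structure of $G$. Each node $v$ receives a label that encodes, for every incident edge $(v,u)$, which biconnected component contains this edge, together with a flag indicating whether $v$ is a cut vertex. Local consistency across edges is verified by comparing the two endpoints' claims, and the global block-cut tree is certified via a standard tree-certification scheme, adapted to the sub-logarithmic regime using the same tools that enable the series-parallel protocol.

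Second, within each biconnected component $B$, the series-parallel protocol of Theorem~\ref{theorem:series-parallel} is invoked. Since a biconnected graph has treewidth at most $2$ iff it is series-parallel, each block-protocol accepts iff its block is series-parallel, and the overall verifier accepts iff all block-protocols and the block-cut check pass. Because all blocks are processed concurrently, the round count stays at $5$, and perfect completeness is inherited from the underlying protocol.

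The main obstacle is twofold: a cut vertex belonging to $k$ blocks would naively carry $k$ labels, violating the $O(\log\log n)$ budget, and a union bound over $\Omega(n)$ blocks would degrade the $1/\poly\log n$ soundness guarantee. The resolution I would pursue is to have the prover augment $G$ with $O(n)$ virtual edges so that the resulting graph is a single series-parallel graph containing each block as a substructure, with the augmentation chosen so that every node is incident to only $O(1)$ virtual edges. Theorem~\ref{theorem:series-parallel} is then invoked once on the augmented graph, with the virtual edges and their role in the series-parallel decomposition encoded inside each node's $O(\log\log n)$-bit label. This single-instance strategy avoids both the per-block label blow-up and the union-bound soundness loss, and it is the step I expect to require the most care in implementation.
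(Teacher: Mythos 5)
Your high-level route is the paper's: certify the block--cut structure and run the series-parallel verification inside each biconnected component, using the characterization that treewidth $\leq 2$ is equivalent to all biconnected components being series-parallel. You also correctly identify the real obstacle (a cut vertex lying in many components cannot afford one label per component). However, your proposed resolution has a genuine gap. First, the ``union bound over $\Omega(n)$ blocks'' worry that motivates it is a non-issue: completeness is perfect, and soundness only requires that the one violated component be rejected with probability $1-1/\poly\log n$ --- errors do not accumulate across blocks, which is exactly why the paper simply runs the per-component protocols in parallel. Second, and more seriously, invoking Theorem \ref{theorem:series-parallel} once on a graph augmented with \emph{virtual} edges is not justified in the DIP model. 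The theorem's soundness is a statement about the actual communication graph: a virtual edge joins two nodes that are not adjacent, so its endpoints cannot see each other's labels, cannot cross-check names, LR-sorting orientations, or nesting fields, and the verifier has no way to confirm that a claimed virtual edge really connects the nodes the prover says it does. A cheating prover on a no-instance (which contains a $K_4$ minor, so no supergraph is series-parallel) could simply declare virtual edges and label the graph as if it were some series-parallel graph $H$ that does not actually extend $G$; nothing in a black-box invocation catches this. Making simulated edges verifiable requires a real, connected structure of nodes carrying the edge's label between its endpoints --- which is precisely the sub-ear mechanism the series-parallel protocol already uses, and which an arbitrary augmentation edge does not come with. (You would additionally need to prove that every treewidth-$\leq 2$ graph admits a series-parallel supergraph with $O(1)$ added edges per vertex, which you assert but do not establish.)

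The paper's fix is different and avoids virtual edges entirely: for each component $C$ with separating (cut) node $u$, the prover picks a $C$-leader $v\in C\cap N(u)$ and computes a nested ear decomposition of $G(C)$ whose first ear is the edge $(u,v)$. Then $u$ is the leftmost node of every ear of $C$ it participates in, so its position never needs to be verified; the labels that would have been assigned to $u$ in the per-component protocol are deferred to its neighbors inside $C$, and its randomness is drawn by the $C$-leader (the spanning structure sent for $C$ is a tree rooted at the leader spanning $C\setminus\{u\}$). This keeps every node's label at $O(\log\log n)$ regardless of how many components a cut vertex belongs to, and the block--cut checks (non-cut nodes have no edges leaving their component; the union of the encoded structures is a spanning tree) rule out dishonest decompositions. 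You would need to replace your augmentation step with an argument of this kind for the proof to go through.
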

Finally, we provide the following lower bound.
\begin{theorem}\label{theorem:lb}
	For each of the following graph families, any one-round distributed interactive proof with completeness and soundness errors smaller than $1/10$ requires a proof size of $\Omega(\log n)$: (1) path-outerplanar graphs; (2) outerplanar graphs; (3) embedded planar graphs;  (4) planar graphs; (5) series-parallel graphs; and (6) graphs of treewidth at most $2$;
\end{theorem}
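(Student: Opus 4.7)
The plan is to lift the deterministic $\Omega(\log n)$ lower bound of \cite{FeuilloleyFMRRT21} to one-round DIPs with shared randomness, via a cut-and-paste (fooling-family) argument. The key observation is that in a one-round protocol the prover speaks first and never observes the verifier's random string, so its labels are a deterministic function of the graph; the shared randomness only aids the verifier's local tests. This reduces the task to a combinatorial pigeonhole against a deterministic prover, while the completeness/soundness gap is exploited only at the end through a union bound.

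\textbf{Step 1 (hard family).} For each of the six graph families $X$ in the statement, I would revisit the construction underlying the deterministic lower bound of \cite{FeuilloleyFMRRT21} and extract a family $\mathcal{F}_X=\{G_1^X,\ldots,G_N^X\}$ of $N=n^{\Omega(1)}$ yes-instances with the following hybridization structure: for every pair $i\neq j$ there is a vertex separator $C_{ij}$ of size $O(1)$ such that the graph $G_{ij}^X$ obtained by gluing one side of $G_i^X$ with the other side of $G_j^X$ along $C_{ij}$ lies outside $X$. This structure is implicit in the counting argument of \cite{FeuilloleyFMRRT21}. For path-outerplanarity it is the natural ``matching of brackets on a line'' family; for the remaining families one either recycles this construction (verifying that the crossings created by a hybrid take the graph out of each family) or invokes a dedicated construction per family.

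\textbf{Steps 2--3 (pigeonhole and cut-and-paste).} Assume for contradiction a one-round public-coin DIP with proof size $k=o(\log n)$ and completeness and soundness errors below $1/10$. Let $L_i$ denote the honest labels on $G_i^X$; they are deterministic in $G_i^X$ since the prover acts before any randomness is drawn. The restriction of $L_i$ to the $O(1)$ nodes of $C_{ij}$ lies in a universe of size $2^{O(k)}=n^{o(1)}$, so pigeonhole over $N=n^{\Omega(1)}$ instances produces $i\neq j$ whose labels agree on $C_{ij}$. Define $L_{ij}$ on $G_{ij}^X$ by using $L_i$ on side $V_1$ and $L_j$ on side $V_2$; the agreement on $C_{ij}$ ensures that each node's local view in $(G_{ij}^X,L_{ij})$ is identical to its view in either $(G_i^X,L_i)$ or $(G_j^X,L_j)$. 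Since each node's verdict at randomness $\rho$ is determined by its local view and $\rho$, the hybrid is accepted at $\rho$ exactly when all side-$V_1$ nodes of $G_i^X$ and all side-$V_2$ nodes of $G_j^X$ accept at $\rho$. Completeness forces each of these events to have probability at least $9/10$ over $\rho$, so by a union bound the verifier accepts $G_{ij}^X$ with probability at least $8/10$---contradicting the $1/10$ soundness error since $G_{ij}^X\in\mathcal{F}_N^X$.

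\textbf{Main obstacle.} The principal work is in Step~1: establishing the hybridization structure (with $N=n^{\Omega(1)}$ yes-instances and $|C_{ij}|=O(1)$) simultaneously for all six graph families. Path-outerplanarity is the easy case; the harder cases are series-parallel graphs and graphs of treewidth at most~$2$, where one must exhibit a family of ``templates'' on which a local rewiring across a constant-size separator creates a $K_4$ or $K_{2,3}$ minor. A secondary subtlety is that, strictly speaking, the argument only needs $|C_{ij}|$ to be polylogarithmic (since the target lower bound is only $\Omega(\log n)$), which gives additional flexibility in designing the per-family constructions.
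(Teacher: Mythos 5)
Your high-level framing is right in two respects: in a one-round protocol the prover's labels are indeed fixed before any randomness is drawn, and the completeness/soundness gap is exploited exactly as you say, by a union bound over the completeness events of the yes-instances whose views appear in the hybrid (this is how the paper handles shared randomness as well). The gap is in your Steps~1 and~3. The \emph{pairwise} hybridization structure you posit --- for every $i\neq j$ a constant-size separator $C_{ij}$ such that gluing one side of $G_i$ to the other side of $G_j$ leaves the family --- is \emph{not} implicit in the counting argument of Feuilloley et al., and it is doubtful it can exist for these families at all. In that construction (and in the paper's version of it), each yes-instance $G_{a,b}$ is a pair of paths $P_a,P_b$ joined by three edges, and the natural two-instance crossover is just $G_{a^i,b^j}$, which is again a biconnected outerplanar yes-instance; to leave the family one must simultaneously rewire the three connection points of \emph{six} paths drawn from a $3\times 3$ grid of yes-instances so as to create a $K_{3,3}$ minor. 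Relatedly, your claim that label agreement on $C_{ij}$ makes every node's view in the hybrid identical to its view in $G_i$ or $G_j$ fails at the separator itself: a separator node sees neighbors labeled by $L_i$ on one side and by $L_j$ on the other, and with only two instances there is no yes-instance in which that mixed view occurs. The $3\times 3$ pigeonhole (agreement of the fingerprints $L(a^i,b^j)$ for all nine pairs) is precisely what supplies, for each crossing node of the hybrid, a \emph{third} yes-instance $G_{a^i,b^{j}}$ whose honest labeling reproduces its mixed view; a pairwise fooling family cannot play this role.

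A secondary point: your plan to design dedicated hard families per class (flagging series-parallel and treewidth $\leq 2$ as the hard cases, and leaving embedded planarity --- whose instances carry a rotation-system input --- untreated) is both more work than needed and incomplete. The paper proves a single lemma: any family of planar graphs containing all biconnected outerplanar graphs requires $\Omega(\log n)$ proofs, because the yes-instances above are biconnected outerplanar (hence in every one of the six families, including series-parallel and treewidth $\leq 2$) while the hybrid contains a $K_{3,3}$ minor and is therefore outside all of them. Embedded planarity is then handled not by a new fooling family but by the reduction of Lemma \ref{lemma:embedding-to-planarity}: an embedded-planarity protocol yields a planarity protocol with only $O(\log\Delta)$ overhead, and the hard instances have $\Delta\leq 3$, so the same $\Omega(\log n)$ bound transfers. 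To repair your proposal you would essentially have to replace the pairwise crossover of Steps~1--3 by this multi-instance $K_{3,3}$ rewiring (with the fingerprint taken on the constant-size neighborhoods of the crossing nodes), at which point you recover the paper's argument.
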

We note that Theorem \ref{theorem:lb} strengthens the lower bound presented in \cite{FeuilloleyFMRRT21} in the following ways. First, the lower bound of \cite{FeuilloleyFMRRT21} only applies to one-round proofs with \emph{deterministic} verifier. Theorem \ref{theorem:lb} states that the same bound holds even if the verifier is \emph{randomized}. Combined with the upper bounds stated above, our results present a strong evidence of the power added from \emph{interaction} in the context of distributed proofs for planarity and related tasks. We remark that our lower bound holds even if the nodes have access to (unbounded) \emph{shared} randomness. We also note that the lower bound of \cite{FeuilloleyFMRRT21} does not explicitly apply to some of the graph families that appear in Theorem \ref{theorem:lb} (namely, path-outerplanar graphs, embedded planar graphs, and series-parallel graphs).

\paragraph{Open problems.} 
Our results leave some intriguing unresolved questions that can be explored in follow-up works. Here, we highlight three of them. 

As the main open problem, we ask whether the additive $O(\log \Delta)$ term is necessary in the proof size for planarity. That is, we pose the following question.
\begin{oq}\label{op:1}
	Is it possible to obtain a constant round protocol for planarity with a proof size of $O(\log \log n)$ even on graphs with maximum degree $\Delta=\omega(\poly\log n)$?
\end{oq}
One may also ask whether $5$ interaction rounds are necessary in order to obtain a proof size of $O(\log \log n)$ for the tasks discussed in this paper. Of course, we know by Theorem \ref{theorem:lb} that $1$ round is insufficient. However, for any $1<r<5$, whether an $r$-round protocol exists remains open even if we simply look for a proof size of $o(\log n)$. This leads to the following open problem.
\begin{oq}\label{op:2}
	Is it possible to obtain an $r$-round protocol for e.g., outerplanarity, with a proof size of $o(\log n)$ for some $1<r<5$?
\end{oq}
Finally, we ask whether it is possible to improve our protocol's communication bound.
\begin{oq}\label{op:3}
	Is it possible to obtain a protocol for e.g., outerplanarity, where the prover communicates $o(\log \log n)$ bits with each node?
\end{oq}

\subsection{Additional Related Work} 
\paragraph{Beyond planarity.} Following the introduction of efficient distributed proof systems for planarity \cite{NaorPY20,FeuilloleyFMRRT21}, researchers have become interested in distributed proof systems for other graph families. The aforementioned compiler of \cite{NaorPY20} implies a three-round distributed interactive protocol with $O(\log n)$ proof size for families of sparse graphs (i.e., $m=O(n)$ edges) that admit a linear-time recognition algorithm. These include, e.g.,  \emph{bounded genus} graphs and \emph{outerplanar} graphs. Distributed proofs for bounded genus graphs were studied further in \cite{EsperetL22,FeuilloleyF0RRT23} where proof labeling schemes with a proof size of $O(\log n)$ are presented. For outerplanar graphs, a proof labeling scheme with a proof size of $O(\log n)$ is presented in \cite{BousquetFP24}. Additionally, the authors show similar results for a myriad of minor-free graphs.

\paragraph{Distributed interactive proofs variants.} In \cite{CrescenziFP19}, trade-offs between different parameters of the DIP model are explored. The parameters considered include the form of randomness, the complexity measures, and the number of interaction rounds. Recently, the notion of \emph{distributed quantum interactive proofs} was introduces by the authors of \cite{GallMN23} as a quantum variant of distributed interactive proofs. The main result of \cite{GallMN23} is a generic transformation from a $k$-round ``standard" proof into a $5$-round quantum proof for any constant $k>5$. Distributed quantum proofs have also been considered in a non-interactive setting in \cite{FraigniaudGNP21,HasegawaKN24}. Another exciting variant that was introduced recently in \cite{BickKO22} is that of a \emph{distributed zero-knowledge proof}. In particular, the authors adapt the classical notion of knowledge from the centralized setting (as defined in \cite{GoldwasserMR89}) to a distributed setting.
\section{Preliminaries and Definitions}\label{section:preliminaries}
\paragraph{Conventions.}
Throughout, if not specified otherwise, a graph $G=(V,E)$ is assumed to be undirected and connected. For each node $v\in V$, we stick to the convention that $N_{G}(v)$ denotes the set of $v$'s \emph{neighbors} in the graph, $E(v)$ denotes the set of edges incident on $v$, and $\deg_{G}(v)=|N_{G}(v)|=|E(v)|$ denotes $v$'s degree in $G$. Whenever $G$ is clear from the context, we may omit it from the notation and write $N(v)$ and $\deg(v)$ instead of $N_{G}(v)$ and $\deg_{G}(v)$. For a node-subset $V'\subseteq V$, we denote by $G(V')$ the subgraph induced on $G$ by $V'$. 

In the case that $G$ is directed, we assume that the edge orientation is given to the nodes such that each node $v\in V$ can distinguish between its incoming and outgoing incident edges. For a directed edge $e$ with endpoints $u$ and $v$, we write $e=(u,v)$ to reflect that $e$ is directed from $u$ to $v$, and $e=(v,u)$ otherwise. In the context of a distributed interactive proof, we assume that the label assigned by the prover to node $v\in V$ can be viewed by both its incoming and outgoing neighbors.

\paragraph{Hamiltonian paths.}
Consider a graph $G=(V,E)$ with a Hamiltonian path $P$. For a pair $u,v\in V$ of nodes, define the relation $\prec_{P}$ so that $u\prec_{P} v$ if $u$ precedes $v$ in $P$. Naturally, this extends to $u\preceq_{P} v$ if $u\prec_{P} v$ or $u=v$. Going forward, when $P$ is clear from context, we may omit it from our notation and write $u\prec v$ and $u\preceq v$ instead of $u\prec_{P} v$ and $u\preceq_{P} v$, respectively. Whenever we encounter a Hamiltonian path, it will be convenient to think of it drawn as a straight line from left to right. Keeping up with this convention, for each node $v\in V$, we can partition its non-path edges in $G$ into $v$-\emph{left} edges which are incident on neighbors $u\prec v$, and $v$-\emph{right} edges which are incident on neighbors $v\prec u$. We say that a non-path edge $(u,v)$ is the \emph{longest} $v$-left (resp., $v$-right) edge if $u\prec v$ (resp., $v\prec u$) and $u\prec u'$ (resp., $u'\prec u$) for every neighbor $u'\in N(v)$.

\paragraph{Left-right sorting.} 
We define a verification task called \emph{left-right sorting (LR-sorting)} which is used as a sub-task in our protocols. In LR-sorting, a directed graph $G=(V,E)$ is given. The graph $G$ admits a directed Hamiltonian path $P$ which is given such that each node $v\in V$ knows its incident edges in $P$. The path $P$ is assumed to be directed from left to right. The goal of the task is to decide if $u\prec v$ for every directed edge $(u,v)\in E-P$. That is, a yes-instance is defined so that $u\prec v$ for every edge $(u,v)\in E$; whereas a no-instance admits at least one edge $(u,v)\in E$ such that $v\prec u$. Observe that equivalently, yes-instances are ones in which $G$ is a DAG (in which case the $P$-ordering is the unique topological sort of $G$); and no-instances are ones in which $G$ admits some cycle.

\paragraph{Path-outerplanar graphs.}
A graph $G=(V,E)$ is said to be \emph{path-outerplanar} if it admits a Hamiltonian path $P$ such that all non-path edges can be drawn above $P$ without crossings. If the edges can be drawn in such a manner, we say that they are \emph{properly nested} within $P$ (or simply properly nested when $P$ is clear from the context). Equivalently, a graph is path-outerplanar if no two edges $(u,v),(u',v')\in E$ satisfy $u\prec_{P} u'\prec_{P} v\prec_{P} v'$ with respect to some Hamiltonian path $P$ (cf.\ \cite{FeuilloleyFMRRT21}). Refer to Figure \ref{figure:path-outerplanar} for a pictorial example of a path-outerplanar graph and some of the related definitions. 

\begin{figure}
	\centering
	\includegraphics[width=\textwidth]{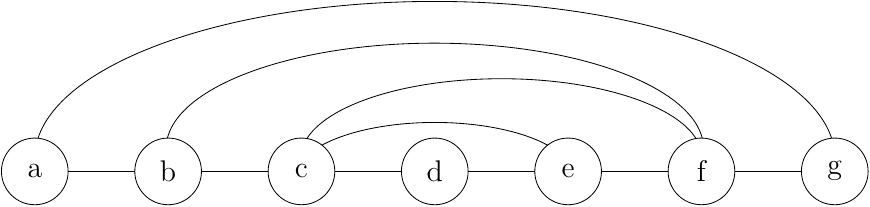}
	\caption{A path-outerplanar graph. The longest $c$-right edge is $(c,f)$; the longest $f$-left edge is $(b,f)$; the successor of $(c,e)$ is $(c,f)$.}
	\label{figure:path-outerplanar}
\end{figure}

The following simple observation will be useful in our protocol for path-outerplanar graphs in Section \ref{section:path-outerplanar}.
\begin{observation}\label{observation:path-outerplanar-longest}
	Suppose that $G$ is a path-outerplanar graph and let $(u,v)\in E$ be a non-path edge such that $u\prec v$. The edge $(u,v)$ is either the longest $u$-right edge or the longest $v$-left edge.
\end{observation}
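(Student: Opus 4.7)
The plan is to argue by contradiction, using the equivalent forbidden-pattern characterization of path-outerplanarity stated in the preliminaries: no two edges $(a,b),(a',b')\in E$ satisfy $a\prec a'\prec b\prec b'$ along the Hamiltonian path $P$. So I would suppose that $(u,v)$ is a non-path edge with $u\prec v$ and that it is \emph{neither} the longest $u$-right edge \emph{nor} the longest $v$-left edge, and derive a forbidden pattern.

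Unpacking the two ``longest'' definitions (noting that any neighbor on the opposite side of $u$ or $v$ automatically satisfies the relevant $\prec$-condition, so the effective content is about same-side neighbors): failing to be the longest $u$-right edge means $u$ has some neighbor $w$ strictly to the right of $v$, i.e.\ $v\prec w$; and failing to be the longest $v$-left edge means $v$ has some neighbor $w'$ strictly to the left of $u$, i.e.\ $w'\prec u$. Both $(u,w)$ and $(w',v)$ are edges of $G$, and together with the assumption $u\prec v$ they yield the chain of positions $w'\prec u\prec v\prec w$, which is precisely the forbidden pattern for the pair of edges $(w',v)$ and $(u,w)$.

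The only small subtlety to verify — and I expect this to be the main (minor) obstacle — is that the forbidden pattern genuinely applies here: the equivalent characterization quoted from \cite{FeuilloleyFMRRT21} rules out the pattern for \emph{any} two edges in $E$, but it is comforting to note that $(u,w)$ and $(w',v)$ must in fact be non-path edges, since path edges connect positions that are consecutive in $P$, whereas $v$ lies strictly between $u$ and $w$, and $u$ lies strictly between $w'$ and $v$. Hence the two edges are drawn above $P$ and their endpoints interleave as $w'\prec u\prec v\prec w$, contradicting path-outerplanarity. This completes the proof, and no additional machinery beyond the definitions is needed.
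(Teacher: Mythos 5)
Your proof is correct and follows essentially the same route as the paper's: assume $(u,v)$ is neither longest, extract a neighbor of $u$ strictly right of $v$ and a neighbor of $v$ strictly left of $u$, and contradict path-outerplanarity via the interleaving pattern $w'\prec u\prec v\prec w$ (the paper just names these witnesses as the longest $u$-right and $v$-left edges). The extra remark that the witnessing edges are non-path edges is harmless but unnecessary, since the quoted characterization forbids the pattern for any two edges of $E$.
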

\begin{proof}
	Assume that $(u,v)$ is neither the longest $u$-right edge nor the longest $v$-left edge. Let $(u,v')$ and $(u',v)$ be the longest $u$-right and $v$-left edges, respectively. These edges satisfy $u'\prec u\prec v\prec v'$ which contradicts path-outerplanarity.
\end{proof}

Given a path-outerplanar graph $G$, we make the following definitions. For a non-path edge $(u,v)$, $u\prec v$, define its \emph{successor} as the edge $(u',v')$ that satisfies: (1) $u'\preceq u\prec v\preceq v'$; and (2) $u''\preceq u'\prec v'\preceq v''$ for every edge $(u'',v'')$ that satisfies $u''\preceq u\prec v\preceq v''$. Intuitively, the successor of an edge is the edge drawn directly above it. For cohesiveness, for any edge $e$  that does not have a successor in the graph, define the successor to be a virtual edge $e^{*}=(u^{*},v^{*}),\ u^{*},v^{*}\notin V$, defined so that $u^{*}\prec v\prec v^{*}$ for any $v\in V$. Notice that each edge has a unique successor. Naturally, we say that $e$ is a \emph{predecessor} of $e'$ if  $e'$ is the successor of $e$. We say that two edges $e$ and $e'$ are \emph{siblings} if they have a common successor.

The following observation is now straightforward from the definitions.
\begin{observation}\label{observation:path-outerplanar-pred}
	Suppose that $G$ is a path-outerplanar graph and let $e=(u,v)$ be a (possibly virtual) non-path edge such that $u\prec v$. There exists an ordering $(u_{1},v_{1}),(u_{2},v_{2}),\dots ,(u_{k},v_{k})$ of $e$'s predecessors such that $u\preceq u_{1}\prec v_{1}\preceq u_{2}\prec v_{2}\dots \preceq u_{k}\prec v_{k}\preceq v$.
\end{observation}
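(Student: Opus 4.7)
The strategy is to combine two structural constraints on the non-path edges of a path-outerplanar graph. First, path-outerplanarity forbids any two non-path edges from crossing. Second, the definition of successor forbids two predecessors of $e$ from being strictly nested one inside the other. Together these two facts will force the predecessors of $e$ to form a family of pairwise disjoint $P$-intervals lying inside the interval of $e$, and a simple sort by left endpoint will then produce the chain stated in the observation.

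First I would fix predecessors $e_1=(u_1,v_1),\ldots,e_k=(u_k,v_k)$ of $e=(u,v)$, each oriented so that $u_i\prec v_i$. By the very definition of ``$e$ is the successor of $e_i$'', we immediately obtain $u\preceq u_i\prec v_i\preceq v$ for every $i$, which already supplies the outer inequalities $u\preceq u_1$ and $v_k\preceq v$ once the ordering is chosen. This step is uniform in whether $e$ is a real edge or the virtual edge $e^*$, since by construction $u^*\prec w\prec v^*$ for every $w\in V$.

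Next, for any two distinct predecessors $e_i$ and $e_j$ I would rule out both the crossing and the nesting configurations. The crossing configuration $u_i\prec u_j\prec v_i\prec v_j$ (and its mirror) is impossible because $G$ is path-outerplanar. For the nesting configuration, assume without loss of generality that $u_i\preceq u_j\prec v_j\preceq v_i$ and $e_i\ne e_j$. Then $e_i$ itself satisfies the ``edge around $e_j$'' condition from the successor definition, so the minimality clause applied to $e_j$, whose successor is $e$, yields $u_i\preceq u\prec v\preceq v_i$. Combining this with $u\preceq u_i\prec v_i\preceq v$, which holds because $e$ is the successor of $e_i$, forces $u=u_i$ and $v=v_i$, i.e.\ $e=e_i$, contradicting $e_i$ being a predecessor of $e$. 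Hence any two predecessors are interval-disjoint: one of $v_i\preceq u_j$ or $v_j\preceq u_i$ must hold.

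Sorting the predecessors by their left endpoint then yields the desired chain $u\preceq u_1\prec v_1\preceq u_2\prec v_2\preceq\cdots\preceq u_k\prec v_k\preceq v$. The main delicate point I anticipate is the bookkeeping in the nesting argument around shared endpoints, where strict versus weak inequalities must be tracked carefully when invoking the minimality clause of the successor definition; once that is handled the conclusion is immediate, and the whole argument takes place purely at the combinatorial level of the order $\prec_{P}$ without needing any deeper property of the embedding.
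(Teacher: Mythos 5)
Your proof is correct. The paper gives no explicit proof of this observation---it is stated as ``straightforward from the definitions''---and your argument (predecessors of $e$ cannot cross by path-outerplanarity and cannot be nested by the minimality clause in the definition of the successor, hence their intervals are internally disjoint and sorting by left endpoint yields the stated chain) is a sound elaboration of exactly that intended reasoning, including the virtual-edge case, where the nesting configuration cannot arise at all.
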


\paragraph{Encoding a spanning forest in a planar graph.} As a building block in our protocol, we would like for the prover to be able to communicate a spanning forest $F$ of the graph $G$ to the verifier. While it is trivial to achieve in general using $O(\log n)$-bit labels, in our case we would like much smaller labels. It turns out that this task can be achieved in planar graphs deterministically and with constant-sized labels. This is done by slightly extending a construction of \cite{BousquetFZ24} which is designed for the task of deciding whether a planar graph admits a perfect matching.\footnote{The scheme extends to some classes of non-planar graphs; see \cite{BousquetFZ24} for full details.} We state the construction's properties in the following lemma.
\begin{lemma}\label{lemma:tree-advice}
	Let $G$ be a planar graph and let $F$ be a rooted spanning forest of $G$ (i.e., $F$ is a collection of rooted trees). For some constant $c>0$, there exists a label assignment $L:V\rightarrow\{0,1\}^{c}$ such that each node $v\in V$ can learn its parent and children in $F$ only as a function of $L(v)$, and the labels $L(u)$ assigned to $v$'s neighbors $u\in N(v)$.
\end{lemma}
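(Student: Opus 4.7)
My plan is to extend the labelling scheme of Bousquet, Feuilloley, and Zeitoun~\cite{BousquetFZ24} for perfect matchings in planar graphs to the setting of a rooted spanning forest. Their matching scheme proceeds by contracting every edge of the matching $M$, observing that the resulting multi-graph is still planar, applying the $4$-color theorem to properly color its vertices with $4$ colors, and labelling each original vertex $v$ by the color of its contracted super-vertex ($2$ bits) together with one bit distinguishing the two endpoints of the matching edge it belongs to. The match of $v$ is then the unique $G$-neighbor sharing $v$'s color, since adjacent super-vertices of the contracted graph receive distinct colors.

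For the forest extension I would first apply the same idea at the level of trees: contract each tree of $F$ to a single super-vertex, $4$-color the resulting planar multi-graph $G/F$, and assign every vertex $v$ the $2$-bit ``tree color'' $c(v)$ of its tree. Since $4$-coloring is proper on $G/F$, any $G$-edge joining two distinct trees connects vertices of different tree colors, so $v$ can immediately classify every neighbor $u$ with $c(u)\neq c(v)$ as a non-$F$ neighbor. It then remains to distinguish, among the neighbors of $v$ sharing the tree color $c(v)$, the parent $p(v)$ in $F$, the children of $v$ in $F$, and the intra-tree chord neighbors (non-$F$ edges whose two endpoints happen to lie in the same tree). For this I would augment the label with $O(1)$ further bits encoding (i) a root/non-root flag, (ii) a depth-modulo-$3$ tag, which separates parent-depth from child-depth within the tree, and (iii) an additional constant-size chord-disambiguation tag obtained by applying the BFZ argument a second time on the planar subgraph $G[V(T)]$ induced on a single tree $T$ of $F$, with the parent-pointer partial map $v\mapsto p(v)$ playing the role of the matching.

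The main obstacle is precisely this intra-tree step: the parent-pointer map is not a matching, because an internal vertex is the parent of possibly many children, and therefore the BFZ scheme cannot be invoked verbatim. This is, I believe, the point that requires the ``slight extension'' of the BFZ construction alluded to in the text. My plan to handle it is to treat each non-root vertex as the ``child-end'' of its unique parent edge, which does form a partial matching on the child-ends, and then to use the planarity of $G[V(T)]$ to contract these parent edges and apply a proper $4$-coloring of the resulting planar multi-graph. Together with the coarse tree color and the root/non-root and depth tags, this yields an $O(1)$-bit label per vertex, and the verification that $v$ correctly identifies its parent and children in $F$ from only $L(v)$ and the labels of its neighbors reduces to a routine case analysis on the constant number of label components.
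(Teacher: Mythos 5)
Your intra-tree step is where the argument breaks. The parent edges of a single tree $T$ are exactly the edges of $T$, so when you ``contract these parent edges'' in $G[V(T)]$ you collapse all of $T$ into one super-vertex; the $4$-coloring you then compute gives every vertex of $T$ the same color and therefore carries no information beyond the tree color you already assigned in your first step (indeed, contracting each tree of $F$ and contracting all parent edges are the same operation). What remains is the root flag and the depth-mod-$3$ tag, and these are not enough: a non-tree edge inside a tree may join $v$ to a non-parent vertex whose depth has the same residue as $\textsf{depth}(v)-1$. Concretely, take a root $r$ with two children $a,a'$, a child $b$ of $a$, and the chord $(b,a')$; this graph is planar with maximum degree $3$, yet under your labels $a$ and $a'$ are indistinguishable from $b$'s viewpoint, so $b$ cannot determine its parent. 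Your instinct that the parent map is not a matching and that this is the crux is right, but contracting \emph{all} parent edges overshoots and destroys exactly the discriminating information you need.

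The paper's construction repairs this by splitting the parent edges into two classes according to the parity of the child's depth. Each class is a vertex-disjoint union of stars (a node together with its children of one parity), so contracting one class keeps distinct stars as distinct super-vertices, preserves planarity, and hence admits a proper $4$-coloring. The label of $v$ consists of its color $c_{1}(v)$ in the graph obtained by contracting the odd-child class, its color $c_{2}(v)$ in the graph obtained by contracting the even-child class, and the single bit $\textsf{depth}(v)\bmod 2$ --- a constant number of bits in total. An odd-depth node then recognizes its parent as the unique even-parity neighbor agreeing with it on $c_{1}$, and its children as the even-parity neighbors agreeing with it on $c_{2}$ (and symmetrically for even-depth nodes); properness of the colorings rules out false positives among opposite-parity neighbors, including intra-tree chords and cross-tree edges, so the forest case needs no separate tree-color step and your first contraction of $G/F$ becomes unnecessary. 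Replacing your ``contract all parent edges plus depth mod $3$'' step by this parity split with two colorings and a parity bit is precisely the missing idea.
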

For completeness of presentation, we provide a proof for the lemma. We emphasize that this construction only allows the prover to communicate the forest $F$ to the verifier and does not provide proof that $F$ is indeed a spanning forest.\footnote{Another way to formulate this construction is in terms of \emph{advice}, based on the model of \cite{FraigniaudIP10,FraigniaudKL10}. Specifically, using the terminology of \cite{FraigniaudKL10}, the statement means that computing any spanning forest of a planar graph admits an $(O(1),0)$-advising scheme.} 
\begin{proof}
	For ease of presentation, let us assume that the prover tries to communicate a spanning tree $T$ (i.e., a connected forest). The case of an unconnected forest admits a similar construction. 
	Suppose that $G=(V,E)$ is a planar graph and $T$ is a spanning tree rooted at some node $r\in V$. For each node $v\in V-\{r\}$, let $\parent(v)$ denote its parent and let $\depth(v)$ denote its depth. Define the graph $G_{odd}$ (resp., $G_{even}$) to be the graph obtained by starting from $G$ and contracting all edges $(v,\parent(v))$ that go from an odd (resp., even) depth node $v$ to its parent in $T$. Observe that $G_{odd}$ and $G_{even}$ are both planar and thus, $4$-colorable. Towards providing the label assignment, the prover computes $4$-colorings of $G_{odd}$ and $G_{even}$, respectively. For each node $v\in V$, let $c_{1}(v)$ the color of the node into which $v$ contracted in $G_{odd}$ and let $c_{2}(v)$ be the color of the node into which $v$ contracted in $G_{even}$. The prover assigns each node $v\in V$ with the label $L(v)=(c_{1}(v),c_{2}(v),\mathtt{parity}(v))$ where $\mathtt{parity}(v)=\depth (v)\bmod 2$. 
	
	We argue that the label assignment allows each node $v\in V$ to deduce which of its neighbors are its parent and children in $T$. The idea is as follows. If a node $v\in V$ of odd depth receives a color $c_{1}(v)$, then due to the validity of the coloring on $G_{odd}$, it holds that $\parent(v)$ is the only neighbor of $v$ with even depth for which $c_{1}(\parent(v))=c_{1}(v)$. The case of even depth nodes is similar. 
	
	To make things more concrete, consider some node $v\in V$ with $\mathtt{parity}(v)=1$ (resp., $\mathtt{parity}(v)=0$). Node $v$ identifies its parent as its only neighbor $u\in N(v)$ with $\mathtt{parity}(u)=0$ and $c_{1}(v)=c_{1}(u)$ (resp., $\mathtt{parity}(v)=1$ and $c_{2}(v)=c_{2}(u)$). Additionally, $v$ identifies its children as the neighbors $u\in N(v)$ that satisfy $\mathtt{parity}(u)=0$ and $c_{2}(v)=c_{2}(u)$ (resp., $\mathtt{parity}(v)=1$ and $c_{1}(v)=c_{1}(u)$). 
\end{proof}

\paragraph{Enabling edge-labels in planar graphs.}
In the technical sections, it will be convenient to describe protocols assuming that the prover can also assign edge-labels (such that both of the edge endpoints can see the label) rather than only node-labels. This assumption is facilitated by the following lemma.\footnote{Transformations that enable edge-labels in planar graphs have been presented in previous papers (see, e.g., \cite{FeuilloleyFMRRT21}). However, these constructions require the prover to assign an ordering to the nodes which incurs an additive $\Theta (\log n)$ overhead to the label size. Thus, we cannot use these transformations for our purposes.}
\begin{lemma}\label{lemma:edge-labels}
	Let $\Pi$ be a class of planar graphs. Suppose that there exists a distributed interactive proof deciding whether $G\in\Pi$ in which the prover assigns labels of size $\ell$ to the nodes and edges. Then, there exists a distributed interactive proof in which the prover assigns labels of size $O(\ell)$ only to the nodes. Furthermore, the two proofs admit the same number of interaction rounds.
\end{lemma}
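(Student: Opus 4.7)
The plan is to exploit the fact that every planar graph has arboricity at most $3$: Nash--Williams' formula combined with the bound $|E(H)| \le 3|V(H)|-6$ for every subgraph $H$ of a planar graph implies that $E$ decomposes into at most three edge-disjoint forests $F_1, F_2, F_3$. These three forests will serve as independent channels for relocating each edge label onto one endpoint of its edge, allowing all edge information to be folded into enlarged node labels without increasing the round count.

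Concretely, the honest prover roots each tree of each $F_i$ arbitrarily, views $F_i$ as a spanning forest of $G$ (adding isolated vertices where needed), and applies Lemma~\ref{lemma:tree-advice} three times, producing a constant-size descriptor $\lambda_i(v)$ at every node. For each non-root $v$ in $F_i$, the prover additionally stores the original edge label $\mu_i(v) := \mu(\{v, \mathrm{parent}_i(v)\})$ at $v$. The new node label becomes $L(v) = (L_{\mathrm{old}}(v), \lambda_1(v), \lambda_2(v), \lambda_3(v), \mu_1(v), \mu_2(v), \mu_3(v))$, of total size $\ell + O(1) + 3\ell = O(\ell)$, and it is sent in the same protocol round that the original prover used for its node and edge labels, so the round count is preserved. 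To simulate the original verifier at node $v$, the verifier first uses the decoding rule of Lemma~\ref{lemma:tree-advice} to identify, for each $F_i$ and each neighbor $u$, whether $u$ is $v$'s parent or child in $F_i$; for every incident edge $\{u,v\}$ it then picks the smallest index $i$ for which the edge is claimed and reads the reconstructed edge label from the $\mu_i$-slot of whichever endpoint is the child. With these reconstructed labels together with the recovered $L_{\mathrm{old}}$ values, $v$ runs the original verifier's local decision function.

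The main obstacle will be enforcing global consistency under a malicious prover, since Lemma~\ref{lemma:tree-advice}'s decoding guarantee holds only for labels arising from a genuine forest. I plan to address this with two local checks at each node $v$: (i) in every $F_i$, at most one neighbor satisfies the parent-matching condition and the parities agree; and (ii) every incident edge is claimed by at least one of the three forests. The key point making (i) sufficient is that the decoding rule in the proof of Lemma~\ref{lemma:tree-advice} is symmetric---both endpoints test the same color equality to declare a parent/child relation---so once uniqueness is enforced at every node, the relation ``$u$ is $v$'s parent in $F_i$'' holds from $v$'s side if and only if ``$v$ is one of $u$'s children in $F_i$'' holds from $u$'s side. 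Consequently, both endpoints of any claimed edge agree on the chosen forest index, on which endpoint is the child, and therefore read the same reconstructed label from the same slot; check (ii) then rules out unreconstructed edges. Completeness is then immediate, since on a yes-instance the honest prover supplies a genuine $3$-forest decomposition, all checks pass, and the reconstructed labels coincide with the true edge labels. For soundness on a no-instance, if every node accepts then the reconstructed labels together with the recovered node labels form a globally consistent input to the original edge-label protocol, which by that protocol's soundness must be rejected by at least one node---a contradiction.
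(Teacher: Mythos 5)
Your proposal is correct and follows essentially the same route as the paper: decompose the edge set into three forests via the arboricity-$3$ bound for planar graphs, encode each forest with the constant-size labels of Lemma~\ref{lemma:tree-advice}, and store each edge's label in a designated field of its child endpoint so that both endpoints can read it, preserving label size $O(\ell)$ and the round count. Your additional consistency checks and the symmetry observation about the decoding rule merely make explicit the soundness argument against a dishonest prover that the paper's short proof leaves implicit.
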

\begin{proof}
	It is well-known that planar graphs have arboricity at most $3$. This means that the edge-set of any planar graph $G=(V,E)$ can be partitioned into three edge-disjoint forests $F_{1},F_{2},F_{3}$. By Lemma \ref{lemma:tree-advice}, the prover can inform each node $v\in V$ of its parent and children in each $F_{i}$ using only constant-sized labels. Then, instead of assigning a label $L(u_{i},v)$ to the edge $(u_{i},v)$ between $v$ and its parent $u_{i}$ in $F_{i}$, the prover simply writes $L(u_{i},v)$ to a field in $v$'s label which is designated for its parent in $F_{i}$. This allows both endpoints to learn the label $L(u_{i},v)$, thus enabling the simulation of edge-labels.
\end{proof}

\paragraph{Spanning tree verification.}
Consider a graph $G=(V,E)$ and let $T$ be a subgraph of $G$ such that each node $v\in V$ knows its incident edges in $T$. We define \emph{spanning tree verification} as the task of deciding whether $T$ is a spanning tree of $G$. The following lemma is established in \cite{NaorPY20}.
\begin{lemma}[\protect{\cite[Section 7.1]{NaorPY20}}]\label{lemma:npy-spanning-tree}
	There is a distributed interactive proof for spanning tree verification with $3$ interaction rounds and constant proof size. The proof admits perfect completeness and a constant soundness error.
\end{lemma}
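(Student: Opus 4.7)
The plan is to design a three-round protocol---prover, verifier, prover---with only a handful of bits per message, combining three gadgets: a constant-bit rooted orientation of $T$ that lets every node identify its $T$-parent, an XOR-telescoping gadget driven by verifier randomness to catch any cycle in the induced parent structure, and a ``root broadcast'' gadget that exploits the non-$T$ edges of $G$ to detect more than one connected component.

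In Round~$1$ (prover $\to$ verifier), the prover sends every node $v$ a pair $(b_v,d_v)\in\{0,1\}\times\{0,1,2\}$, where $b_v$ is a root-indicator bit and $d_v$ is the claimed depth of $v$ modulo~$3$. Modulus~$3$ is the minimum needed for a node to tell a parent ($d_u\equiv d_v-1$) from a child ($d_u\equiv d_v+1$) using only $O(1)$ bits. Each node verifies locally that all its $T$-neighbors have depth $d_v\pm 1\bmod 3$, that it has exactly one parent-style $T$-neighbor when $b_v=0$ and none when $b_v=1$, and that every remaining $T$-neighbor is child-style. In Round~$2$ (verifier $\to$ prover), every node draws a uniform public bit $r_v$ and forwards it. In Round~$3$ (prover $\to$ verifier), every node receives a pair $(x_v,c_v)\in\{0,1\}^2$. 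The bit $x_v$ is meant to equal the XOR of the $r_u$'s in the claimed subtree at $v$ and is checked by $x_v=r_v\oplus\bigoplus_{u\in\mathrm{children}(v)}x_u$; the bit $c_v$ is meant to equal the random bit drawn at $v$'s claimed root and is enforced by $c_v=c_{\mathrm{parent}(v)}$ if $b_v=0$, by $c_v=r_v$ if $b_v=1$, and --- crucially --- by $c_u=c_v$ on every non-$T$ edge $(u,v)\in E$.

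Completeness is immediate: when $T$ is a spanning tree, the honest prover picks any root, computes BFS depths modulo~$3$, subtree XORs bottom-up, and root bits top-down, and every check passes deterministically. For soundness I would enumerate the ways $T$ can fail. A node with no incident $T$-edge must be declared a root to pass the parent-count check, which isolates it into its own component and reduces to the disconnection case. A cycle $v_1\to\cdots\to v_k\to v_1$ in the parent-pointer graph forces, by summing the local XOR equations around the cycle so that the $x$-variables telescope away, the identity $\bigoplus_{v\in S}r_v=0$ on the verifier's random bits, where $S$ is the non-empty vertex set consisting of the cycle together with the subtrees hanging off it (this clean decomposition follows from the standard fact that every functional graph decomposes into rho-shaped components). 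Since the $r_v$'s are uniform and independent, this identity is violated with probability $1/2$, in which case the linear system on the $x_v$'s is infeasible and at least one local XOR check must fail. If instead $T$ is a forest with two or more components, then connectedness of $G$ provides a non-$T$ edge $(u,v)$ straddling two components, and the propagation rules pin $c_u$ and $c_v$ to the random bits at two distinct roots, which disagree with probability $1/2$, failing the cross-component check.

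Each label is at most $3$ bits, the protocol uses three interaction rounds, achieves perfect completeness, and has soundness error at most $1/2$ (boosting to any constant below $1/2$ is immediate via a constant number of independent parallel repetitions). I expect the main subtlety to be the cycle analysis: one must carefully show that the linear system imposed by the local XOR equations along a cycle genuinely reduces to a single random-XOR constraint on a \emph{non-empty} vertex set, which rests on the rho-shape decomposition of the parent-pointer functional graph so that off-cycle subtrees contribute only a clean XOR extension of the cycle itself and can never cause cancellations that empty $S$.
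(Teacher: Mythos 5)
Your protocol is correct, but note that the paper does not actually prove this lemma: it imports it as a black-box citation to Naor--Parter--Yogev (Section 7.1), so there is no internal argument to compare against, and what you have written is a self-contained substitute in the same MAM spirit (prover commits to a rooted orientation, verifier randomness is aggregated to certify the claimed structure). Your three gadgets do the job: the depth-mod-$3$ labels force a consistent parent assignment, and a short case check (which you should state explicitly) shows that any undirected $T$-cycle either causes a deterministic rejection at a node with two parent-style neighbors or survives only as a directed parent-pointer cycle; for such a cycle, XORing the local equations over the rho-component $S$ makes the $x$-variables cancel exactly because every node of $S$ has its parent in $S$ and all children of nodes in $S$ lie in $S$, leaving the forced constraint $\bigoplus_{v\in S} r_v=0$ on a nonempty set fixed already in round~$1$, hence violated with probability $1/2$; and if $T$ is an acyclic proper forest, each tree component is forced to have exactly one declared root (edge-counting), the $c$-propagation pins every node's $c$-value to its root's random bit, and connectivity of $G$ (the paper's standing assumption) supplies a non-$T$ edge across components whose endpoints' forced values disagree with probability $1/2$. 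Two minor points: the bare protocol has soundness error exactly $1/2$, which does not meet the paper's requirement $\epsilon_s<1/2$, so the constant number of parallel repetitions you mention is genuinely needed (and is sound here because the forced constraints in each copy depend only on that copy's round-$1$ labels and randomness, hence fail independently); alternatively, drawing $r_v$ from a constant-size field and replacing XOR by field sums gives error $1/q$ directly. With that, your construction yields the lemma: $3$ rounds, $O(1)$-bit labels, perfect completeness, constant soundness error.
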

\noindent Observe that by standard parallel repetition, one can reduce the soundness error to $1/2^\ell$ at the expense of a $\Theta(\ell)$ proof size for any parameter $\ell>0$. Throughout the paper, this fact will be used in a black-box manner. 

\paragraph{Multiset equality.} In the \emph{multiset equality} problem, each node $v\in V$ receives as input two multisets $S_{1}(v),S_{2}(v)$ and the goal is to decide whether $S_{1}=S_{2}$, where $S_{1}$ and $S_{2}$ are the multisets $S_{1}=\bigcup_{v\in V}S_{1}(v)$ and $S_{2}=\bigcup_{v\in V}S_{2}(v)$. Notice that in the  definition of $S_{1}$ and $S_{2}$, the union is taken with respect to multisets, i.e., the multiplicity of element $s$ in $S_{1}$ (resp., $S_{2}$) is the sum of its multiplicities over all multisets $S_{1}(v)$ (resp., $S_{2}(v)$).  The multisets $S_{1}$ and $S_{2}$ are assumed to be of size at most $k$ for some integer $k>0$, and the elements are taken from a universe of size $k^{c}$ for some constant $c\geq 1$. For our purposes, it would also be convenient to assume that the nodes are given a distributed encoding of a rooted spanning tree of the graph. The following lemma can be derived from the multiset equality protocol of \cite{NaorPY20}.
\begin{lemma}[\cite{NaorPY20}]\label{lemma:multiset-equality-npy}
	Given a multiset equality instance $(G,S_{1},S_{2},k)$ such that $|S_{1}|,|S_{2}|\leq k$ and a rooted spanning tree $T$ of $G$, there exists a $2$-round distributed interactive proof for multiset equality. The proof admits perfect completeness, a soundness error of $1/k^{c}$, and a proof size of $O(\log k)$.
\end{lemma}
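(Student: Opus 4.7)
The plan is to apply the standard polynomial-fingerprinting technique, aggregated along the rooted spanning tree~$T$. Fix a prime $p\in[k^{c+1},2k^{c+1}]$ (which has $O(\log k)$ bits by Bertrand's postulate) and, for $j\in\{1,2\}$, let $P_{j}(x)=\prod_{s\in S_{j}}(x-s)\in\mathbb{F}_{p}[x]$, a polynomial of degree at most $k$. Note $S_{1}=S_{2}$ iff $P_{1}\equiv P_{2}$, and whenever $S_{1}\ne S_{2}$ the Schwartz--Zippel lemma gives $\Pr_{r\in\mathbb{F}_{p}}[P_{1}(r)=P_{2}(r)]\leq k/p\leq 1/k^{c}$. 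Denote by $r_{0}$ the root of $T$ and, for each $v$, by $T_{v}$ the subtree rooted at $v$.

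The two-round protocol is as follows. In round~1 (verifier-to-prover), each node $v$ draws a uniform $\rho_{v}\in\mathbb{F}_{p}$ and sends it to the prover; the protocol will actually only use $\rho_{r_{0}}$. In round~2 (prover-to-verifier), the honest prover sets $r:=\rho_{r_{0}}$ and assigns each $v$ the label $L(v)=(r,L_{1}(v),L_{2}(v))$, where $L_{j}(v):=\prod_{s\in S_{j}\cap T_{v}}(r-s)\bmod p$. Each node $v$ then locally checks that: (a)~the ``$r$''-coordinate of $L(u)$ agrees with that of $L(v)$ for every neighbor $u\in N(v)$; (b)~if $v=r_{0}$, this coordinate equals $\rho_{r_{0}}$; (c)~$L_{j}(v)\equiv \bigl(\prod_{s\in S_{j}(v)}(r-s)\bigr)\cdot\prod_{u\text{ child of }v}L_{j}(u)\pmod p$ for $j\in\{1,2\}$; and (d)~if $v=r_{0}$, then $L_{1}(v)=L_{2}(v)$.

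Completeness is immediate: the honest labels satisfy (a)--(c) by construction, and when $S_{1}=S_{2}$ also (d). For soundness, suppose $S_{1}\ne S_{2}$. If any of (a)--(c) is violated the verifier rejects deterministically; otherwise, (a) together with (b) and the connectivity of $G$ force every label's $r$ to equal $\rho_{r_{0}}$, and (c), applied inductively from the leaves upward, forces $L_{j}(r_{0})=P_{j}(r)\bmod p$. Hence (d) reduces to testing $P_{1}(\rho_{r_{0}})=P_{2}(\rho_{r_{0}})$, which fails with probability at least $1-1/k^{c}$ over the uniform draw of $\rho_{r_{0}}$. Since each label is a triple of $\mathbb{F}_{p}$ elements, the proof size is $O(\log p)=O(\log k)$.

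The main conceptual obstacle is coordinating the random challenge $r$: in the public-coin model, each node sees only its own random string together with its and its neighbors' labels, so no single node can unilaterally broadcast randomness across the network. The resolution is to let only the root \emph{anchor} $r$ to its private $\rho_{r_{0}}$ via check (b), and to rely on the pairwise cross-checks (a) along edges of the connected graph $G$ to propagate this anchoring to every other label---which then lets the subtree-product recursion (c) carry the fingerprint $P_{j}(r)$ up to $r_{0}$, where the final comparison (d) is performed.
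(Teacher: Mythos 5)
Your proposal is correct and follows essentially the same route as the paper: the root samples the random evaluation point, the prover distributes it together with the subtree-aggregated values of the product (fingerprint) polynomials over $\mathbb{F}_{p}$ with $p>k^{c+1}$, consistency of the point and the aggregation are checked locally along the tree, and the root compares the two evaluations, giving perfect completeness, soundness error $k/p\leq 1/k^{c}$, and $O(\log p)=O(\log k)$ labels in two rounds. The only differences are cosmetic (e.g., $(x-s)$ versus $(s-x)$ and Bertrand's postulate versus density of primes for bounding $p$).
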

Since the lemma above is not explicit in \cite{NaorPY20} and since we use the details of the multiset equality protocol in a white-box manner, we describe here the construction's details. The multiset equality protocol relies on the following idea. For a multiset $S$, define the polynomial $\varphi_{S}(x)=\prod_{s\in S}(s-x)$.\footnote{Notice that we assume here w.l.o.g.\ that multiset elements are integers.} Now, observe that $S_{1}=S_{2}$ if and only if $\varphi_{S_{1}}\equiv \varphi_{S_{2}}$. Moreover, notice that the degree of $\varphi_{S_{1}}(x)$ and $\varphi_{S_{2}}(x)$ is at most $k$. Define $p$ as the smallest prime number that satisfies $p>k^{c+1}$ and let $z$ be a variable drawn uniformly at random from $\{0,\dots ,p-1\}$. By polynomial identity testing properties, if $\varphi_{S_{1}}\not\equiv \varphi_{S_{2}}$ and $\varphi_{S_{1}}(z),\varphi_{S_{2}}(z)$ are computed over the field $\mathbb{F}_{p}$, then  $\Pr[\varphi_{S_{1}}(z)=\varphi_{S_{2}}(z)]\leq k/p\leq 1/k^{c}$.\footnote{Recall that $\mathbb{F}_{p}$ is the field whose elements are $\{0,\dots, p-1\}$ and operations are done modulo $p$.}

Following this idea, the multiset equality problem essentially reduces to evaluating a polynomial at a random point $z\in \{0,\dots ,p-1\}$. Recalling that we assume that a distributed encoding of a rooted spanning tree is provided to the nodes, the polynomial evaluation is implemented as follows. First, the point $z\in \{0,\dots ,p-1\}$ is sampled by the root and sent to the prover. Then, the prover assigns each node $v\in V$ with the value $z$ and the values $\varphi_{S^{v}_{1}}(z),\varphi_{S^{v}_{2}}(z)$ (computed over $\mathbb{F}_{p}$) where $S_{1}^{v}$ (resp., $S_{2}^{v}$) is the multiset of elements from $S_{1}$ (resp., $S_{2}$) in $v$'s subtree. Given the assigned values, it is well-known that their validity can be checked at each node $v\in V$ based on its input, its label, and its children's labels. This is because polynomial evaluation is an aggregation task that can be verified ``up the tree'' (see, e.g., \cite[Lemma 4.4]{KormanKP10} for details). Following these checks, the root $r$ can check that $\varphi_{S^{r}_{1}}(z)=\varphi_{S^{r}_{2}}(z)$ (which implies that $\varphi_{S_{1}}(z)=\varphi_{S_{2}}(z)$).

To summarize, given a rooted spanning tree of the graph, the multiset equality protocol runs for $2$ interaction rounds, admits perfect completeness, a soundness error of $1/k^{c}$, and a proof size of $O(\log p)= O(\log k)$.\footnote{Recall that standard density of primes properties assure that $p$ cannot be too large. In particular, $p<k^{c+2}$, and thus $\log p=O(\log k)$.}
\section{Technical Overview}\label{section:overview}
In this section, we provide an overview of the techniques used to obtain the protocols presented in the paper. To exemplify the challenge of planarity certification with labels of size $O(\log \log n)$, let us first sketch a seemingly natural (yet unsuccessful) approach to which we refer as the \emph{clustering} approach. Suppose that the prover computes a partition of the graph into node-disjoint connected clusters of size $\poly\log n$. Then, the prover provides a proof that: (1) the subgraph induced by each cluster is planar; and (2) the graph obtained by contracting all clusters is planar. Notice that in terms of \emph{proof size} this approach is promising (and indeed, a similar approach was used in \cite{NaorPY20} to achieve sub-logarithmic proofs for various other problems). This is because one can use , e.g., the logarithmic proof for planarity of \cite{FeuilloleyFMRRT21} on each cluster to obtain a proof of size $O(\log \log n)$ for (1). As for (2), since each cluster has size $\poly \log n$ and acts as a single node in the contracted graph, one can hope that it is possible to distribute the proof of a single cluster among the nodes within that cluster using only $O(\log \log n)$-sized labels per node. For the sake of this example, let us assume that it is indeed possible to obtain a proof for (2) with a proof size of $O(\log \log n)$. 

It is not hard to see that the proof provided from the clustering approach is complete --- if $G$ is planar, then so are the subgraphs induced by the clusters as well as the graph obtained from contracting every cluster. The fundamental problem however, is that a proof for planarity obtained from this approach \emph{cannot} be sound. To see why this is the case, consider a non-planar graph that contains a single $5$-clique $H$.\footnote{Recall that a graph is planar if and only if it does not contain $K_{5}$ or $K_{3,3}$ as a minor.} A cheating prover can then define the partition such that, e.g., two nodes of $H$ are assigned to one cluster and the other three are assigned to a different cluster. In this case, $H$ does not violate planarity within any cluster and translates to a single edge in the contracted graph. Therefore, in this instance the verifier is likely to accept which violates the soundness requirement. 

Notice that even if somehow we were able to prevent the prover from constructing an adversarial partition, it is possible to construct a no-instance in which the verifier is likely to accept for \emph{any} partition. For example, it is possible to create an instance where each edge of $H$ is subdivided such that its endpoints are at distance $\Omega(n)$ in $G$ (and thus, are separated in any partition). Hence, we conclude that the clustering approach is doomed to fail for the planarity task. 

The inherent failure of the clustering approach compels us to come up with a different approach. Similarly to the approach of \cite{FeuilloleyFMRRT21}, we seek to reduce planarity tasks to some other well-structured task for which we are able to design an efficient protocol. Perhaps surprisingly, we show that efficient protocols for all tasks considered in the current paper can be obtained based on a protocol for the seemingly unrelated task of LR-sorting. Indeed, starting from LR-sorting we present a sequence of reductions leading to new protocols for outerplanarity, planar embedding, planarity, series-parallel, and graphs of treewidth $\leq 2$. Refer to Figure \ref{fig:summary} for a chart depicting the dependencies between our different constructions. Notice that an advantage of LR-sorting is that unlike the planarity task, it can benefit from the clustering approach. As we explain below in more detail, this becomes useful in our protocol.

\begin{figure}[t!]
	\centering

	\begin{tikzpicture}
	[
	every node/.style={
		align = center,
		fill          =  black!5,
		inner sep     = 6pt,
		minimum width = 22pt,
		rounded corners=.1cm}
	]
	
			\node (sorting) at (-14,5) {LR-Sorting \\ (Lemma \ref{lemma:sorting})};
		\node (pathouterplanar) at (-14,2) {Path-Outerplanarity \\ (Thm. \ref{theorem:path-outerplanar})};
	\node (outerplanar) at (-19.5,2) {Outerplanarity \\ (Thm. \ref{theorem:outerplanar})};
	\node (embplanar) at (-8.3,2) {Embedded \\Planarity   \\ (Thm. \ref{theorem:embedding})};
	\node (planar) at (-3.5,2) {Planarity \\ (Thm. \ref{theorem:planarity})};
	\node (seriesparallel) at  (-14,-0.3){Series-Parallel \\ (Thm. \ref{theorem:series-parallel})};
	\node (treewidth) at (-14,-2.7) {Treewidth $\leq 2$ \\ (Thm. \ref{theorem:treewidth})};
	
		\draw[->, line width=1.4, black] (sorting) -- (pathouterplanar) node[midway,left,fill=transparent!0]{\textcolor{black}{Lemma \ref{lemma:sorting-to-path-outerplanarity}}};
	
	\draw[->, line width=1.4, black] (pathouterplanar) -- (outerplanar) node[midway,below,fill=transparent!0]{\textcolor{black}{}};
	
	\draw[->, line width=1.4, black] (pathouterplanar) -- (embplanar) node[midway,below,fill=transparent!0]{\textcolor{black}{Lemma \ref{lemma:path-outerplanarity-to-embedding}}};
	
		\draw[->, line width=1.4, black] (embplanar) -- (planar) node[midway,below,fill=transparent!0]{\textcolor{black}{Lemma \ref{lemma:embedding-to-planarity}}};
	
		\draw[->, line width=1.4, black] (pathouterplanar) -- 
	(seriesparallel) node[midway,left,fill=transparent!0]{\textcolor{black}{}}; 
	
			\draw[->, line width=1.4, black] (seriesparallel) -- 
	(treewidth) node[midway,left,fill=transparent!0]{\textcolor{black}{}}; 
	
	\end{tikzpicture}

	\caption{High-level description of the main results and their connections. In the case where there is no reference next to an arrow $x \to y$, the protocol for $y$ is obtained by using the protocol for problem $x$ in a white-box manner. 
	}
	\label{fig:summary}
\end{figure}
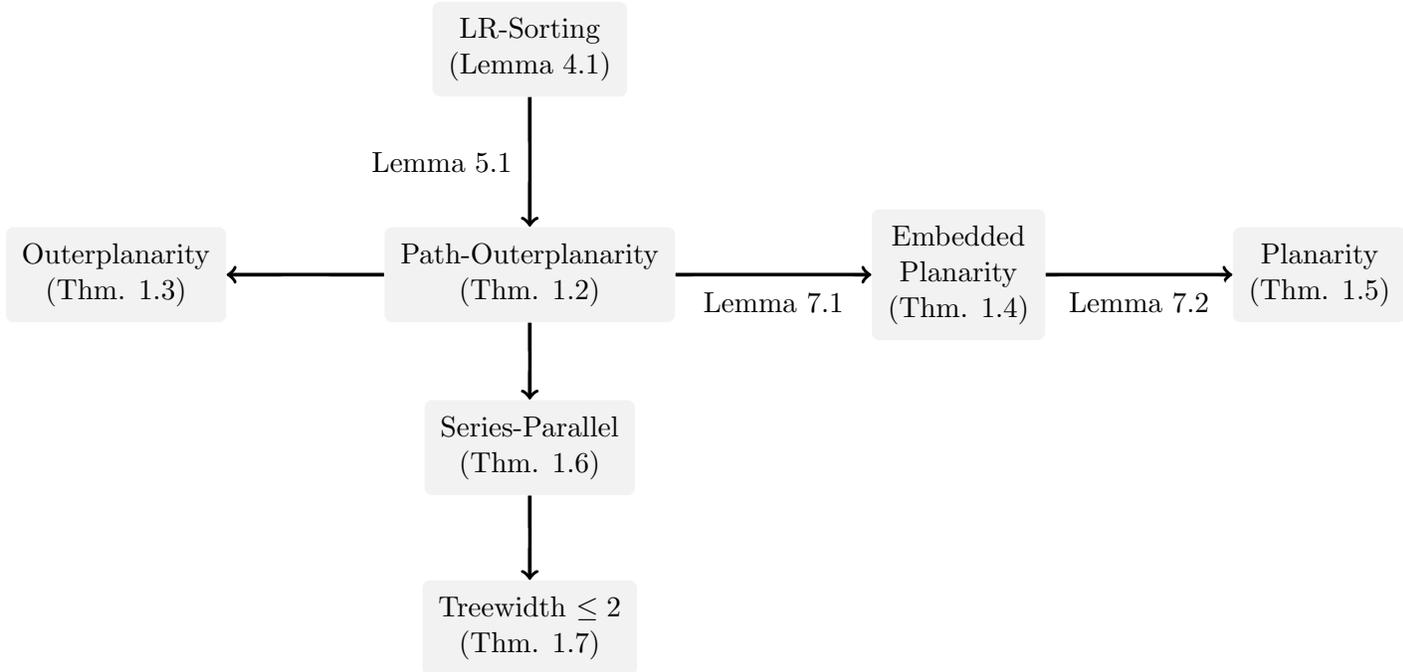

\paragraph{LR-sorting.}  
To give an intuition for the LR-sorting protocol, let us first sketch a simple one-round protocol for LR-sorting with a proof size of $O(\log n)$. The prover assigns each node with its position on the path. Then, each node $v\in V$ that is located at the $i$-th position can verify that: (1) its path-neighbors are located at positions $i\pm 1$; and (2) all of its outgoing edges go towards nodes of larger positions. 

To obtain a distributed interactive proof with a proof size of $O(\log \log n)$, the idea is to divide the nodes into node-disjoint \emph{blocks} where each block is made up of  $\lceil \log n\rceil $ consecutive path-nodes. This allows the prover to distribute the position of block $b$, denoted by $\pos(b)$, such that each node receives: (1) its \emph{index} $i\in [\lceil \log n\rceil]$ within $b$; and (2) $\pos(b)[i]$, i.e., the $i$-th most significant bit of $b$'s position. Ideally, as in the clustering approach, we would like for each block to act as a single node in the trivial protocol. In this short description, let us assume for simplicity that the prover encodes the position of the blocks correctly according to the $P$-ordering and let $\pos(b)$ denote the position of block $b$. The main challenge of the protocol is then captured by the following question: suppose that there is an edge $(u,v)$ where $u$ and $v$ belong to different blocks $b_{u},b_{v}$, how can the prover prove to $u$ and $v$ that $\pos(b_{u})<\pos(b_{v})$?

Towards answering this question, let us first consider the simpler case where $(u,v)$ is the \emph{only} edge leaving the block for both $b_{u}$ and $b_{v}$. Furthermore, we will describe the protocol under the assumption that the prover can assign edge-labels. Recall that Lemma \ref{lemma:edge-labels} implies that the edge-labels assumption can be simulated in planar graphs while incurring only a constant overhead to the proof size. Let $i$ be the index of the most significant bit in which $\pos(b_{u})$ and $\pos(b_{v})$ differ. Notice that by definition, if $\pos(b_{u})<\pos(b_{v})$, then $\pos(b_{u})[i]=0$ and $\pos(b_{v})[i]=1$. In addition, $i$ is in the range $[\lceil \log n\rceil ]$ and thus, can be encoded using $O(\log \log n)$ bits. In the first interaction, the prover encodes $i$ to the label of $(u,v)$. The prover then needs to prove that (1) $\pos(b_{u})[i]=0$; (2) $\pos(b_{v})[i]=1$; and (3) $\pos(b_{u})$ and $\pos(b_{v})$ agree on their $i-1$ most significant bits. 

Let us focus on how (3) is proved. A straightforward way to obtain a proof for (3) is to assign the edge $(u,v)$ with the substring containing the $i-1$ most significant bits in their blocks. The main problem with this approach is that the proof size could be as large as $\Theta(\log n)$. To avoid this large label size, we can use polynomial identity testing. That is, we interpret the substrings containing the $i-1$ most significant bits in $\pos(b_{u})$ and $\pos(b_{v})$ as polynomials of degree $O(\log n)$ and seek to have the prover and verifier evaluate them at a random point over a field of size $\poly \log n$. The prover then assigns the outcome of the polynomial evaluation to the label of $(u,v)$. This introduces the following \emph{locality} problem: the verification that the polynomials were computed correctly is done at the $(i-1)$-th leftmost node in each block (using the standard aggregation technique of \cite{KormanKP10}). Since $u$ and $v$ might be far from the $(i-1)$-th leftmost node in their respective blocks, they cannot locally detect that the prover is not lying about the outcome. We note that in the simplified case where $(u,v)$ is the only edge leaving $b_{u}$ and $b_{v}$, the problem is easy to solve. Indeed, since $(u,v)$ is the only edge considered by blocks $b_{u}$ and $b_{v}$, the prover can assign the outcome of the polynomial evaluation at the $(i-1)$-th node to \emph{all} nodes of $b_{u}$ and $b_{v}$. The nodes can then check the correctness of this assignment locally based on the assignment to their block-neighbors. 

Moving on to the general case where each block may have \emph{many} edges leaving it, it is not clear how to solve the locality problem described above. Of course, if the prover assigns each node of the block with the outcomes of polynomial evaluations for every relevant index, this could require assigning $\Theta(\log n)$ values to every node which completely defeats the purpose. The main observation that we make is that one can formulate the locality problem as an instance of \emph{multiset equality} between two multisets that are carefully defined within each block. Then, to check whether the multisets are indeed equal, a multiset equality protocol is executed within the blocks. An important property of the constructed multisets is that they are of size $\poly \log n$ which means that this final step can be done while maintaining a proof size of $O(\log \log n)$. 

For the correctness, it turns out that the protocol's completeness becomes straightforward from the multisets construction, whereas the soundness argument requires a bit more care. Essentially, we show that in a no-instance the prover is likely to \emph{commit} to a pair of unequal multisets within some block. Then, conditioning on this event, it is likely that the verifier rejects the instance due to the soundness of the multiset equality protocol.

\paragraph{From LR-sorting to path-outerplanarity.} 
In Section \ref{section:path-outerplanar}, we devise a protocol for path-outerplanarity. To get intuition of how the protocol works, we first recall the labels assigned in the non-interactive proof of \cite{FeuilloleyFMRRT21}. Essentially, the prover assigns each node $v\in V$ with its position in the path $P$ along with the position of the nodes $u$ and $u'$ for which $(u,u')\in E$ is the first edge that is drawn above $v$. The authors of \cite{FeuilloleyFMRRT21} show that this labeling allows the (deterministic) verifier to verify that indeed $G$ is path-outerplanar.

Of course, assigning the positions of nodes in $P$ is far too costly for our purposes, so we seek to avoid it by using interaction and randomization. To that end, we first note that the assignment of positions in \cite{FeuilloleyFMRRT21} serves the following purposes: (1) each node learns its $P$-neighbors; (2) each edge can be identified based on the positions of its endpoints; and (3) each node learns a clockwise orientation of its incident edges. Note that since $P$ is a spanning tree, (1) can be obtained using only constant-sized labels based on Lemmas \ref{lemma:tree-advice} and \ref{lemma:npy-spanning-tree}. For (2), we show that the edge identifiers can be replaced by random bits. As for (3), we design a protocol in which it is sufficient for every node to only distinguish between its left and right edges with respect to the path $P$. In fact, we show that under the assumption that every node knows its left and right edges, path-outerplanarity can be solved in $3$ rounds and with a \emph{constant} proof size  (refer to Lemma \ref{lemma:sorting-to-path-outerplanarity} for the full details of the reduction). To lift this assumption, we can apply our LR-sorting protocol which leads to the stated complexity bound of Theorem \ref{theorem:path-outerplanar}.

\paragraph{From path-outerplanarity to outerplanarity and planarity.} We handle outerplanarity and planarity through reductions to path-outerplanarity. We note that while such reductions are presented in previous works (\cite{FeuilloleyFMRRT21} for planarity; \cite{BousquetFP24} for outerplanarity), we cannot use them as-is in our setting. This is because these reductions incur an additive $\Theta (\log n)$ overhead to the proof size. Nevertheless, our results rely on some modifications of the existing reductions. For outerplanarity, our reduction is white-box and it avoids the $\Theta (\log n)$ overhead based on the tools of Lemma \ref{lemma:tree-advice} and Lemma \ref{lemma:npy-spanning-tree} as well as some observations regarding the path-outerplanarity protocol. 

For planarity, we start from the planar embedding task as an intermediate point. To reduce planar embedding to path-outerplanarity, we revisit some of the constructive proofs presented in \cite{FeuilloleyFMRRT21} and show that they can be used to obtain such a reduction. Once again, Lemma \ref{lemma:tree-advice} and Lemma \ref{lemma:npy-spanning-tree} are used for the sake of efficient implementation. Then, we show that planarity reduces to planar embedding while incurring only an additive $O(\log \Delta)$ overhead to the proof size, thus obtaining the stated result.

\paragraph{Lower bounds.}
The starting point for our lower bounds is the lower bound of \cite{FeuilloleyFMRRT21} which applies to proof labeling schemes (in fact, it holds also for the more general locally checkable proofs \cite{GoosS16}). The result of \cite{FeuilloleyFMRRT21} shows that an $\Omega(\log n)$ proof size is required even for the task of deciding whether a graph is outerplanar or non-planar. We start by adjusting the lower bound's details so that it would apply for the task of deciding whether a graph is \emph{biconnected} outerplanar or non-planar.\footnote{Recall that a graph is biconnected if the removal of any node leaves the resulting graph connected. A biconnected component of a graph $G$ is a maximal biconnected subgraph of $G$.} This adjustment leads to a bound for all the graph families considered in the current paper since they are planar and contain all biconnected outerplanar graphs. To extend the lower bound to one-round protocols in which the verifier is randomized, we use a framework presented in \cite{FraigniaudMORT19}.


\section{LR-Sorting Protocol}\label{section:sorting}
In this section, we present a protocol for the task of LR-sorting on a given directed graph $G=(V,E)$ with Hamiltonian path $P$. The protocol is implemented under the assumption that the prover is able to assign labels to the nodes and the edges of $G$. If a label $L(u,v)$ is assigned to the edge $(u,v)\in  E$, then both endpoints $u$ and $v$ can view it. The main result of the current section is the following lemma.
\begin{lemma}\label{lemma:sorting}
	There exists a distributed interactive proof for LR-sorting running in $5$ interaction rounds. The proof admits perfect completeness, a soundness error of $1/\poly\log n$, and a proof size of $O(\log \log n)$ where labels are assigned to both nodes and edges.
\end{lemma}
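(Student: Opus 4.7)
The plan is to implement the blocks-plus-most-significant-bit strategy outlined in Section~\ref{section:overview}. I partition the directed Hamiltonian path $P$ into consecutive blocks of $\ell := \lceil \log n\rceil$ nodes, assigning to each block $b$ a position $\pos(b)\in\{1,\dots,\lceil n/\ell\rceil\}$; this position fits in $\ell$ bits, which the prover distributes across the block by labeling the $j$-th node of $b$ with $(j,\pos(b)[j])$, where $\pos(b)[j]$ is the $j$-th most significant bit of $\pos(b)$. Using the path-aggregation technique of \cite{KormanKP10}, each node verifies locally that indices count up from $1$ to $\ell$ along its block and reset at block boundaries, and that any non-path edge internal to a block respects the $P$-orientation (trivial from the $O(\log\log n)$-bit indices). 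This establishes a consistent block decomposition in which each block carries an $\ell$-bit candidate position.

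For a non-path edge $e=(u,v)$ whose endpoints lie in different blocks, I use the most-significant-bit trick: the prover labels $e$ with the index $i_e\in[\ell]$ of the most significant bit on which $\pos(b_u)$ and $\pos(b_v)$ disagree. To witness $\pos(b_u)<\pos(b_v)$ it suffices to verify (a) $\pos(b_u)[i_e]=0$ and $\pos(b_v)[i_e]=1$ and (b) the length-$(i_e-1)$ prefixes of the two positions are equal. For (b) I use polynomial identity testing: the root samples a random $z\in \mathbb{F}_q$ with $q=\poly\log n$, and the prover labels every node $v\in b$ with the partial evaluation $\sigma_v=\sum_{j=1}^{\mathrm{idx}(v)}\pos(b)[j]\cdot z^{j-1}\pmod{q}$. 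The recurrence $\sigma_v=\sigma_{v^-}+\pos(b)[\mathrm{idx}(v)]\cdot z^{\mathrm{idx}(v)-1}$ is checked locally against the block-predecessor $v^-$. The prover attaches to $e$ a claimed scalar $r_e\in\mathbb{F}_q$ that is supposed to equal the partial sum at index $i_e-1$ in both blocks; by Schwartz--Zippel this reduces (b) to a single scalar equality over $\mathbb{F}_q$ with error $O(\ell/q)=1/\poly\log n$.

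I expect the main obstacle to be a locality issue: the endpoints $u,v$ of a cross-block edge need not reside at index $i_e-1$ or $i_e$ inside their blocks, so they cannot directly read off the relevant bit or the canonical partial sum in order to cross-check $r_e$ and the bit claims. To resolve this I plan to run the multiset-equality protocol of Lemma~\ref{lemma:multiset-equality-npy} independently within each block, using the block's own $\ell$-node path as the spanning tree it requires. For each block $b$ the two multisets being compared are, informally, the $O(\ell)$ canonical records $\{(j,\pos(b)[j],\sigma_{w_j})\}_{j\in[\ell]}$ distributed along the block, and the claims carried by cross-block edges leaving $b$. Although a single block may have $\Theta(n)$ leaving edges, their claims span only $\ell=O(\log n)$ distinct index values $i_e$, so by having the prover commit to one canonical value per $(b,j)$ (stored at the index-$j$ node of $b$) and forcing via local checks that every edge label reproduces the canonical record for its block and index, I keep each multiset of size $O(\ell)$. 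The element universe is $[\ell]\times\{0,1\}\times\mathbb{F}_q=\poly\log n$, so Lemma~\ref{lemma:multiset-equality-npy} completes the check with $O(\log\log n)$-bit labels and $2$ extra interaction rounds.

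Completeness is immediate since the honest prover's labels satisfy every check by construction. For soundness, if the instance admits an edge $(u,v)$ with $v\prec_P u$, then no prover labeling can simultaneously satisfy all per-block index/recurrence checks, the MSB-bit equalities, the Schwartz--Zippel prefix-agreement test, and the per-block multiset equalities; following the overview, the argument is that the prover is forced to commit to a pair of unequal multisets within some block, at which point the $1/\poly\log n$ soundness of Lemma~\ref{lemma:multiset-equality-npy} catches the violation. A union bound over the $O(n)$ checks is absorbed by the $q=\poly\log n$ slack in the polynomial-identity tests. The round count is one verifier round for broadcasting the random challenges, the prover's label assignment, and the two rounds of the multiset-equality subroutine, totalling the claimed $5$ interaction rounds.
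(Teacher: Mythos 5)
Your treatment of the cross-block edges follows the paper's route (distinguishing index on the edge, prefix-polynomial evaluation at a random point distributed along the block, and a per-block multiset-equality protocol to resolve the locality problem), but there is a genuine gap earlier in the construction: you never verify that the block positions themselves are consistent with the path order. In your first stage the only checks are that the indices count $1,\dots,\ell$ within each block and reset at boundaries; nothing ties $\pos(b)$ to $\pos(b')$ for adjacent blocks $b,b'$. A cheating prover on a no-instance with an edge $(u,v)$ where $v\prec_{P}u$ can simply assign the blocks arbitrary positions for which $\pos(b_{u})<\pos(b_{v})$ holds; all of your MSB, prefix-agreement, and multiset-equality checks then verify consistency with these fake positions and accept, so soundness fails. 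This consistency check is itself a locality problem (the bits of $\pos(b)$ and $\pos(b')$ sit on non-adjacent nodes), and the paper spends an entire stage on it: the prover writes \emph{two} numbers $\pos(b)$ and $\pos(b)+1$ into each block, proves locally that they are consecutive (by marking the least significant $0$-bit), and runs a multiset-equality test across each block boundary to certify $\pos(b)+1=\pos(b')$. Your proposal needs this (or an equivalent) stage. A second, smaller omission of the same flavor: the classification of an edge as inner-block versus cross-block is supplied by the prover and is not locally verifiable from the indices alone; the paper catches misclassified ``inner-block'' edges by having the leftmost node of each block sample a random $\poly\log n$-size identifier that is disseminated within the block and compared across each claimed inner-block edge. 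Declaring this case ``trivial from the indices'' is not sound.

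Two minor points. Your claim that the per-block multisets have size $O(\ell)$ does not hold: duplicate claims can only be collapsed locally at a single node, so the block-level multiset can have size $\Theta(\log^{2}n)$ (as in the paper); this is harmless since $\log$ of that is still $O(\log\log n)$. Also, the round ordering you sketch (verifier challenges first, then the prover's labels, then the two multiset-equality rounds) is both one round short of your own count and unsound as stated: the prover must commit to the block positions \emph{before} the evaluation point is sampled, otherwise it can choose positions whose prefix polynomials collide at the known point. The correct schedule is prover (positions, indices, distinguishing indices), verifier (evaluation point and block nonces), prover (evaluations and edge scalars), verifier (multiset-equality challenge), prover (response), giving the five rounds.
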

Recall that by Lemma \ref{lemma:edge-labels}, if the given graph is planar, we can lift the edge-labels assumption of Lemma \ref{lemma:sorting} to get the following.
\begin{lemma}\label{lemma:sorting-planar}
	There exists a distributed interactive proof for LR-sorting in planar graphs running in $5$ interaction rounds. The proof admits perfect completeness, a soundness error of $1/\poly\log n$, and a proof size of $O(\log \log n)$.
\end{lemma}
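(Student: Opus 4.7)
\textbf{The plan is} to partition the Hamiltonian path $P$ into $m = \lceil n / \ell \rceil$ blocks $B_1, \ldots, B_m$ of $\ell = \lceil \log n \rceil$ consecutive nodes, and to have the honest prover assign each block $B_b$ a distinct position $\pos(b) \in [m]$ matching its order along $P$. The $\lceil \log m \rceil = O(\log n)$ bits of $\pos(b)$ are distributed across the block: the node at within-block index $j \in [\ell]$ receives the bit $\pos(b)[j]$ together with its index $j$ (all of size $O(\log \log n)$). Each node verifies locally with its $P$-neighbors that within-block indices form the sequence $1, 2, \ldots, \ell$ and that block boundaries are consistently marked. Intra-block directed edges $(u,v)$ are then handled in place by comparing the within-block indices $j_u, j_v$.

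\textbf{For an inter-block directed edge} $(u,v)$ with $u \in B_{b_u}$ and $v \in B_{b_v}$, the plan is to reduce the comparison $\pos(b_u) < \pos(b_v)$ of the two $\Theta(\log n)$-bit numbers to an MSB-index $i$ supplied on the edge label, together with a certificate that the prefixes of length $i-1$ of the two positions agree. After the verifier broadcasts a public random point $z \in \mathbb{F}_p$ with $p = \poly\log n$, the certificate is the evaluation $\alpha = \sum_{k=1}^{i-1} \pos(b_u)[k]\, z^{k-1}$ attached to the edge label; in a yes-instance this equals the analogous sum for $b_v$, so the two endpoints can check agreement locally. Polynomial identity testing then catches any genuine prefix disagreement with error $O(\log n)/p = 1/\poly\log n$, reducing the remaining task to verifying that each edge label's triple $(i, \text{bit}, \alpha)$ is consistent with the bits actually encoded in its block.

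\textbf{The main obstacle} is precisely that consistency check: $u$ and $v$ are generally not the nodes at within-block index $i$, so they cannot locally see $\pos(b)[i]$ nor compute the prefix evaluation $Q_b(i-1) = \sum_{k=1}^{i-1} \pos(b)[k]\, z^{k-1}$. The plan to overcome this obstacle is to invoke Lemma \ref{lemma:multiset-equality-npy} within each block. The prover augments the label of the node at within-block index $j$ with the power $z^{j-1}$ and the prefix evaluation $Q_b(j-1)$, whose consistency is verified locally by consecutive block-nodes via the recurrences $z^j = z \cdot z^{j-1}$ and $Q_b(j) = Q_b(j-1) + \pos(b)[j]\cdot z^{j-1}$; this is an aggregation along the path induced on $B_b$, which will also serve as the rooted spanning tree that the multiset-equality protocol requires. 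For each block $B_b$ the prover then sets up two multisets, carefully defined so that one aggregates the triples $(i,\text{bit},\alpha)$ requested by the outgoing-edge labels and the other aggregates the triples $(j, \pos(b)[j], Q_b(j-1))$ that are truly realizable from the per-node prefix data; the two multisets are equal iff every outgoing edge label reports a correct bit and prefix evaluation. With $k = \poly\log n$, the multiset-equality protocol contributes only $O(\log k) = O(\log \log n)$-bit labels.

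For \textbf{completeness} the honest prover's construction passes every local check, every polynomial evaluation is an equality, and every per-block multiset-equality instance satisfies $S_1 = S_2$. For \textbf{soundness}, if LR-sorting fails on some edge then after the prover's commitments either the per-block multiset-equality rejects (whenever an edge label's triple is inconsistent with the block's encoded bits and prefix evaluation) or the prefix evaluations numerically coincide despite an actual bit disagreement (caught by polynomial identity testing at the random $z$). Each failure probability is at most $1/\poly\log n$, and a union bound over the $O(n)$ blocks preserves the overall $1/\poly\log n$ soundness error. The five interaction rounds are: (1) the prover commits to block structure, per-node bits, and per-node auxiliary prefix data; (2) the verifier broadcasts $z$; (3) the prover sends the edge labels with MSB indices and shared prefix evaluations together with the multiset-equality labels; (4) the verifier samples the random point for the multiset-equality polynomial identity test; (5) the prover sends the multiset-equality evaluations. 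All labels, node and edge alike, are of size $O(\log \log n)$.
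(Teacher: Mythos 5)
Your overall architecture (blocks of $\lceil \log n\rceil$ consecutive path nodes, a distinguishing most-significant-bit index written on each inter-block edge, prefix polynomial evaluation at a shared random point, and a per-block multiset-equality check tying the edge labels to the bits actually encoded in the block) matches the paper's, but as written the proof has genuine soundness gaps. First, you never verify that the positions $\pos(b)$ committed by the prover are consistent with the actual order of the blocks along $P$. Your local checks only confirm the within-block indices $1,\dots,\ell$ and the boundary marks; nothing prevents a cheating prover from permuting the block positions (e.g., giving the block containing the head of a backward edge a smaller position than the block containing its tail), after which all of your checks, including the multiset equalities, pass on a no-instance. The paper devotes an entire stage to this: each block is assigned both $\pos(b)$ and $\pos(b)+1$, the prover proves internally that the two numbers are consecutive (via the node $v_b$ holding the least significant zero bit), and adjacent blocks compare $x_{2}(b)$ with $x_{1}(b')$ by a further multiset-equality test at a common random point. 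Second, for edges you classify as intra-block you only compare within-block indices; a backward edge whose endpoints lie in different blocks but satisfy $j_u<j_v$ can simply be declared intra-block by the prover and is never caught, since $u$ and $v$ cannot locally tell whether a block boundary separates them. The paper closes this with a random identifier $r_b\in[\log^{c}n]$ sampled by the leftmost node of each block, forwarded by the prover to the whole block, checked for consistency along the block, and compared across every declared intra-block edge.

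Third, the lemma is stated for planar graphs, and planarity enters exactly where your write-up is silent: in the DIP model the prover labels only nodes, whereas your protocol attaches triples $(i,\text{bit},\alpha)$ to edges, and a node may have many incident non-path edges, so these triples cannot simply be folded into an $O(\log\log n)$-bit node label. The paper first proves the protocol assuming edge labels (Lemma \ref{lemma:sorting}) and then applies Lemma \ref{lemma:edge-labels}: planar graphs have arboricity $3$, the three forests are communicated with constant-size labels via the coloring construction of Lemma \ref{lemma:tree-advice}, and each edge's label is stored in a designated field of the label of its child endpoint. Without this (or an equivalent) simulation, your argument does not establish the stated lemma, and indeed it never uses planarity at all. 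A minor additional slip: the per-node prefix evaluations $Q_b(j-1)$ depend on $z$, so they cannot be part of the round-1 commitment as you describe; they must be sent in round 3, which still fits within five rounds.
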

The rest of the section is dedicated to the protocol's description. Recall that our goal is to show how the prover proves that $u\prec v$ for every non-path edge $(u,v)$ directed from $u$ to $v$. This is described in two stages. First, a division of the path into node-disjoint \emph{blocks} is described. Then, we explain how the block construction allows the nodes to compare their relative position on the path. For clarity, the stages are described without regard for the number of interaction rounds. Then, by the end of the section, we explain how the protocol can be implemented in $5$ interaction rounds. Throughout, $c>0$ is defined as a positive constant that can be made large enough to support the protocol's soundness guarantee.

\subsection{The Block Construction}\label{section:sorting-block}
The block construction is defined so that the first block consists of the  $\lceil \log n\rceil$ leftmost nodes in the path, the second block consists of the next $\lceil \log n\rceil$ nodes and so on. For ease of presentation, we assume that all blocks are of size exactly $\lceil \log n\rceil$. One can easily adjust the protocol's details to handle the general case in which (only) the rightmost block may have more than $\lceil \log n\rceil$ (but less than $2\lceil \log n\rceil$) nodes.  

The purpose of the block construction is to allow the nodes to receive information  regarding their position on the path. The position of a block $b$, denoted by $\pos(b)$, is defined to be $i-1$ if $b$ is the $i$-th leftmost block. Notice that due to the block size, it is possible to encode an integer $x\in\{0,\dots ,n-1\}$ through the nodes of a block using only $O(\log \log n)$ bits per block. To do so, assign the $j$-th leftmost node of the block with the number $j$ as well as the $j$-th most significant bit of $x$ (leading zeros are added if necessary). Using this mechanism, the prover assigns the values $\pos(b)$ and $\pos(b)+1$ to each block $b$. 

In addition, the prover provides a proof that the two numbers assigned to each block are consecutive. To explain how this is done, suppose that $x$ is a nonnegative integer in binary representation and let $j$ be its least significant bit valued $0$. Notice that $x$ and $x+1$ differ (only) in their $j$ least significant bits. For a block $b$, define $j_{b}$ to be the least significant bit in $\pos(b)$ whose value is $0$ and let $v_{b}$ be the node associated with the index $j_{b}$. To prove that the numbers assigned to $b$ are consecutive, the prover marks $v_{b}$ and informs every other node in the block whether it is to the right/left of $v_{b}$. 

To present the verification process at block $b$, let us denote by $x_{1}(b)$ and $x_{2}(b)$ the bitstrings assigned to $b$ under the claim $x_{1}(b)+1=x_{2}(b)$. If a node $v\in b$ was labeled to be to the right of $v_{b}$, then it checks that its bit in $x_{1}(b)$ is $1$, its bit in $x_{2}(b)$ is $0$, and its right neighbor in the block (if such neighbor exists) is also labeled as right of $v_{b}$; if $v$ was marked as $v_{b}$, then it checks that its bit in $x_{1}(b)$ is $1$, its bit in $x_{2}(b)$ is $0$, its right neighbor is labeled as right of $v_{b}$, and its left neighbor is labeled as left of $v_{b}$; and if $v$ is labeled as left of $v_{b}$, then it checks that it received the same bit in both bitstrings and that its left neighbor is labeled as left of $v_{b}$.

To complete the block construction stage, the verifier checks that the position assignment is consistent between adjacent blocks.  Let $b$ and $b'$ be two adjacent blocks where $b'$ is to the right of $b$. The verifier seeks to check that $\pos(b)+1=\pos(b')$. To that end, the multiset equality protocol is used between $x_{2}(b)$ and $x_{1}(b')$, where a bitstring is interpreted as the subset of $[\lceil \log n\rceil]$ that contains the indices whose bit is $1$. Notice that the sets (and so, the degree of the multiset equality polynomials) are of size at most $\lceil \log n\rceil$. The verifier and prover run the multiset equality protocol over the field $\mathbb{F}_{p}$ where $p$ is the smallest prime satisfying $p> \log ^{c}n$.  The polynomials are computed at a random point $r\in\{0,\dots ,p-1\}$ which is the same for all blocks. To that end, the variable $r$ is sampled by the leftmost node in the path and passed to all nodes in the graph by the prover. Each block $b$ computes (with the prover's assistance) the values of the two polynomials associated with its encoded bitstrings $x_{1}(b),x_{2}(b)$. This allows every pair of adjacent blocks to check that the positions assigned to them are indeed consecutive. 

\paragraph{Correctness.} If the position assignment given by the prover is valid, then $x_{2}(b)=x_{1}(b')$ for every pair of adjacent blocks $b,b'$ and by the completeness of the multiset equality protocol, the verifier does not reject in this case. On the other hand, if the position assignment is not valid, then at least one pair $b,b'$ of adjacent blocks satisfies $x_{2}(b)\neq x_{1}(b')$ and by the soundness of the multiset equality protocol, the verifier rejects with probability $1-\lceil \log n\rceil /p= 1-1/\poly\log n$.

\paragraph{Remark.} An alternative approach to verifying the validity of the block construction is to use the RAM compiler of \cite{NaorPY20} concurrently on pairs of consecutive blocks. Nevertheless, the approach and notations presented above will be useful in the presentation of the next stage. We also note that it might be plausible to implement the block construction stage with proof size of $o(\log\log n)$, we avoided these optimizations as the key ``communication bottleneck" lies in the next stage.

\subsection{Comparing Relative Positions}\label{section:sorting-comparisons}
We now describe how the prover uses the block construction to prove claims of the form $u\prec v$ for all non-path edges $(u,v)$. To that end, we divide the edges into two types as follows. The \emph{inner-block} edges are defined as the edges $(u,v)$ in which $u$ and $v$ belong to the same block, and the \emph{outer-block} edges are defined as the edges $(u,v)$ in which $u$ and $v$ belong to different blocks.

\paragraph{Inner-block edges.} Suppose that $(u,v)$ is an inner-block edge. To show that $u\prec v$, the prover first assigns a bit to the edge $(u,v)$ indicating that it is an inner-block edge. Let us denote the indices of $u$ and $v$ within their block by $i_{u}$ and $i_{v}$, respectively (recall that these indices were assigned to the nodes during the block construction stage). The nodes $u$ and $v$ check that $i_{u}<i_{v}$ and if not, reject immediately. If $i_{u}<i_{v}$, then it is left to check that $u$ and $v$ are indeed on the same block. To that end, the leftmost node of each block $b$ (i.e., the node associated with the most significant bit of $\pos(b)$) samples a number $r_{b}\in[\log ^{c}n]$ and sends it to the prover which in response, sends the value $r_{b}$ to all nodes of the block $b$. Each node checks that the number it received is consistent with its block neighbors and the leftmost node in the block checks that it received the same number it sampled. Then, for every edge $(u,v)$ that was labeled as inner-block, $u$ and $v$ check that they both received the same $r_{b}$ value and reject otherwise. 

\paragraph{Correctness for inner-block edges.} For completeness, observe that if $u\prec v$, then all checks succeed and the verifier accepts. For soundness, if $v\prec u$ and $u$ and $v$ are on the same block, then it must hold that $i_{v}<i_{u}$ and the verifier rejects. So, suppose that $v$ and $u$ are on different blocks $b\neq b'$ but the prover labels $(u,v)$ as an inner-block edge. Then, the verifier rejects unless $r_{b}=r_{b'}$ which happens with probability $1/\poly\log n$.

\paragraph{Outer-block edges.} We complete the protocol's description by addressing the case of outer-block edges. Consider an outer-block edge $(u,v)$, i.e., $u$ and $v$ belong to different blocks $b_{u}$ and $b_{v}$, respectively. The prover's goal is to show that $\pos(b_{u})<\pos(b_{v})$. We divide the proof into two parts referred to as the \emph{commitment} scheme and the \emph{verification} scheme. 

The main idea behind the commitment scheme relies on the following simple fact. Suppose that $x$ and $y$ are two nonnegative integers represented by binary strings of same length (leading zeros are added if necessary). Then, $x<y$ if and only if there exists an index $i$ such that the $i-1$ most significant bits of $x$ and $y$ are identical, the $i$-th most significant bit of $x$ is $0$, and the $i$-th most significant bit of $y$ is $1$. We shall refer to this index as the \emph{$(x,y)$-distinguishing} index and denote it by $I(x,y)$. 

Consider a non-path edge $(u,v)$ whose endpoints belong to different blocks $b_{u}$ and $b_{v}$, respectively. The commitment scheme starts by having the prover write the value $I_{u,v}=I(\pos (b_{u}),\pos (b_{v}))$ to the label of edge $(u,v)$. Then, for each block $b$, the multiset equality polynomial that is associated with $\pos(b)$ is computed at a random point $r'\in \{0,\dots, p-1\}$, where $p$ is the prime number defined above. Similarly to the block construction stage, the computation is done over the finite field $\mathbb{F}_{p}$ and the variable $r'$ is the same for all blocks. For an index $i\in[\lceil \log n\rceil]$, let $\pos(b)[1,\dots ,i]$ denote the substring of $\pos(b)$ consisting of its $i$ most significant bits and let us denote by $\varphi^{b}_{i}$ the multiset equality polynomial that is identified with the substring $\pos(b)[1,\dots ,i]$. We note that $\varphi^{b}_{i}(r')$ is exactly the value computed at the $i$-th leftmost bit of block $b$. In addition to computing the multiset equality polynomial values within the blocks, the prover writes the value $\varphi^{b_{u}}_{I_{u,v}-1}(r')$ (which is equal to $\varphi^{b_{v}}_{I_{u,v}-1}(r')$ by the definition of the distinguishing index) to the label of each non-path edge $(u,v)$. 

For an edge $e=(u,v)$ which is classified as an outer-block edge, let $\rho(e)=(i,j)$ be the pair of values assigned by the prover on the label of $e$ during the commitment scheme. That is, here $i$ is claimed by the prover to be the distinguishing index between $u$ and $v$'s block positions, and $j$ is claimed to be the multiset equality polynomial value computed at the $(i-1)$-th index of both blocks. To complete the commitment scheme, the verifier at each node $v\in V$ makes some consistency checks. First, if the same index $i$ appears as the first element in two pairs $\rho(u,v)$ and $\rho(v,u')$ associated with edges $(u,v),(v,u')$, then the verifier rejects. To see why this condition is imposed, notice that $u\prec v$ requires that the $i$-th bit of $v$'s block is $1$, whereas $v\prec u'$ requires that the $i$-th bit of $v$'s block is $0$. For the second consistency check, the verifier checks that if two of $v$'s incident edges agree on the first element of $\rho(\cdot)$ (and did not fail the first check), then they agree on the second element of $\rho(\cdot)$. For each node $v\in V$, let us define $C_{0}(v)$ (resp., $C_{1}(v)$) as the set of pairs $\rho(u,v)$ (resp., $\rho(v,u)$) assigned by the prover to the edges $(u,v)$ (resp., $(v,u)$) during the commitment scheme. Notice that we define $C_{0}(v),C_{1}(v)$ as sets and not multisets. In particular, this means that $|C_{0}(v)|+|C_{1}(v)|\leq \lceil \log n\rceil$.

The purpose of the verification scheme is to verify the validity of the values in $C_{0}(v)$ and $C_{1}(v)$ for each node $v\in V$. Notice that this cannot be achieved locally in a trivial manner since the indices that appear in $C_{0}(v)$ and $C_{1}(v)$ may be associated with nodes on $v$'s block that are not adjacent to $v$. We describe the verification of $C_{1}(v)$ values and then explain the small change required for the $C_{0}(v)$ verification. For a block $b$, define $C_{1}(b)$ as the \emph{multiset} $C_{1}(b)=\bigcup_{v\in b}C_{1}(v)$. Define $F(b)$ as the set of indices whose bit in $\pos (b)$ is equal to $1$ and let $D_{1}(b)=\bigcup_{i\in F(b)}\{(i,\varphi^{b}_{i-1}(r'))\}$. The main idea behind the verification scheme is that in yes-instances, for each node $v\in b$ and pair $(i,j)\in C_{1}(v)$, it holds by construction that $(i,j)\in D_{1}(b)$. Thus, one can construct a multiset which is equal to $C_{1}(b)$ by taking every element of $D_{1}(b)$ with some multiplicity between $0$ and $\lceil \log n\rceil$ (notice that it is not guaranteed that every element of $D_{1}(b)$ is in $C_{1}(b)$, so we allow a ``multiplicity" of $0$). 
Following this idea, the validity of $C_{1}(b)$ can be verified by means of another multiset equality protocol.


To make things more concrete, consider a node $v\in b$ which is associated with an index $i_{v}\in F(b)$. The prover provides $v$ with a value $M_{v}\in\{0,\dots ,\lceil\log n\rceil\}$ that counts the number of times the pair $(i_{v},\varphi^{b}_{i_{v}-1}(r'))$ appears in $C_{1}(b)$. Then, the prover and verifier execute a multiset equality protocol to compare between $C_{1}(b)$ and the multiset obtained by taking $M_{v}$ copies of the pair $(i_{v},\varphi^{b}_{i_{v}-1}(r'))$ for each $i_{v}\in F(b)$. Here, notice that node $v$ gets the value $i_{v}$ from its own label and the value $\varphi^{b}_{i_{v}-1}(r')$ from the label of its left neighbor on the path. When computing the multiset equality polynomials, each pair $(i,j)\in [\lceil\log n\rceil]\times\{0,\dots ,p-1\}$ is mapped to an element from the set $[p\cdot \lceil\log n\rceil]$ by means of a fixed bijection known in advance to all nodes. To accommodate this range of field elements, it suffices to execute the multiset equality protocol over the field $\mathbb{F}_{p'}$ such that $p'$ is the smallest prime that satisfies $p'>p\cdot \lceil\log n\rceil$. To verify the validity of $C_{0}(b)=\bigcup_{v\in b}C_{0}(v)$, we apply a similar idea with respect to the set $D_{0}(b)=\bigcup_{i\notin F(b)}\{(i,\varphi^{b}_{i-1}(r'))\}$. Observe that all the multisets that are involved in the equality protocols (and thus, the degrees of all polynomials) are of size $O(\log ^{2}n)$.

\paragraph{Correctness for outer-block edges.} The completeness follows directly from the definition of the distinguishing index and the completeness of the multiset equality protocol. Regarding soundness, suppose that for some edge $(u,v)$ directed from $u$ to $v$, it holds that $v\prec u$. Denote by $(i,j)$ the pair assigned to the edge $(u,v)$ by the prover in the commitment scheme.

First, consider the case that $u$ and $v$ are in the same block $b$ (but $(u,v)$ is labeled as an outer-block edge by the prover). Notice that $(i,j)$ can be in at most one of the sets $D_{0}(b),D_{1}(b)$. This is because $i\in F(b)$ implies $(i,j)\notin D_{0}(b)$ and $i\notin F(b)$ implies $(i,j)\notin D_{1}(b)$. Assume w.l.o.g.\ that $(i,j)\notin D_{0}(b)$. Notice that by construction $(i,j)\in C_{0}(b)$, which means that the compared multisets cannot be equal. Hence, by the soundness of the multiset equality protocol, the verifier rejects in this case with probability $1-\log ^{2}n/(p\log n)=1-1/\poly \log n$.  

Now, suppose that $u$ and $v$ are in different blocks $b_{u}\neq b_{v}$. Notice that by construction, $(i,j)\in C_{0}(b_{u})$ and $(i,j)\in C_{1}(b_{v})$. If $i\in F(b_{u})$ or $i\notin F(b_{v})$, then the soundness follows from a similar argument to the former case. Otherwise, by the definition of the distinguishing index and by the soundness of the multiset equality protocol, it follows that $\varphi^{b_{u}}_{i-1}(r')\neq \varphi^{b_{v}}_{i-1}(r')$ with probability $1-1/\poly\log n$. If this is the case, then it must be that either $j\neq \varphi^{b_{u}}_{i-1}(r')$ or $j\neq \varphi^{b_{v}}_{i-1}(r')$. Let us condition on this event and assume w.l.o.g.\ that $j\neq \varphi^{b_{u}}_{i-1}(r')$. Then, $(i,j)\notin D_{0}(b_{u})$ and since $(i,j)\in C_{0}(b_{u})$, the soundness of the multiset equality protocol suggests that the verifier rejects with probability $1-1/\poly\log n$. 

\subsection{Protocol's Complexity}\label{section:sorting-communication}
For ease of presentation, our protocol is described in separate stages. Here, we observe that parts of the stages can be parallelized. First, we observe that the block construction stage can be implemented in three interaction rounds. Indeed, it starts with  the prover encoding the block positions along with a proof that each block receives two consecutive numbers. Then, the verifier interacts with the prover to compute two multiset equality polynomials within each block. This can be done in two additional interaction rounds for a total of three. Similarly, the proofs of $u\prec v$ for inner-block edges $(u,v)$, and the commitment scheme of outer-block edges can be completed within three rounds. Moreover, a correct execution of these steps does not depend on the execution of the block construction, thus they can be executed in parallel. We also note that the multiplicity values $M_{v}$ that are presented in the verification stage of outer-block edges can actually be precomputed by the prover and assigned during the first interaction (they are placed in the verification scheme strictly for the sake of clear presentation). Therefore, after three interaction rounds, it is the verifier's turn to speak and the remaining task is the multiset equality protocol of the verification scheme of outer-block edges (here, notice that the verification scheme cannot be executed sooner as it depends on the values assigned in the commitment scheme). This takes two additional interaction rounds for a total of five rounds. Regarding proof size, a bound of $O(\log \log n)$ is straightforward from the construction.

  		\section{Path-outerplanarity}\label{section:path-outerplanar}
	 In this section, we present a protocol that uses LR-sorting as a sub-task to decide whether a given graph is path-outerplanar. The properties of the protocol are specified in the following lemma.
	 \begin{lemma}\label{lemma:sorting-to-path-outerplanarity}
	 	Suppose that there exists a distributed interactive proof for LR-sorting verification in planar graphs running in $t$ interaction rounds. Let $\ell$ be the proof size, $\epsilon_{c}$ be the completeness error, and $\epsilon_{s}$ be the soundness error of the LR-sorting protocol. Then, there is a distributed interactive proof for path-outerplanarity running in $\max\{t,3\}$ rounds and admitting a proof size of $O(\ell)$, a completeness error of $\epsilon_{c}$, and a soundness error of $\epsilon_{s}+2^{-\ell}$.   
	 \end{lemma}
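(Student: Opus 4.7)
The plan is to build a path-outerplanarity protocol that invokes the given LR-sorting protocol as a black box. The honest prover first commits to a Hamiltonian path $P$ of $G$: since path-outerplanar graphs are planar, $P$ can be encoded with $O(1)$-bit labels per node via Lemma \ref{lemma:tree-advice}, and its validity as a spanning tree is certified by Lemma \ref{lemma:npy-spanning-tree} amplified through $\ell$ parallel repetitions, driving the soundness error down to $2^{-\ell}$ at the cost of proof size $O(\ell)$. Each node additionally rejects if its $P$-degree exceeds $2$, so the certified spanning tree is in fact a path. For every non-path edge the prover then writes on each endpoint a single bit declaring whether the edge is that endpoint's left or right edge.

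Next, direct each non-path edge from the endpoint that claimed it ``right'' to the one that claimed it ``left'', and invoke the LR-sorting protocol on this digraph together with $P$ oriented left-to-right. By the guarantee of the LR-sorting protocol, if it accepts then with probability at least $1-\epsilon_s$ the prover's left/right annotations are consistent with the linear $P$-ordering, so that every node reliably distinguishes its $P$-left neighbours from its $P$-right neighbours.

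With a trusted left/right labelling in hand, path-outerplanarity is then certified by a short auxiliary protocol of constant per-node proof size and $3$ interaction rounds, which runs in parallel with the LR-sorting invocation. Guided by Observation \ref{observation:path-outerplanar-longest}, the prover annotates each non-path edge $(u,v)$ with an \emph{anchor} tag claiming it to be either the longest $u$-right edge or the longest $v$-left edge, and each node verifies (i) that at most one of its incident non-path edges is anchored at it as its longest-right, and similarly at most one as its longest-left, and (ii) that the anchor tag is consistent at both endpoints. These anchor checks alone do not exclude crossings (e.g.\ on path $1$--$2$--$3$--$4$ with edges $(1,3),(2,4)$ both choices of anchor are locally coherent), so the prover additionally supplies, for each non-path edge, the predecessor/sibling annotations dictated by Observation \ref{observation:path-outerplanar-pred}, whose global consistency is audited by a bounded multiset-equality sub-protocol (Lemma \ref{lemma:multiset-equality-npy}) over a universe of size $\poly\log n$; this keeps all auxiliary communication within $O(\ell)$ bits.

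Correctness then follows easily: completeness is inherited from the honest constructions and from Observations \ref{observation:path-outerplanar-longest} and \ref{observation:path-outerplanar-pred}, while soundness is a union bound over the spanning-tree check failing with probability at most $2^{-\ell}$ and the LR-sorting failing with probability at most $\epsilon_s$; once both succeed, any surviving non-path-outerplanar instance must violate one of the deterministic anchor or nesting checks. The round count is $\max\{t,3\}$ by parallelizing the LR-sorting with the auxiliary checks, and the overall proof size is $O(\ell)$. The main obstacle will be the crossing-prevention step: making Observation \ref{observation:path-outerplanar-pred} yield a \emph{locally verifiable} certificate of the nested arc tree without resorting to global pointers or positions that would blow the proof size past $O(\log\log n)$; this is precisely where the small-universe multiset-equality sub-protocol becomes essential.
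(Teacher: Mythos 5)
Your first two stages match the paper's proof: committing to a Hamiltonian path via Lemma \ref{lemma:tree-advice} plus an $\Theta(\ell)$-fold repetition of Lemma \ref{lemma:npy-spanning-tree}, and then having the prover annotate each non-path edge with a left/right bit whose correctness is certified by orienting the edges accordingly and invoking the LR-sorting protocol. The problem is the third stage, which is the actual heart of the lemma and which you leave unresolved (as you yourself concede in your last sentence). Saying that the prover ``supplies the predecessor/sibling annotations dictated by Observation \ref{observation:path-outerplanar-pred}, whose global consistency is audited by a bounded multiset-equality sub-protocol'' is not an argument: properly-nestedness is not naturally an equality of two locally computable multisets, and Lemma \ref{lemma:multiset-equality-npy} moreover presupposes a rooted spanning tree over which aggregation is performed, so it is not clear what the two multisets would be, where they would be aggregated, or why their equality would rule out a single crossing pair whose endpoints are far apart on $P$. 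This missing step is exactly what the paper's nesting verification supplies: each node samples a random string $s_v\in\{0,1\}^{\Theta(\ell)}$, each edge gets a \emph{name} $(s_u,s_v)$, and the prover assigns $\mathtt{name}/\mathtt{succ}/\mathtt{above}$ labels encoding the successor structure of Observation \ref{observation:path-outerplanar-pred}; purely local conditions then force, in any crossing instance, a chain of $\mathtt{succ}$-equalities that can only be satisfied if two \emph{distinct} nodes drew the same random string, an event of probability $2^{-\Theta(\ell)}$ (Observations \ref{observation:path-outerplanar-longest-labeling}, \ref{observation:path-outerplanar-succ-labeling} and Lemma \ref{lemma:path-outerplanar-nesting}). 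No multiset-equality machinery appears in this stage.

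A related error in your accounting: you attribute the $2^{-\ell}$ term in the soundness bound solely to the spanning-tree check and assert that, once LR-sorting succeeds, any crossing ``must violate one of the deterministic anchor or nesting checks.'' That cannot be right. Your own example (path $1$--$2$--$3$--$4$ with edges $(1,3),(2,4)$) shows the anchor/longest-edge checks alone are satisfiable in a crossing instance, so the remaining certificate must carry real soundness weight, and in the paper it is genuinely randomized: a dishonest prover can defeat the nesting stage with probability $2^{-\Theta(\ell)}$ by a name collision, and this is precisely where the extra $2^{-\ell}$ in the lemma's soundness error comes from (together with the spanning-tree check, via a union bound with the constant $c$ chosen large enough). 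Relatedly, the auxiliary stage is not ``constant per-node proof size'': the names and successor names are $\Theta(\ell)$ bits each, which is why the lemma's conclusion is a proof size of $O(\ell)$ rather than $\ell+O(1)$. The round count $\max\{t,3\}$ and the parallelization you describe are consistent with the paper, but as it stands the crossing-prevention argument --- the only nontrivial part of the reduction --- is absent.
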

As a consequence, we get Theorem \ref{theorem:path-outerplanar}.
 	\begin{proof}[Proof of Theorem \ref{theorem:path-outerplanar}]
 		The protocol is obtained by plugging the LR-sorting protocol of Lemma \ref{lemma:sorting-planar} into the statement of Lemma \ref{lemma:sorting-to-path-outerplanarity}. 
 	\end{proof}
	The rest of the section is dedicated to the description of the protocol that proves Lemma \ref{lemma:sorting-to-path-outerplanarity}. For clarity, the protocol is described in separate stages without regard for the number of interaction rounds. Then, by the end of the section, we explain how the protocol can be implemented within the desired amount of interaction. Throughout, $c>0$ is defined as a positive constant that can be made large enough to support the protocol's soundness guarantee.
	
	\paragraph{Committing to a path.} 
	The protocol starts by having the prover commit to a Hamiltonian path $P$ of $G$. To encode $P$, the prover uses the labels of Lemma \ref{lemma:tree-advice} where $P$ is rooted at the leftmost node in the path. Each node can verify that it has at most one child in the given tree encoding. Additionally, to verify that the given subgraph is indeed a Hamiltonian path of the graph, the prover and verifier execute the protocol of Lemma \ref{lemma:npy-spanning-tree} amplified by means of a $c\cdot \ell$ parallel repetition. 
	
	Observe that if the graph is indeed path-outerplanar, then the prover can successfully send the verifier a Hamiltonian path. Consequently, each node knows its path-edges and is able to differentiate between its right and left neighbor on the path. On the other hand, if the graph is not path-outerplanar, then the graph is either not Hamiltonian or not outerplanar. In the former case, the prover is not able to provide a Hamiltonian path which causes the verifier to reject with probability $1-2^{-\Theta(\ell)}$; in the latter case, the prover is able to send the verifier a Hamiltonian path but the non-path edges are not properly nested.
	
	\paragraph{LR-sorting.}  This stage starts by having the prover inform the verifier whether $u\prec v$ or $v\prec u$ for every edge $(u,v)\in E$. To see how this is achieved, recall that in the simulation of edge-labels that proves Lemma \ref{lemma:edge-labels}, $e$'s label is written within the label of one of its endpoints. Let us refer to that endpoint as the endpoint \emph{accountable} for $e$. So, if $u$ is accountable for $e$, then the prover assigns the bit $1$ to $u$'s sub-label associated with $e$ to signify that $u\prec v$, and $0$ otherwise. 
	
	Following this assignment, the goal of the verifier is to check that all edges were labeled correctly by the prover, i.e., to check that if an edge $(u,v)$ was labeled $u\prec v$, then indeed $u$ appears before $v$ in $P$. To that end, the prover and verifier execute an LR-sorting protocol. To create an instance for LR-sorting, the edges of the graph are oriented according to the prover's labeling. That is, if edge $(u,v)$ was labeled $u\prec v$, then it is oriented from $u$ to $v$. 
	
	Notice that if the verifier accepts the LR-sorting instance, then this means that the prover labeled all edges correctly (up to a soundness error of $\epsilon_{s}$). So, for the rest of the protocol, we assume that for every non-path edge $e=(u,v)$, both endpoints know whether $u\prec v$ or $v\prec u$. Notice that in particular, this means that each $v\in V$ can distinguish between its left and right edges.
	
	\paragraph{Nesting verification.}  
	In this final stage, the goal is to verify that the non-path edges are properly nested. The stage starts with the prover informing the endpoints of each non-path edge $e=(u,v)$, $u\prec v$, whether it is the longest $u$-right edge and whether it is the longest $v$-left edge. This is done by assigning two bits within the label of the endpoint accountable for $e$ similarly to the previous stage.
	
	
	
	Upon receiving the edge-labels, the verifier at each node $v\in V$ runs the following checks. If $v$ has any right (resp., left) edges, then the verifier checks that exactly one of them is marked as longest $v$-right (resp., $v$-left) edge. In addition, for every right (resp., left) edge $(v,u)$ that was not marked longest $v$-right (resp., $v$-left), the verifier checks that it was marked longest $u$-left (resp., $u$-right). If one of the checks fail, then the verifier immediately rejects. Otherwise, $v$ samples a bitstring $s_{v}\in \{0,1\}^{c\cdot \ell}$ uniformly at random and sends it to the prover. For each non-path edge $(u,v)$ such that $u\prec v$, define its \emph{name} to be the pair $(s_{u},s_{v})$. 
	
	After receiving the $s_{v}$ values from all nodes, the prover assigns to each edge $e$ its name through a sub-label $\mathtt{name}(e)$ and its successor's name through a sub-label $\mathtt{succ}(e)$ where the name of the virtual edge $e^{*}=(u^{*},v^{*})$ is defined by the designated symbol $\bot$ (recall that $e^{*}$ is the successor of edges with no real successor in the graph). Additionally, if edge $e=(u,v)$ has predecessors $(u_{1},v_{1}),\dots ,(u_{k},v_{k})$ such that $u\preceq u_{1}\prec v_{1}\preceq \dots \preceq u_{k}\prec v_{k}\preceq v$, then the prover assigns the label $\mathtt{above}(w)=\mathtt{name}(e)=(s_{u},s_{v})$ to every node $w$ such that $(u\prec w\preceq u_{1})\lor (v_{1}\preceq w\preceq u_{2})\lor\dots\lor  (v_{k}\preceq w\prec  v)$. In other words, the prover assigns $e$'s name to all nodes for which $e$ is the first edge drawn entirely above them (including the endpoints of $e$'s predecessors; excluding the endpoints of $e$). In particular, if $e=(u,v)$ does not have any predecessors, then $\mathtt{above}(w)=\mathtt{name}(e)$ for all nodes $w$ such that $u\prec w\prec v$. Observe that by definition, each node is associated with only one such edge and thus, receives only one edge name. 
	
	Consider a label assignment to the nodes and non-path edges. First, for each non-path edge $e$, its endpoints verify that $\mathtt{name}(e)$ is consistent with their sampled values. Then, each node $v\in V$ checks that there exists an ordering $e_{1}^{+},\dots ,e_{k}^{+}$ of its right edges, and an ordering $e_{1}^{-},\dots ,e_{k'}^{-}$ of its left edges such that the following conditions are satisfied:
	\begin{enumerate}
		\item $e_{k}^{+}$ and $e_{k'}^{-}$ are marked as the longest $v$-right and $v$-left edges, respectively.
		\item $\mathtt{succ}(e_{i}^{+})=\mathtt{name}(e_{i+1}^{+})$ for all $1\leq i<k$, and $\mathtt{succ}(e_{i}^{-})=\mathtt{name}(e_{i+1}^{-})$ for all $1\leq i<k'$.
		\item $\mathtt{above}(v)=\mathtt{succ}(e_{k}^{+})=\mathtt{succ}(e_{k'}^{-})$.
		\item if $u$ is $v$'s right neighbor on the path, then $\mathtt{name}(e_{1}^{+})=\mathtt{above}(u)$ if the set of $v$'s right edges is non-empty, and $\mathtt{above}(v)=\mathtt{above}(u)$ otherwise.
		\item if $u$ is $v$'s left neighbor on the path, then $\mathtt{name}(e_{1}^{-})=\mathtt{above}(u)$ if the set of $v$'s left edges is non-empty, and $\mathtt{above}(v)=\mathtt{above}(u)$ otherwise.
	\end{enumerate}
	We note that a pair of orderings that satisfies the described conditions does not have to be unique. Also, notice that nodes which are not incident on any non-path edges only need to check that they were assigned the same value as their neighbors on the path (conditions (4) and (5)). This concludes the description of the nesting verification. We go on to establish its correctness.
	
	\paragraph{Correctness of nesting verification.} 
	Towards proving the completeness and soundness of the nesting verification, we show the following two observations. 
	\begin{observation}\label{observation:path-outerplanar-longest-labeling}
		Fix some node $u\in V$. If the prover marks the longest $u$-right or the longest $u$-left edge incorrectly, then the verifier rejects the instance with probability $1-2^{-c\cdot \ell}$.
	\end{observation}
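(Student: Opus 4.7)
The plan is to handle without loss of generality the case where the prover mismarks the longest $u$-right edge; the longest-left case is symmetric. Trivial local-check failures at $u$ (no right edge marked longest when one exists, or more than one) lead to immediate deterministic rejection, so I may focus on the case where a unique edge $e' = (u, w')$ is marked longest $u$-right while the actual longest is $e^* = (u, w^*)$ with $w' \prec w^*$. Since $e^*$ is a right edge of $u$ not marked longest $u$-right, the local check at $u$ forces the prover to mark $e^*$ as longest $w^*$-left, or else $u$ rejects deterministically; I henceforth assume this marking is in place.

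Under this assumption, the next step is to extract the commitment that the prover is forced into through the name chains. Conditions~(1)--(2) at $u$ put $e_k^+ = e'$ and place $e^*$ at some index $j < k$, so $\mathtt{succ}(e^*) = \mathtt{name}(e_{j+1}^+)$, a pair of the form $(s_u, s_x)$ where $x \neq w^*$ is the non-$u$ endpoint of $e_{j+1}^+$. Simultaneously, condition~(3) at $w^*$ (using that $e^*$ is the marked longest $w^*$-left) yields $\mathtt{above}(w^*) = \mathtt{succ}(e^*) = (s_u, s_x)$. Thus the prover is committed to $\mathtt{above}(w^*) = (s_u, s_x)$ before any further randomness is drawn, purely from labels placed on $u$, $w^*$, and their incident edges.

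The main step is then to argue that this committed value must coincide with the name (or $\bot$) of the edge $\hat e = (\hat u, \hat v)$ truly drawn directly above $w^*$. I would show this by propagating $\mathtt{above}$ along $P$ using conditions~(4)--(5) and arguing inductively that any structural deviation triggers an earlier local failure. Since $e^*$ is the truly longest $u$-right edge, no edge starting at $u$ can extend past $w^*$, forcing $\hat u \neq u$ (or $\hat e$ absent, in which case $\mathtt{above}(w^*)$ must equal $\bot \neq (s_u, s_x)$). The required equality $(s_u, s_x) = (s_{\hat u}, s_{\hat v})$ then demands $s_u = s_{\hat u}$ for two distinct nodes' independently sampled bitstrings, an event of probability $2^{-c\ell}$.

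The main obstacle I anticipate is the clean formalization of the propagation step: showing that the prover cannot patch over the structural inconsistency by inserting spurious longest-markings or altering $\mathtt{above}$ labels at intermediate path nodes without triggering a local-check failure somewhere along the way. I plan to establish this by a careful induction along $P$ starting at $w^*$, tracking how $\mathtt{above}$ is forced to evolve by conditions~(4)--(5) at each node and using the already-assumed validity of the longest-markings at intermediate nodes (since those would be covered by separate instances of the same observation, applied to other vertices) to keep the chain honest until it reaches $\hat e$.
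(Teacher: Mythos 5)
Your setup matches the paper's: reduce to the case where the true longest $u$-right edge $e^*=(u,w^*)$ is unmarked (hence forced to be marked longest $w^*$-left), and use conditions (1)--(2) at $u$ to pin the first coordinate of $\mathtt{succ}(e^*)$ to $s_u$. The gap is in your main step. First, the claim that the verification conditions force $\mathtt{above}(w^*)$ to equal the name of the edge \emph{truly} drawn directly above $w^*$ is both stronger than needed and not something the checks can deliver: the conditions only enforce consistency with the prover's own markings and labels, not with the actual combinatorial structure, and in the regime where this observation is applied the graph need not be path-outerplanar, so ``the edge truly drawn directly above $w^*$'' need not even be well defined. Second, and more seriously, your proposed fix for the propagation -- assuming the longest-markings at intermediate nodes are valid ``since those are covered by separate instances of the same observation'' -- is circular. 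The observation does not assert that other nodes' markings are correct; it asserts that an incorrect marking is likely to be caught. The event you are analyzing (the prover mismarks at $u$) in no way excludes simultaneous mismarkings elsewhere, so you cannot import their correctness into this proof, and conditioning on non-rejection at other nodes would change the statement being proved.

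The paper's argument avoids both issues by following the prover's markings \emph{as given}, with no assumption of their correctness: starting from $\mathtt{succ}(e^*)=\mathtt{above}(w^*)$, it walks rightward (alternating marked-longest edges and the path-propagation conditions (4)--(5)) and observes that, whatever the markings are, the chain must terminate at $\mathtt{name}(u',v')$ of \emph{some} real edge with $w^*\prec v'$ (or at $\bot$, which already violates condition (2) at $u$), and that this terminal edge is fully determined by the markings, i.e., before the strings $s_v$ are sampled. The only place where the true structure enters is at $u$ itself: since $e^*$ really is the longest $u$-right edge, $u$ has no edge reaching past $w^*$, so $u'\neq u$, and acceptance requires the collision $s_u=s_{u'}$, which has probability $2^{-c\cdot\ell}$. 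To repair your write-up, replace ``the edge truly drawn directly above $w^*$'' by ``whichever edge the marking-induced chain terminates at,'' and drop the appeal to correctness of other nodes' markings; only the rightward-motion and the prior-to-sampling determinacy of the terminal edge are needed.
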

	\begin{proof}
		Suppose that edge $(u,v)$ is the longest $u$-right edge but not marked as such. Recall that by the initial verification conditions, $(u,v)$ must be marked as the longest $v$-left edge (otherwise the verifier rejects). If $v$ has a right edge, then by verification conditions (1) and (3), the value $\mathtt{succ}(u,v)$ should be identical to the value $\mathtt{succ}(v,w)$, where $(v,w)$ is the right edge of $v$ which is marked as longest. If $v$ does not have a right edge, then by verification conditions (3) and (4), the value $\mathtt{succ}(u,v)$ should be identical to the value the value $\mathtt{above}(w')$ where $w'$ is $v$'s right neighbor on the path. In either case, following the verification conditions we get that $\mathtt{succ}(u,v)$ should be identical to $\mathtt{name}(u',v')$ of some edge $(u',v')$ such that $v\prec v'$. Moreover, the edge $(u',v')$ is fully determined by the marking of longest left and right edges by the prover (and in particular, determined before the sampling of names). Note that since $(u,v)$ is the longest $u$-right edge and $v\prec v'$, it must hold that $(u,v')\notin E$ and thus, $u'\neq u$. On the other hand, since $(u,v)$ is not marked as the longest $u$-right edge, by condition (2), the first element of $\mathtt{succ}(u,v)$ should be $s_{u}$. So, the verifier rejects unless $s_{u}=s_{u'}$ which happens with probability $2^{-c\cdot \ell}$. The case of longest left edges follows a similar reasoning.
	\end{proof}
	Going forward with the correctness proof, we shall assume that all longest left/right edges are marked correctly. For two nodes $u\prec v$, denote by $P_{u,v}$ the set of nodes on the $(u,v)$-subpath in $G$.
	\begin{observation}\label{observation:path-outerplanar-succ-labeling}
		 Suppose that for a non-path edge $(u,v)$, it holds that $G(P_{u,v})$ is path-outerplanar w.r.t.\ $P_{u,v}$ (i.e., the edges of $G(P_{u,v})$ are properly nested within $P_{u,v}$). If the verifier accepts the instance, then $\mathtt{succ}(u',v')$ is the name of the successor of $(u',v')$ in $G(P_{u,v})$ for all non-path edges $(u',v')\neq(u,v),\ u\preceq u'\prec v'\preceq v$.
	\end{observation}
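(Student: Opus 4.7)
The plan is to prove the observation via an intermediate claim on the $\mathtt{above}$ labels. First, conditioning on verifier acceptance and applying Observation \ref{observation:path-outerplanar-longest-labeling}, I may assume all longest right/left edge markings are correct throughout $P_{u,v}$ (any error contributes only a $2^{-c\ell}$ rejection probability that is absorbed into the soundness bound). Next, path-outerplanarity of $G(P_{u,v})$ gives, at each node $w \in P_{u,v}$, a canonical ordering $f_1^+(w), f_2^+(w), \ldots$ of the $w$-right edges of $G(P_{u,v})$ by right endpoint (closest first), and an analogous $f_j^-(w)$ on the left. Using the freshness of the random names $(s_u, s_v)$, I would show that any $e_1^+, \ldots, e_k^+$ satisfying the verifier's conditions must, when restricted to edges of $G(P_{u,v})$, coincide with the $f$-ordering; any other arrangement entails a name-collision event of probability $2^{-\Theta(\ell)}$ since the longest in-$P_{u,v}$ edge has a unique $\mathtt{succ}$ witness up to collisions.

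With this alignment established, the intermediate claim is: for every $w$ with $u \prec w \prec v$, $\mathtt{above}(w) = \mathtt{name}(\mathrm{enc}(w))$, where $\mathrm{enc}(w)$ is the smallest non-path edge of $G(P_{u,v})$ strictly enclosing $w$ (well-defined since $(u,v)$ itself encloses $w$). I would prove this by walking along $P$ from $u$ to $v$: moving from a node $w$ to its right path-neighbor $w'$, the enclosure $\mathrm{enc}$ changes only when $w$ has a $G(P_{u,v})$-right edge (in which case it switches to $f_1^+(w)$) or when $w'$ has a $G(P_{u,v})$-left edge (in which case it switches to $f_1^-(w')$), and is otherwise preserved. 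Conditions (3)--(5), combined with the alignment of orderings from the first step, enforce exactly these transitions on $\mathtt{above}$.

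Finally, fix a non-path edge $(u',v') \neq (u,v)$ with $u \preceq u' \prec v' \preceq v$ and let $(u'', v'')$ be its successor in $G(P_{u,v})$. Writing $(u',v') = f_j^-(v')$, iterated application of condition (2) traces $\mathtt{succ}(u', v')$ through $f_{j+1}^-(v'), \ldots, f_{k'}^-(v')$; then condition (3) resolves $\mathtt{succ}(f_{k'}^-(v'))$ to $\mathtt{above}(v')$, which by the intermediate claim equals $\mathtt{name}(\mathrm{enc}(v'))$. A short case analysis on whether $(u'', v'')$ shares $v'$ as an endpoint (handled inside the $f^-$ chain) or strictly encloses $v'$ (handled by $\mathrm{enc}(v')$) shows that this terminal name is precisely $\mathtt{name}(u'', v'')$. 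The main obstacle is the first alignment step: the verifier's chain conditions range over all of $G$, not just $G(P_{u,v})$, so external edges of $G$ incident on $P_{u,v}$-nodes could in principle interleave into the $e_i^\pm$ chains and permute the intended ordering of in-$P_{u,v}$ edges; ruling this out cleanly relies on the probabilistic freshness of the random labels $s_w$ and is where the $\epsilon_s + 2^{-\ell}$ soundness error of Lemma \ref{lemma:sorting-to-path-outerplanarity} accrues.
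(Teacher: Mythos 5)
Your overall plan deviates from the paper's proof in a way that creates a genuine gap: the observation is a \emph{deterministic} consequence of acceptance (under the standing assumption, inherited from Observation \ref{observation:path-outerplanar-longest-labeling}, that the longest left/right markings are correct), and the paper proves it without any appeal to randomness. Your first step --- forcing the verifier's chosen orderings $e_{1}^{\pm},\dots$ to ``align'' with the canonical nesting order of the in-$P_{u,v}$ edges --- is both unnecessary and not obtainable in the form you need. The verification conditions only constrain the $\mathtt{succ}(\cdot)$ bit strings, never the identity of the ordering itself; if the sampled names happen to collide, the verifier can legitimately accept an ordering that permutes the in-$P_{u,v}$ edges, so ``acceptance $\Rightarrow$ canonical ordering'' is simply false, and your probabilistic patch (``any other arrangement entails a name collision of probability $2^{-\Theta(\ell)}$'') only yields the weaker statement ``with high probability over the names, acceptance implies the conclusion.'' That is not the statement being proved, and it also misplaces the error accounting: the $\epsilon_{s}+2^{-\ell}$ term in Lemma \ref{lemma:sorting-to-path-outerplanarity} is charged to the LR-sorting black box, the spanning-tree check, the marking errors of Observation \ref{observation:path-outerplanar-longest-labeling}, and the collision events in the soundness proof of Lemma \ref{lemma:path-outerplanar-nesting} --- no error may accrue inside this observation.

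The paper's argument shows how to avoid the alignment issue entirely: fix an edge $(x,y)$ of $G(P_{u,v})$ and argue about its predecessors' $\mathtt{succ}$ fields coordinate-wise. For the rightmost predecessor, whatever edge follows it in \emph{any} accepted ordering at the shared endpoint $y$ is still a left edge at $y$, so by condition (2) the second coordinate of its $\mathtt{succ}$ is forced to be $s_{y}$ regardless of which edge it is (and if the endpoint is not shared, conditions (1), (3), (4) propagate this along the path); symmetrically the first coordinate of the leftmost predecessor's $\mathtt{succ}$ is $s_{x}$; and conditions (1), (3), (4), (5) force adjacent siblings to carry identical $\mathtt{succ}$ fields, so every predecessor gets $\mathtt{succ}=(s_{x},s_{y})=\mathtt{name}(x,y)$. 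Notice that even in a collision scenario where a permuted ordering is accepted, this conclusion still holds as an equality of bit strings --- which is exactly why identifying the ordering is a harder (and false) subgoal. Your intermediate claim $\mathtt{above}(w)=\mathtt{name}(\mathrm{enc}(w))$ and the final chain trace are in the right spirit (they correspond to the paper's use of conditions (2)--(5)), but as written they inherit the dependence on the alignment step, and additionally pin down $\mathtt{above}$ for nodes whose only enclosure is $(u,v)$ itself, where boundary effects from edges of $G$ outside $G(P_{u,v})$ are not controlled by the hypothesis. To repair the proposal you would need to drop the alignment step and rework the trace so that it only extracts, at each node, the one name coordinate that is common to all candidate next edges --- which is precisely the paper's proof.
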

	\begin{proof}
		Let $(x,y)$ be a non-path edge in $G(P_{u,v})$ and let $(x_{\ell},y_{\ell})$ and $(x_{r},y_{r})$ be its leftmost and rightmost predecessors, respectively. First, if $y_{r}=y$, then it must hold that $x\prec x_{r}$ which means that $(x_{r},y_{r})=(x_{r},y)$ is not the longest $y$-left edge. Therefore, by condition (2) it must hold that the second element of $\mathtt{succ}(x_{r},y_{r})$ is $s_{y}$. Now, suppose that $y_{r}\prec y$. Here, since $(x_{r},y_{r})$ is a predecessor of $(x,y)$, it follows that $(x_{r},y_{r})$ is the longest $y_{r}$-left edge. Applying conditions (1), (3), and (4) along the $(y_{r},y)$-path, we once again get that the second element of $\mathtt{succ}(x_{r},y_{r})$ must be $s_{y}$. For similar reasoning, we can deduce that the first element of $\mathtt{succ}(x_{\ell},y_{\ell})$ is $s_{x}$. Now, we observe that by conditions (1), (3), (4), and (5), every pair of adjacent siblings must have the same $\mathtt{succ}(\cdot)$ field. Therefore, every predecessor $(x',y')$ of $(x,y)$ must satisfy $\mathtt{succ}(x',y')=(s_{x},s_{y})=\mathtt{name}(x,y)$ which concludes our proof.
	\end{proof}
	We can now prove the completeness and soundness of our protocol.
	\begin{lemma}\label{lemma:path-outerplanar-nesting}
		The described nesting verification admits perfect completeness and a soundness error of $2^{-\Theta(\ell)}$.
	\end{lemma}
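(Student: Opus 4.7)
For completeness, the plan is to verify that the honest prover can set labels consistent with the true planar embedding. For each non-path edge $e=(u,v)$ with $u\prec v$, set $\mathtt{name}(e)=(s_u,s_v)$, set $\mathtt{succ}(e)$ to the name of the true successor of $e$ (with $\bot$ when $e$ has no real successor), and for each path-node $w$ set $\mathtt{above}(w)$ to the name of the unique edge that is first drawn above $w$. Longest $v$-left and $v$-right edges are marked according to the truth. Conditions (1)--(2) then follow from Observation~\ref{observation:path-outerplanar-pred}, which orders the predecessors of any edge as a left-to-right chain whose consecutive members share the same successor. Conditions (3)--(5) hold by definition of $\mathtt{above}$, noting that, walking along $P$, the first-edge-above changes exactly at endpoints of non-path edges.

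For soundness, suppose $G$ is not path-outerplanar with respect to the committed Hamiltonian path $P$. By Observation~\ref{observation:path-outerplanar-longest-labeling} combined with a union bound over the $n$ vertices, we may assume except with failure probability $n\cdot 2^{-c\ell}=2^{-\Theta(\ell)}$ that every longest $v$-left/right edge is marked correctly. Fix a crossing pair $(u_1,v_1),(u_2,v_2)$ with $u_1\prec u_2\prec v_1\prec v_2$ whose $(u_1,v_2)$-interval is chosen \emph{minimal}; then the subgraph induced on nodes strictly between $u_1$ and $v_2$ is properly nested, so Observation~\ref{observation:path-outerplanar-succ-labeling} implies (conditional on acceptance) that all $\mathtt{succ}$-fields of edges interior to this region equal the names of their true successors.

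The heart of the argument is to track $\mathtt{above}(\cdot)$ along the sub-path from $u_2$ to $v_1$, a stretch covered by both crossing edges. Conditions (3)--(5) dictate precisely when $\mathtt{above}(\cdot)$ may change as the path is traversed: it changes only at endpoints of non-path edges, and at each such endpoint the change is forced either by the $\mathtt{succ}$ of the longest local edge or by the $\mathtt{name}$ of the smallest local edge. Starting at $u_2$, where the longest right mark is $(u_2,v_2)$, one shows that the walk forces $\mathtt{above}$ to encode an ``outer edge stack'' whose topmost entry must eventually be $\mathtt{name}(u_2,v_2)$; starting symmetrically at $v_1$, the topmost entry must be $\mathtt{name}(u_1,v_1)$. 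Since these two forced values must agree at a common node in the middle, and $(u_1,v_1)\neq(u_2,v_2)$, the agreement requires a collision among the independent random strings $s_{u_1},s_{v_1},s_{u_2},s_{v_2}$, which happens with probability $2^{-\Theta(\ell)}$.

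The main obstacle I anticipate lies in executing the stack-propagation step in the third paragraph rigorously, in particular when additional non-path edges inside the minimal crossing region interrupt $\mathtt{above}(\cdot)$: one must chase the $\mathtt{succ}$-chain through the interior --- whose truthfulness is guaranteed by Observation~\ref{observation:path-outerplanar-succ-labeling} together with the minimality of the chosen crossing --- back to one of the two crossing edges. Once this bookkeeping is handled, a union bound over the polynomially many candidate collision events delivers the desired $2^{-\Theta(\ell)}$ soundness error for sufficiently large constant $c$.
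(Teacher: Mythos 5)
Your completeness argument matches the paper's, and your idea of localizing to a minimal crossing region and chasing $\mathtt{succ}$/$\mathtt{above}$ labels is in the same spirit as the paper's soundness proof. However, there is a genuine gap in the probability accounting, and it hides exactly the point the paper has to work for. Twice you control the error by a union bound over polynomially many events: once over the $n$ vertices to assume all longest-edge marks are correct ("failure probability $n\cdot 2^{-c\ell}=2^{-\Theta(\ell)}$"), and once at the end over "polynomially many candidate collision events." In the regime where this lemma is actually invoked, $\ell=\Theta(\log\log n)$, so $2^{-c\ell}=\log^{-c}n$ and $n\cdot 2^{-c\ell}$ (or $\mathrm{poly}(n)\cdot 2^{-c\ell}$) is not $2^{-\Theta(\ell)}$ --- it is not even $o(1)$. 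The first instance is easily repaired (no union bound is needed: if \emph{some} longest mark is wrong, Observation \ref{observation:path-outerplanar-longest-labeling} applied to that single node already bounds the acceptance probability by $2^{-c\ell}$), but the second is not, and it is where the real difficulty lies. The prover assigns $\mathtt{name}$, $\mathtt{succ}$ and $\mathtt{above}$ \emph{after} seeing all the sampled strings $s_v$, so a cheating prover is adaptive, and you cannot argue soundness by enumerating candidate collisions among pairs of random strings and union bounding. The paper's proof avoids this by showing that acceptance forces an equality $s_{u}=s_{u'}$ (resp.\ $s_u=s_w$) in which the second node is \emph{fully determined by the longest-left/right markings}, i.e., by information the prover commits to before the names are sampled; this pins the analysis down to a single collision event of probability $2^{-c\ell}$, with no polynomial blow-up. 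Your sketch never isolates such a pre-committed target; it asserts a collision "among $s_{u_1},s_{v_1},s_{u_2},s_{v_2}$," which is not what the argument ultimately yields (in the paper's second case the forced collision involves an edge $(w,w')$ that need not be one of the crossing edges).

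Relatedly, the "outer edge stack" propagation in your third paragraph --- which you flag as the anticipated obstacle --- is precisely the content of the paper's proof and is not routine bookkeeping: the paper splits into two cases according to whether the crossed edge is the longest $v$-left edge or the longest $u$-right edge, constructs an explicit chain of edges (repeatedly taking the longest left edge at the nearest node carrying one) whose $\mathtt{succ}$ fields are all forced equal by conditions (1)--(5) together with Observation \ref{observation:path-outerplanar-succ-labeling}, and only then extracts the single forced string equality described above. As written, your proposal asserts the conclusion of this step rather than proving it, and even granting it, the final union bound does not deliver the claimed $2^{-\Theta(\ell)}$ soundness error for the parameters at which Lemma \ref{lemma:sorting-to-path-outerplanarity} is applied.
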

	\begin{proof}
		We start from completeness. First, we note that by Observation \ref{observation:path-outerplanar-longest}, the honest prover can mark each edge $(u,v)$ as longest $u$-right/$v$-left correctly. Furthermore, observe that the feasibility of the $\mathtt{name}(\cdot)$, $\mathtt{succ}(\cdot)$, and $\mathtt{above}(\cdot)$ labels assigned by the honest prover is guaranteed by Observation \ref{observation:path-outerplanar-pred}. Now, consider some node $v\in V$ and let $e_{k'}^{-}=(v,u_{k'}^{-}),\dots, e_{1}^{-}=(v,u_{1}^{-}),e_{1}^{+}=(v,u_{1}^{+}),\dots ,e_{k}^{+}=(v,u_{k}^{+})$ be its incident non-path edges such that $u_{k'}^{-}\prec \dots \prec u_{1}^{-}\prec v\prec u_{1}^{+}\prec \dots \prec u_{k}^{+}$. Given the labels assigned by the honest prover, the orderings $e_{1}^{-},\dots, e_{k'}^{-}$ and $e_{1}^{+},\dots ,e_{k}^{+}$ defined on the left and right edges of $v$ satisfy all the verification conditions, thus causing the verifier to accept.
		
		We now establish the soundness guarantee. Let us define $(u,v)$ as an edge that admits a crossing edge $(u',v')$ such that  $u\prec u'\prec v\prec v'$ but not a crossing edge $(u',v')$ such that $u'\prec u\prec v'\prec v$. That is, $(u,v)$ is not crossed by edges that has an endpoint to the left of $u$. Moreover, assume that $(u,v)$ is the deepest nested such edge, i.e., every edge $(x,y)\neq (u,v)$ where  $u\preceq x\prec y\preceq v$  does not admit a crossing edge. Observe that if there exists a pair of crossing edges in the graph, then there exists an edge $(u,v)$ satisfying the assumptions stated above.
		
		We start from the case where $(u,v)$ is the longest $v$-left edge. Define $u\prec u'\prec v$ to be the rightmost node incident on a right edge that crosses $(u,v)$ and define $(u',v')$ as the longest $u'$-right edge (by definition, $(u',v')$ crosses $(u,v)$). Let $e_{1}=(u_{1},v),e_{2}=(u_{2},v),\dots, e_{k}=(u_{k},v)$ be $v$'s left edges ordered such that  $u=u_{k}\prec \dots \prec u_{2}\prec u_{1}\prec v$. Observe that by the assumptions on $(u,v)$, it follows that $u'\preceq u_{k-1}$. Moreover, all edges that are drawn below $e_{k-1}$ are properly nested. Therefore, Observation \ref{observation:path-outerplanar-succ-labeling} implies that if the verifier accepts the instance, then $\mathtt{succ}(e_{i})=\mathtt{name}(e_{i+1})$ for every $1\leq i<k-1$. Furthermore, recall that $(u,v)$ is marked as the longest $v$-left edge. Thus, for condition (2) to be satisfied at node $v$, it must also hold that $\mathtt{succ}(e_{k-1})=\mathtt{name}(e_{k})=(s_{u},s_{v})$. 
		
		To show that the verifier is likely to reject in this case, the idea is to define a sequence of edges that must agree with $e_{k-1}$ on their $\mathtt{succ}(\cdot)$ value, but also must have $s_{u'}$ as their $\mathtt{succ}(\cdot)$ value's first element. This implies that the verifier rejects unless $s_{u}=s_{u'}$ which happens with probability $1/\poly\log n$. The sequence $(x_{1},y_{1}),\dots ,(x_{t},y_{t})$ of edges is defined as follows. Start by taking $x_{1}\preceq u_{k-1}$ to be the closest node to $u_{k-1}$ incident on a left edge and set $(x_{1},y_{1})$ as the longest $x_{1}$-left edge. Then, take $x_{2}\preceq y_{1}$ to be the closest node to $y_{1}$ incident on a left edge and set $(x_{2},y_{2})$ as the longest $x_{2}$-left edge. Continue this process until reaching $y_{t}$ such that all nodes $w$ such that $u'\prec w\preceq y_{t}$ are not incident on a left edge. Notice that the sequence construction is feasible since by our assumption on $(u,v)$, no edge within $(u,v)$ crosses $(u',v')$ (and thus, the sequence is entirely to the right of $u'$). Moreover, since $u'$ is the rightmost node incident on a right edge crossing $(u,v)$, it follows that every edge $(x_{i},y_{i})$ in the sequence is the longest $y_{i}$-right edge. We note that the verification conditions dictate that every pair of adjacent edges in the sequence should have the same $\mathtt{succ}(\cdot)$ value and that $\mathtt{succ}(x_{1},y_{1})=\mathtt{succ}(e_{k-1})$. On the other hand, for $u'$ to satisfy condition (4), the first element of $\mathtt{succ}(x_{t},y_{t})$ must be $s_{u'}$ which concludes the soundness for this case. 
		
		We move on to the case where $(u,v)$ is not the longest $v$-left edge. If $(u,v)$ is also not the longest $u$-right edge, then by the construction the verifier rejects. So, assume that $(u,v)$ is the longest $u$-right edge. Let $u\prec u'\prec v$ be the leftmost node incident on an edge crossing $(u,v)$ and let $(u',v')$ be the longest $u'$-right edge (by definition, $(u',v')$ crosses $(u,v)$). By similar measures to the previous case, it is possible to find a sequence $(x_{1},y_{1}),\dots ,(x_{t},y_{t})$ of edges such that must satisfy $\mathtt{succ}(x_{1},y_{1})=\dots= \mathtt{succ}(x_{t},y_{t})=\mathtt{succ} (u',v')$; and the first element of $\mathtt{succ}(x_{i},y_{i})$ is $s_{u}$ for each $1\leq i\leq t$. On the other hand, by similar reasoning to the one presented in the proof of Observation \ref{observation:path-outerplanar-longest-labeling}, it must hold that $\mathtt{succ} (u',v')=\mathtt{name}(w,w')$ for some edge $(w,w')$ such that $v'\preceq w'$. Moreover, this edge is fully determined by the marking of longest left and right edges by the prover (and in particular before the sampling of names). Recall that $v\prec v'\preceq w'$ and that $(u,v)$ is the longest $u$-right edge. Therefore, it must hold that $w\neq u$ which means that the probability of $s_{u}=s_{w}$ is at most $2^{-c\cdot \ell}$. 
	\end{proof}
	\paragraph{Analysis of the protocol.}
	By construction, the proof size of the protocol is $O(\ell)$. Moreover, all stages apart from the black-box use of the LR-sorting protocol admit perfect completeness and a soundness error of $2^{-\Theta(\ell)}$. Thus, by union bound arguments, the completeness error of the protocol is $\epsilon_{c}$ and the soundness error is $\epsilon_{s}+2^{-\Theta(\ell)}$. Of course, taking a sufficiently large $c$, we can have a soundness error of $\epsilon_{s}+2^{-\ell}$ as desired. Finally, regarding the number of interaction rounds, we note that all stages can be executed in parallel without affecting the correctness of the algorithm. It is straightforward to see that the stages committing to a path and nesting can be implemented in $3$ interaction rounds. Since the LR-sorting protocol requires $t$ rounds, we get that in total the protocol runs in $\max\{t,3\}$ rounds.  

	\section{Outerplanarity}\label{section:outerplanar}
	In this section, we extend the protocol of Theorem \ref{theorem:path-outerplanar} from path-outerplanar graphs to (general) outerplanar graphs. Particularly, we show the following theorem.
	To design the protocol of Theorem \ref{theorem:outerplanar}, we adapt the approach of \cite{BousquetFP24} which 
	 (i) shows that path-outerplanar graphs and biconnected outerplanar graphs are \emph{almost} the same; and (ii) uses a decomposition of the graph into its biconnected components. The following Theorem will be useful as part of the protocol of Theorem  \ref{theorem:outerplanar}.
	\begin{theorem}\label{theorem:biconnected-outerplanar}
		There exists a distributed interactive proof deciding if a graph is a biconnected outerplanar graph running in $5$ interaction rounds. The proof admits perfect completeness, a soundness error of $1/\poly\log n$, and a proof size of $O(\log \log n)$.
	\end{theorem}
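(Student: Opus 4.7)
The plan is to reduce biconnected outerplanarity to path-outerplanarity via the following structural characterization: $G$ is biconnected outerplanar if and only if it admits a Hamiltonian cycle $C$ such that for any edge $e\in C$, the graph $G$ is path-outerplanar with respect to the Hamiltonian path $P=C-e$ (with $e$ realized as the outermost non-path edge). The ``only if'' direction uses the classical fact that the outer face of a $2$-connected outerplanar graph is a Hamiltonian cycle; the ``if'' direction follows since a Hamiltonian cycle certifies $2$-connectivity, and a path-outerplanar drawing with one spanning edge added as the outermost arc yields an outerplanar embedding.

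Following this characterization, the protocol runs the path-outerplanarity protocol of Theorem \ref{theorem:path-outerplanar} on $G$, augmented by a constant-sized certificate that designates one edge $e^{*}\in E$ as the \emph{closing edge}. The path-outerplanarity sub-protocol already has the prover commit to a Hamiltonian path $P$ via the spanning-tree encoding of Lemma \ref{lemma:tree-advice}, whose validity is enforced by the amplified scheme of Lemma \ref{lemma:npy-spanning-tree}; we simply piggyback on that commitment. In parallel, the prover marks $e^{*}$ by a single bit in the edge-label simulation of Lemma \ref{lemma:edge-labels}. Each node $v$ then performs the following local checks: (i) at most one of $v$'s incident edges is marked as closing; (ii) if $v$ has exactly one $P$-neighbor (i.e., $v$ is a $P$-endpoint), then exactly one of $v$'s incident edges is marked as closing; and (iii) if $v$ has two $P$-neighbors, then none of $v$'s incident edges is marked as closing.

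Completeness is immediate: on a biconnected outerplanar $G$ with outer Hamiltonian cycle $C$, the honest prover takes $P=C-e^{*}$ for an arbitrary $e^{*}\in C$, marks $e^{*}$ as closing, and all local checks together with the path-outerplanarity protocol accept with probability $1$. For soundness, whenever every check passes, Lemma \ref{lemma:npy-spanning-tree} and Theorem \ref{theorem:path-outerplanar} jointly guarantee, up to error $1/\poly\log n$, that $P$ is a genuine Hamiltonian path of $G$ and that $G$ is path-outerplanar with respect to $P$. Conditions (i)--(iii) then force exactly one edge of $G$ to be marked as closing and its two endpoints to be precisely the two $P$-endpoints, so $G$ contains the Hamiltonian cycle $P\cup\{e^{*}\}$ and is outerplanar, whence biconnected outerplanar by the structural characterization.

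All auxiliary overhead is constant, so the proof size remains $O(\log\log n)$, the round complexity stays at $5$ (every added check fits within the rounds already used by the path-outerplanarity protocol), and the aggregate soundness error is $1/\poly\log n$ by a union bound. The only mild subtlety is that no single node can directly observe that the closing edge connects the two $P$-endpoints, which may be arbitrarily far apart in $G$; this non-locality is absorbed by tying the closing-edge marker to the locally detectable property of being a $P$-endpoint through the degree-based conditions (ii)--(iii), so that any deviation of the prover is caught locally at either a $P$-endpoint or an internal $P$-node.
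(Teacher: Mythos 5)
Your proposal is correct and follows essentially the same route as the paper: both reduce biconnected outerplanarity to the path-outerplanarity protocol via the characterization that a biconnected outerplanar graph is path-outerplanar with respect to a Hamiltonian path whose endpoints are joined by an edge, and then add a check that such a closing edge exists. Your degree-based local checks (i)--(iii) merely make explicit the endpoint-edge verification that the paper states in one line, so the two arguments coincide in substance.
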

	\begin{proof}
		A biconnected outerplanar graph can be drawn on the plane as a Hamiltonian cycle with all non-cycle edges drawn inside it without crossings. As observed in \cite{BousquetFP24}, this implies that a biconnected outerplanar graph is path-outerplanar with respect to a Hamiltonian path $P$ such that the endpoints of $P$ are connected by an edge. Therefore, we obtain a protocol for biconnected outerplanar graphs from the protocol for path-outerplanarity in Theorem \ref{theorem:path-outerplanar} simply by adding a verification condition that there is an edge between the endpoints of $P$.
	\end{proof}
	Towards proving Theorem \ref{theorem:outerplanar}, we make the following definitions. For a graph $G=(V,E)$, a node $v$ is referred to as a \emph{cut node} if it belongs to more than one biconnected component. The \emph{block-cut} tree of $G$ is defined to be a tree $T$ in which each node is associated with either a cut node in $G$ or a biconnected component of $G$. The edges of $T$ are defined so that each cut node $v\in V$ is connected to all biconnected components $C\subseteq V$ for which $v\in C$. Suppose that the block-cut tree $T$ is rooted at some biconnected component $R\subseteq V$. For each biconnected component $C\neq R$, we refer to the cut node which is the parent of $C$ in $T$ as \emph{$C$-separating}. We are now prepared to prove Theorem \ref{theorem:outerplanar}.
	\begin{proof}[Proof of Theorem \ref{theorem:outerplanar}]
		The prover computes the block-cut tree $T$ of the graph and roots it at some biconnected component $R\subseteq V$.  For each biconnected component $C\neq R$, define $P_{C}$ to be a Hamiltonian path of $G(C)$ that emerges from the $C$-separating node. Define the \emph{$C$-leader} to be the node that neighbors the $C$-separating node in $P_{C}$, let $e_{C}$ denote the edge between the $C$-separating node and the $C$-leader, and let $P'_{C}=P_{C}-\{e_{C}\}$ be the subpath of $P_{C}$ that starts from the $C$-leader. For the root component $R$, define $P_{R}$ to be some Hamiltonian path of $G(R)$, the $R$-leader as the leftmost node of $P_{R}$, and $P'_{R}=P_{R}$. We describe the protocol in three stages.
		
		The purpose of the first stage is to verify that every node which is not a cut node is adjacent only to nodes in its component. To that end, the prover assigns each node with two bits indicating if it is a cut node and if it is a $C$-leader for some component $C$. Additionally, the prover encodes the subpaths $P'_{C}$ and edges $e_{C}$ associated with biconnected components $C$ by means of Lemma \ref{lemma:tree-advice}. In response, the verifier at each node $v\in V$ that was marked as either cut node or leader draws a random bitstring $s_{v}$ of length $\Theta (\log\log n)$ and sends it to the prover. Then, the prover sends each node $v\in P'_{C}$ the values $\mathtt{sep}(v)$ and $\mathtt{lead}(v)$ which are the random bitstrings drawn by the $C$-separating node and the $C$-leader, respectively. Each node $v\in P'_{C}$ checks that its path neighbors received the same $\mathtt{sep}(\cdot)$ and $\mathtt{lead}(\cdot)$ values. In addition, if $v$ is not a cut node, than it checks that for every neighbor $u\in N(v)$, either $\mathtt{sep}(v)=\mathtt{sep}(u)$ and $\mathtt{lead}(v)=\mathtt{lead}(u)$; or $u$ is a cut node and $\mathtt{sep}(v)=s_{u}$. Finally, the $C$-leader $v$ checks that $\mathtt{sep}(v)=s(u)$ where $u$ is its neighbor on $e_{C}$.
		
		In the second stage, the nodes verify the tree structure of $T$. To that end, it suffices to check that the subgraph $F$ which is obtained by taking the union of paths $P_{C}$ over all biconnected components $C$, is a spanning tree of $G$. This verification is done by means of the protocol in Lemma \ref{lemma:npy-spanning-tree} amplified by means of a $\Theta (\log\log n)$-repetition.
		
		Finally, the prover needs to prove that each subgraph $G(C)$ induced by a biconnected component $C$ is an outerplanar graph. To that end, we would like to use the protocol of Theorem \ref{theorem:biconnected-outerplanar} on all subgraphs $G(C)$ in parallel. The obstacle here is that cut nodes may belong to many biconnected components. Thus, a naive implementation of these parallel executions may result in a large overhead to the label size of cut nodes. 
		
		We overcome the obstacle as follows. First, for each component $C$, the prover assigns each node $v\in C$, the value $d(C)$ defined as the distance from $C$ to $R$ in $T$ modulo $3$. Notice that each $C$-separating node receives two values --- $d(C)$ and $(d(C)-1)\bmod 3$. Checking the correctness of this assignment can be done by standard measures (see, e.g., \cite{BousquetFP24,KormanKP10}). Notice that the $C$ separating node is the only node in $C$ that was assigned the values $d(C)$ and $d(C)-1\bmod 3$. Therefore, each node $v\in C$ can determine which of its neighbors is the $C$-separating node. Let us denote by $v_{\mathtt{sep}}(C)$ the $C$-separating node. The the path-outerplanarity protocol on $G(C)$ is implemented as follows. The randomness of $v_{\mathtt{sep}}(C)$ is drawn by the $C$-leader passed to the rest of the nodes in $P'_{C}$ through the prover.  In addition, the labels that are meant to be assigned to $v_{\mathtt{sep}}(C)$ are deferred to all of its neighbors. This allows each neighbor $u$ of $v_{\mathtt{sep}}(C)$ such that $u\in C$ to simulate the nesting verification for its incident edges. We note that in this case, the verification of $v_{\mathtt{sep}}(C)\prec u$ is not necessary since by definition, $v_{\mathtt{sep}}(C)$ is the leftmost node of $P_{C}$. To summarize, this implementation allows the nodes to verify that $G(C)$ is a biconnected outerplanar graph for each component $C$ while maintaining the $O(\log \log n)$ proof size. 
		
		Overall, the three stages can run in parallel for a total of $5$ interaction rounds and each stage requires a proof size of $O(\log \log n)$, has perfect completeness, and a $1/\poly\log n$ soundness error.
		\end{proof}
		 
	\section{Planar Embedding and Planarity}\label{section:planar}
	In this section, we consider the \emph{planar embedding} verification problem on a graph $G=(V,E)$. In this problem, a drawing of the graph is given to the nodes in a distributed manner by assigning each node with a \emph{clockwise ordering} of its incident edges. More formally, for each node $v\in V$, a clockwise ordering of its incident edges $E(v)$ is given in the form of a bijection $\rho_{v}:E(v)\to \{0,\dots ,\deg(v)-1\}$ that maps each edge $e\in E(v)$ to a value $\rho_{v}(e)\in  \{0,\dots ,\deg(v)-1\}$. An edge $e'\in E(v)$ comes immediately after $e\in E(v)$ in the clockwise ordering if $\rho _{v}(e')=(\rho_{v}(e)+1)\bmod \deg(v)$. Let us denote $\rho(G)=\{\rho_{v}\mid v\in V\}$. The goal in this problem is to decide if $\rho(G)$ induces a combinatorial planar embedding of $G$, i.e., if $G$ can be drawn in accordance with the clockwise orderings in $\rho(G)$ such that no two edges cross. The main technical objective of the current section is to prove the following lemma which depicts a connection between path-outerplanarity and planar embedding.
	\begin{lemma}\label{lemma:path-outerplanarity-to-embedding}
		Suppose that there exists a distributed interactive proof for path-outerplanarity running in $t$ interaction rounds. Let $\ell$ be the proof size, $\epsilon_{c}$ be the completeness error, and $\epsilon_{s}$ be the soundness error of the path-outerplanarity protocol. Then, there is a distributed interactive proof for planar embedding running in $\max\{t,3\}$ rounds and admitting a proof size of $O(\ell)$, a completeness error of $\epsilon_{c}$, and a soundness error of $\epsilon_{s}+2^{-\ell}$.   
	\end{lemma}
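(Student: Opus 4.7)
The plan is to reduce planar embedding verification to path-outerplanarity via the classical Euler-tour construction. Given a rooted spanning tree $T$ of $G$ together with the clockwise orderings $\rho(G)$, one builds an auxiliary graph $G'$ as follows: each node $v \in V$ with tree-degree $d_T(v)$ is replaced by $d_T(v)$ copies $v^{(1)},\dots,v^{(d_T(v))}$, one for each ``angular sector'' of $\rho_v$ between consecutive tree edges; concatenating tree-edge traversals and intra-node transitions in DFS order guided by the $\rho_v$ produces a Hamiltonian path $P'$ of $G'$; and each non-tree edge $(u,v) \in E\setminus T$ is attached in $G'$ between the copies $u^{(i)}, v^{(j)}$ determined by the sectors in which it lies in $\rho_u$ and $\rho_v$, respectively. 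A standard combinatorial fact, implicit in the constructive proofs of \cite{FeuilloleyFMRRT21}, states that $\rho(G)$ is a valid planar embedding of $G$ if and only if the non-tree edges of $G'$ are properly nested with respect to $P'$, i.e., $G'$ is path-outerplanar.

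With this equivalence, the protocol works as follows. First, the prover commits to a rooted spanning tree $T$ of $G$ using the constant-sized labels of Lemma \ref{lemma:tree-advice}, and proves that $T$ is indeed a spanning tree via the protocol of Lemma \ref{lemma:npy-spanning-tree} amplified by $\Theta(\ell)$-fold parallel repetition, so that any choice other than a valid spanning tree is rejected except with probability $2^{-\ell}$. Each node $v$ can then locally identify its incident tree edges, compute $d_T(v)$, and determine the canonical ordering of its copies from $\rho_v$. For each non-tree edge $(u,v)$, both endpoints independently compute which of their copies it attaches to, using only $\rho_u, \rho_v$ and the tree-edge labels around them, so no additional prover advice is required to specify the edges of $G'$. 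The verifier then runs the hypothesized path-outerplanarity protocol on $G'$, with each node of $G$ simulating all of its copies in parallel. Since each copy contributes only $O(1)$ bits of structural overhead beyond the label used by the inner protocol, the overall proof size is $O(\ell)$.

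For completeness, if $\rho(G)$ is a valid planar embedding then $G'$ is path-outerplanar, so the honest prover succeeds in the inner protocol with probability at least $1 - \epsilon_c$. For soundness, if $\rho(G)$ is invalid then either the prover commits to a legitimate spanning tree $T$, in which case the induced $G'$ is not path-outerplanar and the inner protocol rejects with probability at least $1 - \epsilon_s$; or the prover commits to a subgraph that is not a spanning tree, which is caught by the amplified protocol of Lemma \ref{lemma:npy-spanning-tree} except with probability $2^{-\ell}$. A union bound yields the claimed soundness error $\epsilon_s + 2^{-\ell}$. The spanning-tree verification and the setup of $G'$ can each be executed within $3$ rounds and interleaved with the $t$-round inner protocol, giving a total of $\max\{t,3\}$ interaction rounds.

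The main technical obstacle will be to formally justify the combinatorial equivalence between the validity of $\rho(G)$ and the path-outerplanarity of the induced $G'$, and to confirm that the Euler-tour adjacencies together with the non-tree edge attachments can be determined purely from locally available information at each node---crucially, that the choice of spanning tree $T$, while free to the prover, cannot allow a cheating prover to turn an invalid embedding into a path-outerplanar $G'$. Once this equivalence is in place, the remaining details (simulation of copies, label bookkeeping, round count, and the union bound) are routine.
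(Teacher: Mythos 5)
Your overall route is the same as the paper's: the Euler-tour reduction $h(G,T,\rho(G))$ with copies of each node, the equivalence with path-outerplanarity (Lemma~\ref{lemma:reduction}, extracted from \cite{FeuilloleyFMRRT21}), committing to $T$ via Lemma~\ref{lemma:tree-advice}, verifying it via an $\Theta(\ell)$-amplified run of Lemma~\ref{lemma:npy-spanning-tree}, and a union bound giving $\epsilon_{s}+2^{-\ell}$ in $\max\{t,3\}$ rounds. However, there is a genuine gap in your proof-size accounting, and it sits exactly in the step you dismiss as ``routine label bookkeeping.'' You let each node of $G$ ``simulate all of its copies in parallel'' and claim only $O(1)$ structural overhead per copy. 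But in the inner path-outerplanarity protocol the prover must assign a label of size up to $\ell$ to \emph{each} copy; if node $v$ itself carries the labels of all $\chi(v)+1$ copies $x_{0}(v),\dots,x_{\chi(v)}(v)$, its label has size $\Theta(\deg_{T}(v)\cdot\ell)$, which is not $O(\ell)$ for high-degree nodes (and the statement is supposed to hold with no dependence on $\Delta$ --- indeed that dependence is exactly what distinguishes Theorem~\ref{theorem:embedding} from Theorem~\ref{theorem:planarity}).

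The paper's resolution is a delegation scheme: for $1\leq i\leq \chi(v)$ the label of the internal copy $x_{i}(v)$ is written into the label of the child $c_{i}(v)$, which is also responsible for drawing the randomness of $x_{i}(v)$ and for checking that the prover's answers are consistent with those coins; the node $v$ itself receives only the labels of $x_{0}(v)$, $x_{\chi(v)}(v)$, and their left/right path-neighbors. Since every node has one parent, each node ends up holding at most five copy-labels, giving $O(\ell)$, and $v$ can still see all of $x_{0}(v),\dots,x_{\chi(v)}(v)$ because the delegated labels sit on its neighbors. Because the same path-position is now represented at two different nodes of $G$, one also needs explicit cross-node consistency checks (e.g., that the label $v$ holds for the right neighbor of $x_{\chi(c_{i}(v))}(c_{i}(v))$ equals the label it holds for $x_{i}(v)$), so that a cheating prover cannot present different labels for the same copy to different simulating nodes; this check is part of the soundness argument, not an afterthought. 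A second, minor point: your copy count of $d_{T}(v)$ per node is off at the root, which is visited $\deg_{T}(r)+1$ times by the Euler tour; with only $\deg_{T}(r)$ copies the traversal closes into a cycle rather than a Hamiltonian path. With the delegation scheme and these consistency checks added, the rest of your argument (the equivalence, the spanning-tree verification, the error and round analysis) matches the paper's proof.
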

	 Before proving Lemma \ref{lemma:path-outerplanarity-to-embedding}, we note that it leads to the Theorem \ref{theorem:embedding}.
	\begin{proof}
		The protocol is obtained by plugging the path-outerplanarity protocol of Theorem \ref{theorem:path-outerplanar} into the statement of Lemma \ref{lemma:path-outerplanarity-to-embedding}. 
	\end{proof}
	We also consider the \emph{planarity} problem in which the goal is to decide whether a given graph $G$ is planar. A reduction between the problems is given in the following lemma.
	\begin{lemma}\label{lemma:embedding-to-planarity}
		Suppose that there exists a distributed interactive proof for planar embedding running in $t$ interaction rounds. Let $\ell$ be the proof size, $\epsilon_{c}$ be the completeness error, and $\epsilon_{s}$ be the soundness error of the planar embedding protocol. Then, there is a distributed interactive proof for planarity running in $t$ rounds and admitting a proof size of $\ell+O(\log \Delta)$, a completeness error of $\epsilon_{c}$, and a soundness error of $\epsilon_{s}$.   
	\end{lemma}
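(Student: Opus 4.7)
The plan is a direct reduction from planarity to planar embedding: given a graph $G$, the prover first commits to a clockwise ordering $\rho=\{\rho_{v}\}_{v\in V}$ that it claims to induce a planar embedding of $G$, and the nodes then invoke the given planar embedding protocol on the instance $(G,\rho)$. Completeness is immediate since every planar graph admits a planar embedding, so an honest prover can supply a valid $\rho$. Soundness is likewise immediate: if $G$ is non-planar, no $\rho$ can make $(G,\rho)$ a yes-instance of planar embedding, hence the underlying protocol rejects with probability at least $1-\epsilon_{s}$.

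The main implementation task is to have the prover communicate $\rho_{v}$ efficiently. I would assign each edge $e=(u,v)$ an edge label storing the pair $(\rho_{u}(e),\rho_{v}(e))$, which fits in $O(\log\Delta)$ bits since $\rho_{v}(e)\in\{0,\ldots,\deg(v)-1\}$. Each node then reads the $\rho_{v}$-values off of its incident edge labels and performs a purely local check that they form a bijection with $\{0,\ldots,\deg(v)-1\}$, rejecting otherwise; a successful check yields exactly the local input required by the planar embedding protocol. To turn the $O(\log\Delta)$-bit edge labels into node labels with only constant overhead, I would invoke Lemma~\ref{lemma:edge-labels}.

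For the complexity accounting, the commitment to $\rho$ can be piggybacked onto the prover's first message of the planar embedding protocol, so the number of interaction rounds remains $t$; the proof size becomes $\ell+O(\log\Delta)$ as claimed, and the completeness and soundness errors are inherited unchanged from the planar embedding protocol.

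The one subtlety I expect to be the main obstacle is that Lemma~\ref{lemma:edge-labels} is formally stated for classes of planar graphs, whereas a cheating prover on a non-planar instance may not be able to produce a valid $3$-forest decomposition. I would address this by having the verifier explicitly check that every incident edge of every node appears as a parent- or child-edge in at least one of the three forests underlying the edge-label simulation; a cheating prover either fails this covering check and is rejected, or else commits to \emph{some} $3$-forest decomposition, in which case the simulation proceeds with well-defined edge labels and the planar embedding protocol's soundness takes over. This supplementary check is local, adds only a constant number of bits, and requires no additional rounds.
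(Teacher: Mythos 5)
Your proposal matches the paper's proof essentially verbatim: the prover commits to a clockwise ordering $\rho$, writes the pair $(\rho_{u}(e),\rho_{v}(e))$ on each edge via the forest-based edge-label simulation of Lemma~\ref{lemma:edge-labels} for an $O(\log\Delta)$ overhead, and the planar embedding protocol is then run on $(G,\rho)$, with completeness and soundness inherited because $G$ is planar if and only if some $\rho$ induces a planar embedding. Your extra local check that the decoded values form a bijection (guarding against a cheating prover's invalid forest decomposition on non-planar instances) is a sound precaution that the paper leaves implicit.
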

	\begin{proof}
		Given a planar graph $G=(V,E)$, the prover first computes a combinatorial planar embedding of $G$. Let $\rho(G)=\{\rho_{v}\mid v\in V\}$ be a collection of bijections $\rho_{v}:E(v)\to \{0,\dots ,\deg(v)-1\}$ that encode the clockwise orderings of the computed embedding as described above. The idea is to have the prover send each node $v\in V$ the values $\rho_{v}(e)$ of its incident edges and then use the protocol of Theorem \ref{theorem:embedding} to prove that they induce a valid embedding. So, it remains to explain how the prover can pass $\rho_{v}$ to $v$ using $O(\log \Delta)$ bits.
		
		Recall that by Lemma \ref{lemma:edge-labels}, the prover and verifier can simulate edge-labels in the graph $G$ incurring only a constant overhead to the proof size. Moreover, as part of the construction, the prover encodes a decomposition of the edges into three rooted forests (such that each node knows its parent in each forest). Based on that, the prover can provide each node $v$ with the values $\rho_{v}(e)$ of its incident edges $e\in E(v)$ as follows. Consider an edge $e=(u,v)\in E$ and assume w.l.o.g.\ that $u$ is $v$'s parent in the forest decomposition of $G$. The prover writes the (ordered) pair $(\rho_{u}(e),\rho_{v}(e))$ to $e$'s label. Notice that this encoding is achieved using only $O(\log \Delta)$ bits for each node and that consequently, each node can learn the values $\rho_{v}(e)$ of all its incident edges $e\in E(v)$. 
		
		The correctness relies on the fact that $G$ is planar if and only if it admits a combinatorial planar embedding. Therefore, if $G$ is planar, then the prover can provide clockwise orderings that correspond to a combinatorial planar embedding of $G$. The completeness now follows from the completeness of the protocol stated in Theorem \ref{theorem:embedding}. Regarding soundness, if $G$ is not planar, then no valid combinatorial planar embedding of $G$ exists. Thus, any clockwise orderings assigned to the nodes do not induce a combinatorial planar embedding. The soundness now follows from the soundness of the protocol stated in Theorem \ref{theorem:embedding}.
	\end{proof}
	\noindent Theorem \ref{theorem:planarity} follows by Lemma \ref{lemma:embedding-to-planarity} and Theorem \ref{theorem:embedding}.
	We move on to describe the protocol of Lemma \ref{lemma:path-outerplanarity-to-embedding} based on a reduction from planar embedding to path-outerplanarity. The reduction structure is based on the one presented in \cite[Section 3.2]{FeuilloleyFMRRT21} for the planarity problem. We go over its details (adapted to the planar embedding problem). Given a graph $G=(V,E)$, a spanning tree $T$ rooted at some node $r$, and clockwise orderings $\rho(G)$, the reduction constructs a graph $h(G,T,\rho(G))$ which is composed of a path $P(G,T,\rho(G))$ and a set $Q(G,T,\rho(G))$ of edges between non-consecutive path nodes. 
	
	We first describe the construction of $P(G,T,\rho(G))$. For every node $v\in V$, let $\deg_{T}(v)$ and $\parent(v)$ denote $v$'s degree and parent in $T$, respectively. Let us denote by $\chi(v)$ the number of $v$'s children in  $T$, i.e., $\chi(v)=\deg_{T}(v)$ if $v=r$; and $\chi(v)=\deg_{T}(v)-1$ otherwise. If $v\neq r$, then for each $1\leq i\leq \chi(v)$, let $c_{i}(v)$ be $v$'s child for which $(v,c_{i}(v))$ is the $i$-th tree edge one encounters when following a clockwise ordering of the edges incident on $v$ starting from $(v,\parent(v))$. For $r$, we simply define $c_{i}(r)$ as $r$'s child for which the value $\rho_{r}(r,c_{i}(r))$ is the $i$-th smallest. Now, we can define the path $P(G,T,\rho(G))$ as the \emph{Euler tour} of $T$ starting from the root such that for each $v\in V$, its children are traversed in order $c_{1}(v),\dots ,c_{\chi(v)}(v)$. This means that for each $v\in V$, the path $P(G,T,\rho(G))$ contains $\chi(v)+1$ nodes $x_{0}(v),\dots,x_{\chi(v)}(v)$ and the path order is defined according to the following rules: (1) $x_{0}(r)$ is the leftmost node; (2) $x_{\chi(r)}(r)$ is the rightmost node; (3) for every non-leaf node $v\in V$ and $0\leq i<\chi(v)$, it holds that $x_{i}(v)$ is the left neighbor of $x_{0}(c_{i+1}(v))$; and (4) for every non-leaf node $v\in V$ and $0< i\leq \chi(v)$, it holds that $x_{i}(v)$ is the right neighbor of $x_{\chi(c_{i}(v))}(c_{i}(v))$.
	
	The set $Q(G,T,\rho(G))$ of non-path edges is defined based on the non-tree edges in $G$. For an edge $e=(u,v)\in E-T$, let $t(e,u)$ (resp., $t(e,v)$) be the first tree edge that one encounters when following a counterclockwise ordering with respect to $\rho_{u}$ (resp., $\rho_{v}$) starting from $e$. For every node $v\in V$ and non-tree edge $e\in E-T$ incident on $v$, we define the value $0\leq i(e,v)\leq \chi(v)$ as follows. If $t(e,v)=(v,\parent(v))$, then $i(e,v)=0$; otherwise $i(e,v)$ is defined as the index that satisfies $t(e,v)=(v,c_{i(e,v)}(v))$. The construction is completed by defining $Q(G,T,\rho(G))$ as the set of edges $\{(x_{i(e,u)}(u),x_{i(e,v)}(v))\mid e=(u,v)\in E-T\}$. Refer to Figure \ref{figure:reduction} for a pictorial example of the reduction. 
	
	\begin{figure}
		\centering
		\begin{subfigure}{0.3\textwidth}
			\includegraphics[width=1\linewidth]{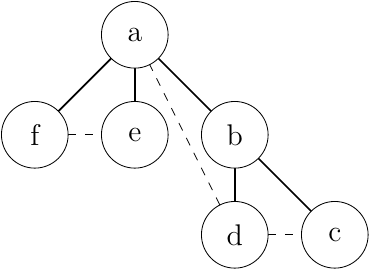}
			\caption{An embedded planar graph $G$ with a spanning tree $T$ (marked by the solid edges).}
			\label{fig:Ng1} 
		\end{subfigure}\vspace{0.4cm}
		\begin{subfigure}{0.7\textwidth}
			\includegraphics[width=1\linewidth]{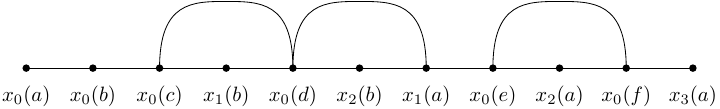}
			\caption{The graph $h(G,T,\rho(G))$.}
			\label{fig:Ng2}
		\end{subfigure}
		\caption{A reduction from planar embedding to path-outerplanarity.}
		\label{figure:reduction}
	\end{figure}


	The following lemma is established in \cite{FeuilloleyFMRRT21}.\footnote{Although the statement of Lemma \ref{lemma:reduction} itself does not explicitly appear in their paper, it can be derived from \cite[Proposition $4$]{FeuilloleyFMRRT21} and the constructive proof of \cite[Proposition $3$]{FeuilloleyFMRRT21}.}
	\begin{lemma}[\cite{FeuilloleyFMRRT21}]\label{lemma:reduction}
		The clockwise orderings $\rho(G)$ induce a planar embedding on $G$ if and only if $h(G,T,\rho(G))$ is a path-outerplanar graph w.r.t.\  $P(G,T,\rho(G))$ (i.e., the edges of $Q(G,T,\rho(G))$ are properly nested within $P(G,T,\rho(G))$).
	\end{lemma}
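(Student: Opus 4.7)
The plan is to establish the equivalence by interpreting $P(G,T,\rho(G))$ as the boundary of a thin tubular neighborhood of $T$ embedded in a planar drawing of $G$, and the non-path edges $Q(G,T,\rho(G))$ as non-tree edges drawn in the complement of this neighborhood. Under this interpretation, path-outerplanarity of $h(G,T,\rho(G))$ corresponds exactly to the non-crossing property of a planar drawing of $G$ realizing $\rho(G)$, so both directions of the biconditional follow from one geometric correspondence.

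For the forward direction, I would assume $\rho(G)$ induces a combinatorial planar embedding of $G$ and fix a planar drawing $\mathcal{D}$ of $G$ realizing this rotation system. Let $\mathcal{N}$ be a thin tubular neighborhood of the spanning tree $T$ inside $\mathcal{D}$; its boundary $\partial \mathcal{N}$ is a simple closed Jordan curve. Following $\partial \mathcal{N}$ starting from the root yields an Euler tour of $T$ that visits each node $v$ exactly $\chi(v)+1$ times, once per angular sector between consecutive tree edges at $v$, with the children visited in the order $c_1(v),\dots,c_{\chi(v)}(v)$ dictated by $\rho_v$. These visits correspond precisely to $x_0(v),\dots,x_{\chi(v)}(v)$. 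By construction, a non-tree edge $e=(u,v)\in E-T$ leaves $u$ in the sector indexed $i(e,u)$ and enters $v$ in the sector indexed $i(e,v)$, so it can be drawn outside $\mathcal{N}$ as an arc from the point $x_{i(e,u)}(u)$ to $x_{i(e,v)}(v)$ on $\partial \mathcal{N}$. Since $\mathcal{D}$ is planar, the collection of all such arcs is pairwise non-crossing. Cutting $\partial \mathcal{N}$ at the root and straightening it into a horizontal line turns this family of arcs into a properly nested collection of chords above $P(G,T,\rho(G))$, witnessing path-outerplanarity of $h(G,T,\rho(G))$.

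For the reverse direction, I would assume $h(G,T,\rho(G))$ is path-outerplanar with respect to $P(G,T,\rho(G))$ and fix a drawing of it with non-path edges properly nested above $P$. Reversing the previous construction, I identify for each $v\in V$ the copies $x_0(v),\dots,x_{\chi(v)}(v)$ back to a single point and merge each tree edge appearing twice on $P$ into a single drawn edge; the resulting planar image is a drawing of $G$, and the cyclic order of edges around $v$ in this drawing coincides with $\rho_v$ by the very definitions of $c_i(v)$ and $i(e,v)$. The main obstacle in turning this sketch into a clean proof is the careful bookkeeping required to show that the sector indices produced by the Euler tour (i.e., which of the $\chi(v)+1$ sectors around $v$ is numbered $i$) align with the algebraic definitions in the construction — in particular the interplay between the clockwise ordering $\rho_v$ and the counter-clockwise rule used to define $t(e,v)$, and the special treatment of the root and of the ``parent sector'' at each non-root vertex. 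This is essentially forced by the definitions, but it is the step that glues the planar picture to the path-outerplanar picture and therefore deserves the most care.
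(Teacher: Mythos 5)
The paper does not actually prove this lemma: it is imported verbatim from \cite{FeuilloleyFMRRT21}, with a footnote explaining that it follows from Proposition~4 and the constructive proof of Proposition~3 there, so there is no in-paper argument to compare against. Your sketch reconstructs essentially that underlying argument --- take a planar drawing realizing $\rho(G)$, thicken the spanning tree $T$ to a disk whose boundary is traversed by the Euler tour, and observe that truncated non-tree edges become pairwise non-crossing chords of the complementary disk, which after cutting at the root become properly nested arcs above $P(G,T,\rho(G))$ --- and this is the correct approach. Two places in your write-up are thinner than they should be if this were to stand as a full proof. First, the correspondence ``one copy $x_i(v)$ per angular sector'' is slightly off at the root: $r$ has $\chi(r)$ sectors but $\chi(r)+1$ copies, because the sector between the last and first child edge is split by the cut point; together with the interplay between the clockwise rotation $\rho_v$ and the counterclockwise rule defining $t(e,v)$ (note that $i(e,r)=0$ never occurs, consistently with no non-path edge attaching to $x_0(r)$), this is exactly the bookkeeping you flag but do not carry out, and it is where a mismatch of conventions would actually break the statement. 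Second, the reverse direction is dismissed as ``reversing the previous construction''; one still has to argue that identifying the copies $x_0(v),\dots,x_{\chi(v)}(v)$ and merging the two path-edge copies of each tree edge can be performed in the plane without creating crossings (the identification pattern on the boundary of the lower half-plane is laminar, so the quotient is again a sphere embedding) and that the resulting rotation at each $v$ is $\rho_v$. These are standard topological steps, so I view your proposal as a correct sketch of the cited argument rather than as containing a wrong idea, but a self-contained proof would need both items made explicit.
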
  
	We are now prepared to prove Lemma \ref{lemma:path-outerplanarity-to-embedding}.
	\begin{proof}[Proof of Lemma \ref{lemma:path-outerplanarity-to-embedding}]
		The protocol starts with the prover computing a spanning tree $T$ of $G$ rooted at some node $r\in V$. Recall that the prover is able to send an encoding of $T$ to the verifier by means of the construction in Lemma \ref{lemma:tree-advice}. The idea is to have the verifier locally construct the graph $h(G,T,\rho(G))$ so that it is able simulate the given path-outerplanarity on $h(G,T,\rho(G))$ and make its decision accordingly. In parallel, the prover proves that $T$ is a spanning tree of $G$ in two interaction rounds by means of the protocol of Lemma \ref{lemma:npy-spanning-tree}. The soundness error of this spanning tree verification is reduced to $2^{-\ell}$ by means of an $\Theta(\ell)$-repetition.
		
		It is left to show that the protocol can be simulated on $h(G,T,\rho(G))$. Consider a node $v\in V$. We explain how $v$ is able to execute the path-outerplanarity protocol for all its copies $x_{0}(v),\dots x_{\chi(v)}(v)$ in $h(G,T,\rho(G))$. First, observe that given $T$ and $\rho_{v}$, the node $v$ can deduce the value $i(e,v)$ for all non-tree edges $e\in E-T$. Thus, $v$ is able to defer all edge-labels assigned to $e$ to the execution of node $x_{i(e,v)}(v)$ in $h(G,T,\rho(G))$. 
		
		As for the node-labels, throughout the protocol, the labels of $x_{i}(v)$ are assigned to $c_{i}(v)$ for each $1\leq i\leq \chi(v)$. Furthermore, $c_{i}(v)$ is responsible for the randomness of $x_{i}(v)$ and checks that the labels assigned to $x_{i}(v)$ throughout the protocol are consistent with its sampled bits. In addition, $v$ is assigned with the labels of the following nodes in $h(G,T,\rho(G))$: (1) $x_{0}(v)$; (2) the left neighbor of $x_{0}(v)$ in $P(G,T,\rho(G))$; (3) $x_{\chi(v)}(v)$; and (4) the right neighbor of $x_{\chi(v)}(v)$ in $P(G,T,\rho(G))$. Notice that by construction, each node $v$ can see the labels of all copies $x_{0}(v),\dots x_{\chi(v)}(v)$ as well as the labels of their left and right path-neighbors. To complete the simulation, the verifier executes consistency checks to verify that if two nodes in $G$ are given the labels of the same node in $P(G,T,\rho(G))$, then these labels are identical. Specifically, each node $v$ checks that the label designated to the left neighbor of $x_{0}(c_{i}(v))$ in $P(G,T,\rho(G))$ is identical to the label of $x_{i-1}(v)$ for all $1\leq i\leq \chi(v)$; and that the label designated to the right neighbor of $x_{\chi(c_{i}(v))}(c_{i}(v))$ in $P(G,T,\rho(G))$ is identical to the label of $x_{i}(v)$ for all $1\leq i\leq \chi(v)$. If the consistency checks succeed, then $v$ is able to simulate the verification at every node $x_{i}(v)$. 
		
		We now analyze the complexity and correctness of the protocol. First, note that by definition, the number of interaction rounds associated with simulating the path-outerplanarity protocol is $t$, whereas the number of interaction rounds associated with encoding $T$ and verifying that it is a spanning tree is $3$. So, we can conclude that the total number of interaction rounds is $\max\{t,3\}$. For the proof size, notice that each node $v\in V$ receives the labels of at most $5$ nodes in $h(G,T,\rho(G))$. Additionally, edge-labels can be simulated incurring only constant overhead due to Lemma \ref{lemma:edge-labels}. Therefore, the proof size of simulating the protocol is $O(\ell)$. This is added to the $O(\ell)$ proof size associated with the spanning tree verification on $T$. Regarding correctness, by Lemma \ref{lemma:reduction} and union bound arguments, the completeness error remains $\epsilon_{c}$ (since the spanning tree verification admits a completeness error of $0$) and the soundness error becomes $\epsilon_{s}+2^{-\ell}$ (since the spanning tree verification admits a soundness error of $2^{-\ell}$).
	\end{proof}
	
\section{Series-Parallel and Graphs of Treewidth at Most $2$}\label{section:series-parallel}
This section is devoted to designing protocols for series-parallel graphs and graphs of treewidth at most $2$ as stated in Theorems \ref{theorem:series-parallel} and \ref{theorem:treewidth}. 

We start from the protocol for series-parallel graphs. To that end, we shall use a characterization of series-parallel graphs which is based on the notion of a \emph{nested ear decomposition} presented in \cite{Eppstein92}. We note that this is not the ``common'' definition. Nevertheless, it will be convenient for our purposes. A nested ear decomposition of a graph $G=(V,E)$ is a partition of its edge-set $E$ into simple paths (referred to as ears) $P_{1},\dots ,P_{k}$ such that the following conditions hold: 
\begin{enumerate}
	\item The two endpoints of each ear $P_{j}\neq P_{1}$ lie in some ear $P_{i}$, $i<j$. Let us denote by $\mathcal{E}_{i}$ the set of ears for which the two endpoints lie in $P_{i}$.
	\item The interior nodes of $P_{j}$ do not appear in any ear $P_{i}$, $i<j$. 
	\item For every $i\geq 1$, the ears of $\mathcal{E}_{i}$ are properly nested within $P_{i}$. Put otherwise, the ears of $\mathcal{E}_{i}$ can be drawn above $P_{i}$ without crossings.
\end{enumerate}
The following equivalence is established in \cite{Eppstein92}.
\begin{lemma}[\cite{Eppstein92}]\label{lemma:nested-ear}
	A graph is series-parallel if and only if it admits a nested ear decomposition.
\end{lemma}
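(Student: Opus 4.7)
The plan is to prove both directions by induction, matching each inductive step on one side to a single atomic operation on the other. Series-parallel graphs are built recursively from a single edge by series and parallel operations, while nested ear decompositions are built up ear-by-ear; I would exploit this parallelism to avoid any global case analysis.

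For the $(\Rightarrow)$ direction, I would induct on the recursive construction of a $2$-connected series-parallel graph $G$ with source $s$ and sink $t$. A key fact I would invoke is that $2$-connected series-parallel graphs admit a canonical (essentially unique) planar $st$-embedding. I would pick the rightmost $s$-$t$ path in this embedding as $P_{1}$. The remaining edges of $G$ decompose into sub-graphs attached to $P_{1}$ at pairs of vertices, each of which is itself $2$-connected series-parallel (between its two attachment points). Applying the inductive hypothesis to each sub-graph yields nested decompositions whose outermost ears attach as elements of $\mathcal{E}_{1}$, and the planar embedding guarantees proper nesting: any two attached sub-graphs either occupy disjoint intervals on $P_{1}$ or one is strictly contained within the other along $P_{1}$. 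For graphs that are not $2$-connected (such as trees), I would decompose along the block-cut tree and handle each biconnected component separately, appending the resulting ears in a consistent topological order that respects the block-cut structure.

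For the $(\Leftarrow)$ direction, I would induct on the number $k$ of ears. The base $k=1$ is a simple path, which is trivially series-parallel. For the inductive step, consider the highest-indexed ear $P_{k}$, whose endpoints $u,v$ lie in some earlier $P_{i}$. Deleting the interior vertices and edges of $P_{k}$ yields a graph $G'$ with $k-1$ ears that still satisfies conditions (1)--(3), hence is series-parallel by induction. Re-adding $P_{k}$ corresponds, within $G'$, to a parallel composition between the $u$-$v$ sub-path of $P_{i}$ and the new series-composed path $P_{k}$ itself; this is precisely an allowed series-parallel operation. The proper-nesting condition (3) is essential here, since it ensures that $P_{k}$ does not interleave with other ears in $\mathcal{E}_{i}$, so the parallel insertion creates no forbidden $K_{4}$ minor.

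The main obstacle will be condition (3) in the forward direction, i.e., showing that when we peel off $P_{1}$ and recurse on sub-graphs, the resulting ears remain properly nested under $P_{1}$. This hinges on the rigidity of the canonical $st$-planar embedding of $2$-connected series-parallel graphs, which is a nontrivial structural fact I would need to quote from the SP-tree literature. A backup plan, in case this proves fiddly, is to argue both directions through the characterization of series-parallel graphs as graphs with no $K_{4}$ minor: forward, exhibit a nested ear decomposition from an $s$-$t$ planar layout and argue directly that the absence of $K_{4}$ forces non-interleaving; backward, show that any $K_{4}$ minor would require four pairwise-crossing paths, contradicting nesting.
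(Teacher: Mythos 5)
The paper never proves this lemma---it is imported wholesale from Eppstein (1992)---so there is no in-paper argument to match your sketch against; judged on its own merits, the sketch has the right architecture (it is close in spirit to Eppstein's correspondence between the nesting structure of ears and the series-parallel decomposition tree) but gaps sit exactly where the content of the lemma lies. The most serious one is definitional. Your backup plan, and your remark about handling ``trees'' via the block-cut tree, treat series-parallel as synonymous with having no $K_4$ minor. Under that reading the statement is false: $K_{1,3}$ has no $K_4$ minor but admits no nested ear decomposition, since in any such decomposition every vertex other than the two endpoints of $P_1$ has degree at least $2$, so a graph with a nested ear decomposition has at most two vertices of degree one. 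The lemma concerns two-terminal series-parallel graphs (built from a single edge by series and parallel composition), and your block-cut-tree step also fails as described: if each biconnected component is decomposed ``separately'' and the ears appended, the first ear of a later block has an endpoint lying on no earlier ear, violating condition (1). One must instead thread $P_1$ through all the blocks (which is possible because the block-cut tree of a two-terminal series-parallel graph is a path between the terminals), not merely concatenate block decompositions.

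Within the two directions there are further unproved steps. Forward: there is no ``essentially unique'' planar embedding of a $2$-connected series-parallel graph (a theta graph already has several); what you actually need is an $s$-$t$ planar embedding together with the claims that every bridge of the rightmost $s$-$t$ path attaches to it at exactly two vertices and that each bridge, between its two attachment vertices, is itself two-terminal series-parallel. Neither claim is immediate, both need an induction over the series-parallel construction, and the bridges are typically not $2$-connected (a bare $u$-$v$ path, or a chain of blocks), so the induction cannot be restricted to $2$-connected graphs as you propose; it also needs the strengthened hypothesis that the decomposition produced has its first ear joining the prescribed terminals, since that first ear is what becomes the ear of $\mathcal{E}_1$. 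Backward: ``re-adding $P_k$ is a parallel composition between the $u$-$v$ sub-path of $P_i$ and $P_k$'' is not a series-parallel operation on $G'$ as it stands, because the interior of that sub-path generally carries earlier ears and their descendants, so the object being parallel-extended is not a path; whether the insertion preserves two-terminal series-parallel-ness with respect to the global terminals (the endpoints of $P_1$) is precisely what has to be proved, and the nesting condition must enter through a strengthened induction (for instance, that for any $u,v$ on an ear, the interval of that ear between $u$ and $v$ together with all ears nested recursively inside it is two-terminal series-parallel with terminals $u,v$). Similarly, ``a $K_4$ minor would require four pairwise-crossing paths'' is not a precise statement you can lean on. So the plan is a reasonable skeleton, but each of these steps needs to be supplied before it constitutes a proof; as the paper does, the safe course is to cite Eppstein.
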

We are now prepared to describe the protocol.
\begin{proof}[Proof of Theorem \ref{theorem:series-parallel}]
	Suppose that $G=(V,E)$ is a series-parallel graph and let $P_{1},\dots, P_{k}$ be its nested ear decomposition. The idea is to have the prover encode the decomposition for the verifier and prove that it is indeed a nested ear decomposition. Let us define the paths $P'_{1},\dots, P'_{k}$ such that $P'_{1}=P_{1}$; and $P'_{i}$ is the subpath made up only of the interior nodes of $P_{i}$ for every $1<i\leq k$ ($P'_{i}$ is empty if $P_{i}$ is a single edge and $i>1$). Observe that by condition (2) in the definition of a nested ear decomposition, it holds that $P'_{1},\dots, P'_{k}$ partitions $V$ into node-disjoint simple paths. We refer to the paths $P'_{i}$ as \emph{sub-ears}. An edge $(u,v)$ is said to be \emph{$u$-connecting} if $u$ is an endpoint of the sub-ear $P'_{i}$ and $v$ is an endpoint of the ear $P_{i}$ for $i>1$. 
	
	At the beginning of the protocol, the prover provides the following information to the verifier: (i) an encoding of $F=\bigcup_{i\in [k]}P'_{i}$ based on the construction in Lemma \ref{lemma:tree-advice}; (ii) for each $v\in V$, the prover assigns a bit indicating if $v\in P_{1}$; and (iii) using the edge labels, the prover informs each endpoint $u$ of $P'_{i}$, $i>1$, which of its incident edges is $u$-connecting.
	
	For the rest of the protocol, we focus on the verification process within a single connected component $Q$ of $F$. That is, $Q$ is claimed by the prover to be a sub-ear. The full protocol is obtained by executing the described protocol on all components if $F$ in parallel. To verify that $Q$ is a simple path, each node $v\in Q$ first checks that its degree in $Q$ is at most $2$. Then, the prover and verifier execute the protocol of Lemma \ref{lemma:npy-spanning-tree} on $G(Q)$ to verify the connectivity and acyclity of $Q$. Additionally, if $Q$ is not marked by the prover as $P_{1}$, then each endpoint $u$ of $Q$ checks that it has exactly $1$ incident edge marked as $u$-connecting. 
	
	We note that if the described verification succeeds, then $Q$ is a simple path (up to a soundness error). Moreover, by construction, condition (2) of nested ear decompositions is satisfied. It is now left to explain how conditions (1) and (3) are verified. We note that as a byproduct, the verification of condition (3) will also verify that the connecting edges of $Q$ are node-disjoint (i.e., the ear that contains $Q$ is a simple path and not a cycle).
	
	To verify condition (1), we have the leftmost node of $Q$ uniformly sample a number $r_{Q}\in [\log ^{c}n]$ and send it to the prover. The prover receives the random values and assigns labels as follows. Consider a sub-ear $P'_{i}\neq P'_{1}$ and let $j<i$ be the index for which the endpoints of $P_{i}$ lie in $P_{j}$. The prover assigns each $v\in P'_{i}$ the pair $(\mathtt{ear}(v),\mathtt{pred\_ear}(v))=(r_{P'_{i}},r_{P'_{j}})$ (where $\mathtt{pred\_ear}(v)=\bot$ if $i=1$). In response, the verifier at every $v\in Q$ checks that it received the same label as its neighbors in $Q$. Additionally, if $v$ is an endpoint of $Q$ and $Q\neq P_{1}$, then it checks that $\mathtt{pred\_ear}(v)=\mathtt{ear}(u)$ where $(u,v)$ is the edge that was marked as $v$-connecting. This scheme verifies condition (1) for ears that are not a single edge. For ears that are a single edge $e=(u,v)$, the verifier at $u$ and $v$ simply check that their labels are identical.
	
	Finally, to obtain a protocol for condition (3), we would like to execute a protocol similar to the path-outerplanarity protocol of Theorem \ref{theorem:path-outerplanar} on each ear $P_{i}$ in parallel. The implementation idea is to treat the ears $P_{j}$ with both endpoints in $P_{i}$ as non-path edges in the path-outerplanarity protocol. This is done as follows. Suppose that $P_{j}$ is an ear which is not a single edge such that its endpoints are $u\in P_{i}$ and $v\in P_{i}$. Whenever the protocol requires the prover to assign label $L(u,v)$ to edge $(u,v)$, the prover instead assigns $L(u,v)$ to the nodes in $P'_{j}$. Then, each node $w\in P'_{j}$ checks that it received the same label as its neighbors in $P'_{j}$. Notice that this mechanism allows $u$ and $v$ to read $L(u,v)$ (from the labels of their respective neighbors in $P'_{j}$) and simulate the protocol as if $(u,v)$ is an edge. 
	
	In conclusion, in the described protocol the prover provides a nested ear decomposition and proves that it satisfies conditions (1)--(3). Performing the stages in parallel leads to $5$ interaction rounds and the proof size remains $O(\log\log n)$.
\end{proof}
We now move on to describing a protocol for graphs of treewidth at most $2$ as specified in Theorem \ref{theorem:treewidth}. 
The protocol is facilitated by the following known characterization. 
\begin{lemma}[\cite{Bodlaender98}]\label{lemma:tw2-sp-equivalence}
	A graph $G$ has treewidth at most $2$ if and only if every biconnected component of $G$ is series-parallel.
\end{lemma}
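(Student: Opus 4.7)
The statement to establish is the Bodlaender characterization: a graph $G$ has treewidth at most $2$ iff every biconnected component of $G$ is series-parallel. The plan is to prove each direction separately by passing through the well-known Kuratowski-type characterization that treewidth $\le 2$ is equivalent to being $K_4$-minor-free, and to handle the passage between $G$ and its biconnected components via gluing of tree decompositions along cut vertices.

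\textbf{Step 1: Treewidth factors through biconnected components.} First I would prove the following auxiliary claim: if $G$ has at least one edge, then $\mathrm{tw}(G) = \max_B \mathrm{tw}(B)$, where $B$ ranges over the biconnected components of $G$. The ``$\ge$'' direction is immediate since each biconnected component is a subgraph and treewidth is monotone under subgraphs. For ``$\le$'', given tree decompositions $(T_B, \{X_t^B\}_{t \in T_B})$ of each biconnected component $B$ of width $\le k$, I would build a tree decomposition of $G$ of width $\le k$ by gluing: for each cut vertex $v$, every biconnected component $B$ containing $v$ has some bag $X_{t(B,v)}^B$ containing $v$ (because $v \in V(B)$ forces it to lie in at least one bag); one then identifies all these bags across components by adding tree edges joining the nodes $t(B,v)$ (using the block-cut tree of $G$ as a template). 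Standard verification shows the three tree decomposition axioms still hold: every vertex's bags form a connected subtree (for non-cut vertices this is internal to one component's decomposition; for cut vertices one uses the connectivity of the block-cut tree restricted to blocks containing $v$) and every edge is still covered since each edge lies inside exactly one biconnected component.

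\textbf{Step 2: Biconnected graphs of treewidth $\le 2$ are exactly biconnected series-parallel graphs.} With Step 1, the lemma reduces to the biconnected case. I would establish it using the two classical characterizations: (a) a graph has treewidth $\le 2$ iff it has no $K_4$ minor, and (b) a biconnected graph is series-parallel iff it has no $K_4$ minor. For (a), the $\Rightarrow$ direction is immediate since $K_4$ has treewidth $3$ and treewidth is minor-monotone; the $\Leftarrow$ direction is proved by induction on the number of vertices, using that a $K_4$-minor-free graph with at least three vertices must contain a vertex of degree $\le 2$, contracting/removing which preserves $K_4$-minor-freeness and allows one to extend a width-$2$ tree decomposition of the smaller graph back to the original by adding a bag of size $\le 3$. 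For (b), one uses the standard series-parallel recursion: either the graph is a single edge (trivially $K_4$-minor-free), or it decomposes as a series or parallel composition of smaller biconnected series-parallel graphs; conversely, a biconnected $K_4$-minor-free graph must admit such a decomposition, which one shows by induction using the existence of a vertex of degree $2$ (whose removal yields a series reduction) or a multi-edge after such reductions (which yields a parallel reduction).

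\textbf{Putting it together and main obstacle.} Combining Steps 1 and 2: $G$ has treewidth $\le 2$ iff every biconnected component has treewidth $\le 2$ (Step 1) iff every biconnected component is series-parallel (Step 2, applied to each biconnected component). The main obstacle is Step 2(b), the biconnected series-parallel equivalence via $K_4$-minor-freeness: the reverse implication requires showing that a biconnected $K_4$-minor-free graph necessarily admits a series/parallel decomposition, and the inductive argument must be set up carefully so that the reductions preserve biconnectivity (or, if they do not, so that the resulting multigraph with a parallel reduction is handled correctly). A secondary subtlety in Step 1 is the treatment of degenerate biconnected components (single edges and isolated vertices), which must be handled so that the gluing across the block-cut tree remains a valid tree decomposition; these cases all have treewidth $\le 1$ and thus do not obstruct the width bound.
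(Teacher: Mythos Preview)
The paper does not prove this lemma at all; it is stated with a citation to \cite{Bodlaender98} and used as a black box in the proof of Theorem~\ref{theorem:treewidth}. Your proposal is a reasonable outline of the standard textbook argument (factor treewidth through the block-cut tree, then invoke the $K_4$-minor-free characterization on each biconnected piece), so there is nothing in the paper to compare it against.
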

To design the protocol, we use a similar approach to the one used in the protocol for outerplanarity of Theorem \ref{theorem:outerplanar}. That is, we would like for the prover decompose the graph into its biconnected components and provide proof for the validity of the decomposition as well as that each component is series-parallel. To that end, we recall that a nested ear decomposition is a special case of \emph{open ear decompositions} and that a graph is biconnected if and only if it admits an open ear decomposition where the first ear is a single edge \cite{Whitney1931}. Moreover, it holds that if a graph is biconnected, then for any edge $e\in E$, there exists an open ear decomposition starting from $P_{1}=e$. We now turn to prove Theorem \ref{theorem:treewidth}.
\begin{proof}[Proof of Theorem \ref{theorem:treewidth}.]
	Consider a biconnected component $C$, and let $u$ be its $C$-separating node (as defined in Section \ref{section:outerplanar}. The prover defines the $C$-leader as some node $v\in C\cap N(u)$. Then, the prover computes a nested ear decomposition of $G(C)$ with $P_{1}=(u,v)$ as the first ear. 
	
	Following that, similarly to the outerplanarity protocol of Theorem \ref{theorem:outerplanar}, the prover seeks to encode the block-cut tree $T$ to the verifier and prove that: (1) every non-cut node is adjacent only to nodes in its biconnected component; (2) $T$ admits a tree structure; and (3) $G(C)$ is series-parallel for each biconnected component $C$. This is obtained based on the following modifications to the outerplanarity protocol. First, instead of a Hamiltonian path, the prover sends the nodes an encoding of a tree rooted at the $C$-leader and spans all nodes of $C$ apart from the $C$-separating node. Recall that such a tree exists as $G(C)$ remains connected after removing a single node. This allows the prover to prove (1) and (2) by similar measures to the ones presented in the outerplanarity protocol. For item (3), notice that by construction, the $C$-separating node $u$ is the leftmost node in each ear of $G(C)$ in which it participates. Thus, by similar arguments to the ones presented in the outerplanarity protocol, the prover and verifier can execute a protocol to verify that $G(C)$ admits a nested ear decomposition without assigning a label to $u$. Overall, the described construction yields a protocol for graphs of treewidth at most $2$.
\end{proof}
		\section{Lower Bound}
	In this section, we present a lower bound on the proof size of one-round protocols. In fact, the lower bound holds even if we augment the model with the following strengthening assumptions. First, assume that each node $v\in V$ receives a local input which consists of its own identifier as well as with its neighbors identifiers. Recall that in contrast, our upper bounds apply even if the nodes are \emph{anonymous}. Furthermore, the lower bounds hold even if we assume that the randomness in the protocol comes in the form of an unbounded random string \emph{shared} among the nodes.\footnote{Notice that shared randomness can only strengthen the model. This is because given a shared random string $r$, it is easy for each node $v\in V$ to simulate individual randomness simply by taking the bits of $r$ which are located at multiples of $\text{id}(v)$ as its random bits.} The full lower bound details are stated in the following lemma.
	
	\begin{lemma}\label{lemma:lb}
		Suppose that $\Pi$ is a family of planar graphs that contains every biconnected outerplanar graph. Then, any one-round protocol for deciding membership in $\Pi$ with completeness and soundness errors smaller than $1/10$ requires a proof size of $\Omega(\log n)$.
	\end{lemma}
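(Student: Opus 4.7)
The plan is to extend the $\Omega(\log n)$ lower bound of \cite{FeuilloleyFMRRT21} (which rules out short proof labeling schemes distinguishing outerplanar graphs from non-planar graphs in the \emph{deterministic} setting) along two axes: (i) strengthen the yes-side of the hardness construction so that every yes-instance is a biconnected outerplanar graph; and (ii) lift the deterministic bound to the randomized shared-coin setting. This suffices for the lemma: any family $\Pi$ as in the hypothesis must accept all biconnected outerplanar yes-instances (since they are contained in $\Pi$) and reject all non-planar no-instances (since $\Pi$ is a family of planar graphs), so any one-round protocol for $\Pi$ immediately yields a one-round protocol for the hard problem.

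For step (i), I would revisit the combinatorial fooling construction underlying \cite{FeuilloleyFMRRT21}. Their bound proceeds by exhibiting two families of graphs, outerplanar yes-instances and non-planar no-instances, which share a common local structure such that with $o(\log n)$ proof bits the prover is forced, via pigeonhole, to assign the same label to some component in both families. A cut-and-paste then produces a third graph locally indistinguishable from both. I would either verify that the instances used there are already biconnected outerplanar (up to routine padding), or modify them by closing each yes-instance into a Hamiltonian cycle with non-crossing chords, which by definition is biconnected outerplanar, while simultaneously applying the corresponding edge additions to the no-instances so that the local cut-and-paste argument still goes through.

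For step (ii), I would invoke the framework of \cite{FraigniaudMORT19} that lifts deterministic indistinguishability-based lower bounds of this flavor to randomized protocols with shared randomness. Concretely, suppose towards contradiction there is a one-round randomized protocol with proof size $\ell = o(\log n)$ and both error parameters below $1/10$. For every fixed value of the shared random string the protocol is deterministic, so the construction from step (i) yields a yes-instance $G_Y$, a no-instance $G_N$, and a merged instance $H$ whose local views coincide with those of $G_Y$ and $G_N$. Hence for each $H$, $G_Y$, and $G_N$ the output depends only on the shared randomness and on the labels of these local views. The completeness and soundness requirements force the probabilities of accepting $G_Y$, rejecting $G_N$, and simultaneously outputting a consistent verdict on $H$ each to be at least $9/10$, but a union bound shows that these three events cannot simultaneously hold, yielding the contradiction.

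The main obstacle I expect is step (i): one must check that the specific instance family of \cite{FeuilloleyFMRRT21} can be adjusted to contain only biconnected outerplanar yes-instances without destroying the indistinguishability property or inadvertently introducing a $K_5$ or $K_{3,3}$ minor on the yes-side. The randomized lifting of step (ii) is in comparison quite mechanical once the right deterministic template is in place, since the $1/10$ error bound leaves plenty of slack for the three-event union bound on $G_Y$, $G_N$, and $H$.
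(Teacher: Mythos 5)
Your high-level route matches the paper's: adapt the fooling construction of \cite{FeuilloleyFMRRT21} so that yes-instances are biconnected outerplanar, and lift the bound to randomized verifiers in the spirit of \cite{FraigniaudMORT19}. However, there are two genuine gaps. First, step (i) is exactly the part that needs to be carried out, and you leave it as ``verify or modify.'' The paper's proof is constructive: yes-instances $G_{a,b}$ consist of two ID-labeled paths of length $n/2$ joined by three edges at positions $1$, $n/4$, $n/2$ (hence Hamiltonian-cycle-closable, hence biconnected outerplanar), and the no-instance is obtained by gluing \emph{nine} such yes-instances (three $a$-paths and three $b$-paths, with the connecting edge at position $q_j$ of $P_{a^i}$ rerouted to $P_{b^{i+j}}$), so that contracting the paths yields a $K_{3,3}$ minor. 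A pigeonhole over the $o(\log n)$-bit labels of the six critical nodes is what guarantees the glued graph is locally indistinguishable from the yes-instances. Without specifying such a family and gluing, the claim that the cut-and-paste ``still goes through'' is unsupported, and this is where the actual work lies.

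Second, your randomized lifting is mis-structured. There is no separate no-instance $G_N$ in the argument: the merged graph \emph{is} the no-instance, and every one of its local views (ID neighborhood plus labels) must coincide with a view in some \emph{yes}-instance under the honest labeling; a requirement of a ``consistent verdict on $H$'' is not part of the protocol and cannot enter a union bound. Moreover, fixing the shared random string first and then invoking the deterministic construction does not work: the pigeonhole is over the prover's labels, and in a one-round protocol these are fixed before (and independently of) the verifier's randomness, which is precisely what lets one choose the fooling instances once and then reason probabilistically. The correct accounting is a union bound over the rejection events of the nine yes-instances whose views cover the merged graph: each has probability $<1/10$ by completeness, so the merged no-instance is rejected with probability $<9/10$, contradicting soundness error $<1/10$. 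In particular your claim that the $1/10$ threshold leaves ``plenty of slack'' for a three-event bound misreads the quantitative structure --- with nine glued instances the constant $1/10$ is essentially tight.
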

	We remark that the lower bound applies even if we restrict the attention to instances with maximum degree $\Delta\leq 3$. We can now prove Theorem \ref{theorem:lb}.
	\begin{proof}[Proof of Theorem \ref{theorem:lb}]
		Lemma \ref{lemma:lb} immediately gives implies the desired lower bound for all problems stated in Theorem \ref{theorem:lb} apart from planar embedding. To handle the planar embedding, we recall that by Lemma \ref{lemma:embedding-to-planarity}, a protocol for planar embedding implies a protocol for planarity with the same number of interaction rounds and an additive overhead of $O(\log \Delta)$. Since the Lemma \ref{lemma:lb} applies even for graph with $\Delta\leq 3$, we can deduce an $\Omega(\log  n)$ lower bound for planar embedding which completes our proof.
	\end{proof}

	 We go on to prove the main lemma.
	\begin{proof}[Proof of Lemma \ref{lemma:lb}]
		The construction is based on the lower bound for proof labeling schemes presented in \cite{FeuilloleyFMRRT21} (we slightly modify the lower bound so that yes-instances will be biconnected). To adapt it to randomized protocols, we use a framework presented in \cite{FraigniaudMORT19}.
		
		The general idea of \cite{FeuilloleyFMRRT21} is to show that if the proof size is $o(\log n)$, then it is possible to define a set of yes-instances (i.e., biconnected outerplanar graphs) that when ``glued'' together produce a no-instance (i.e., a non-planar graph) such that the individual nodes cannot distinguish between the instances. 
		
		Let $n$ be an integer divisible by $4$ and let $a_{1},\dots ,a_{n},b_{1},\dots , b_{n}$ be a partition of $[n^{2}]$ into $2n$ disjoint sets of size $n/2$. For $a\in \{a_{1},\dots ,a_{n}\}$ and $b\in \{b_{1},\dots ,b_{n}\}$, define the yes-instance $G_{a,b}$ as follows. First, construct two disjoint paths $P_{a},P_{b}$ each with $n/2$ nodes. Let $P_{a}[j]$ (resp., $P_{b}[j]$) denote the $j$-th leftmost node in $P_{a}$(resp., $P_{b}$). The nodes of $P_{a}$ and $P_{b}$ are assigned with the IDs in $a$ and $b$, respectively, in increasing order (so that $P_{a}[j]$ and $P_{b}[j]$ are assigned the $j$-th smallest ID in $a$ and $b$, respectively). Let us denote $q_{1}=1,q_{2}=n/4,q_{3}=n/2$. For every $j\in \{1,2,3\}$, we add an edge between $P_{a}[q_{j}]$ and $P_{b}[q_{j}]$. Observe that the constructed graph is outerplanar. Moreover, the path $\langle P_{a}[1],P_{a}[2],\dots ,P_{a}[n/2],P_{b}[n/2],\dots ,P_{b}[2],P_{b}[1],P_{a}[1] \rangle$ forms a Hamiltonian cycle in the graph. Hence, the graph is biconnected outerplanar.
		
		Now, suppose that there is a one-round protocol with a proof size of $o(\log n)$ and for a given instance $G_{a,b}$, let $L_{a,b}$ denote the label assignment of the honest prover (i.e., the label assignment causing the verifier to accept the instance with probability larger than $9/10$). For an instance $G_{a,b}$, let us denote \[L(a,b)=(L_{a,b}(P_{a}[1]),L_{a,b}(P_{a}[n/4]),L_{a,b}(P_{a}[n/2]),L_{a,b}(P_{b}[1]),L_{a,b}(P_{b}[n/4]),L_{a,b}(P_{b}[n/2]))\ .\] Observe that $|L(a,b)|=o(\log n)$. Therefore, by a standard counting argument, there exist $a^{1},a^{2},a^{3}\in \{a_{1},\dots ,a_{n}\}$ and $b^{1},b^{2},b^{3}\in \{b_{1},\dots ,b_{n}\}$ such that $L(a^{i},b^{j})=L(a^{i'},b^{j'})$ for all $(i,i',j,j')\in \{1,2,3\}^{4}$. 
		
		The no-instance $G'$ is constructed as follows. Let $P_{a^{1}},P_{a^{2}},P_{a^{3}}$ (resp., $P_{b^{1}},P_{b^{2}},P_{b^{3}}$) be paths on $n/2$ nodes where the IDs in each $P_{a^{i}}$ (resp., $P_{b^{i}}$) are taken from $a^{i}$ (resp., $b^{i}$) in increasing order. Then, for every $1\leq i,j \leq 3$, add an edge between $P_{a^{i}}[q_{j}]$ and $P_{b^{i+j}}[q_{j}]$, where $i+j$ is taken modulo $3$ whenever larger than $3$. Observe that if one contracts the edges of each path $P\in \{P_{a^{1}},P_{a^{2}},P_{a^{3}},P_{b^{1}},P_{b^{2}},P_{b^{3}}\}$ into a single node, then we are left with the graph $K_{3,3}$. That is, $G'$ contains $K_{3,3}$ as a minor and thus, is not planar. 
		
		For the instance $G'$, define the label assignment $L'$ as follows. For each node $v\in P_{a^{i}}$ (resp., $v\in P_{b^{i}}$), assign the label $L'(v)=L_{a^{i},b^{i}}(v)$. The main observation that facilitates the lower bound is that given the label assignment $L'$, every node in the no-instance $G'$ has a local view which is identical (in distribution) to its local view in some yes-instance $G_{a^{i},b^{j}}$ given the label assignment $L_{a^{i},b^{j}}$. Let  $R_{i,j}$ denote the event that the verifier rejects the instance $G_{a^{i},b^{j}}$ given the label assignment $L_{a^{i},b^{j}}$ for each $1\leq i,j\leq 3$. Notice that by the completeness of the protocol, it follows that $\Pr[R_{i,j}]<1/10$. On the other hand, by the observation above, the probability of a node rejecting $G'$ is bounded from above by $\Pr[\bigcup_{(i,j)\in \{1,2,3\}^{2}}R_{i,j}]$. By a union bound argument, we can bound this probability by $\sum_{(i,j)\in \{1,2,3\}^{2}}\Pr[R_{i,j}]<9/10$. Since $G'$ is a no-instance, this contradicts the soundness of the protocol.
	\end{proof}

	\paragraph*{Acknowledgment:}
	We would like to thank Eylon Yogev for helpful discussions regarding \cite{NaorPY20}.
	
	\clearpage
	\bibliographystyle{alpha}
	
	\bibliography{references}

\newcommand{\etalchar}[1]{$^{#1}$}
\begin{thebibliography}{FMO{\etalchar{+}}19}

\bibitem[BFP24]{BousquetFP24}
Nicolas Bousquet, Laurent Feuilloley, and Th{\'{e}}o Pierron.
\newblock Local certification of graph decompositions and applications to
  minor-free classes.
\newblock {\em J. Parallel Distributed Comput.}, 193:104954, 2024.

\bibitem[BFZ24]{BousquetFZ24}
Nicolas Bousquet, Laurent Feuilloley, and S{\'{e}}bastien Zeitoun.
\newblock Local certification of local properties: Tight bounds, trade-offs and
  new parameters.
\newblock In Olaf Beyersdorff, Mamadou~Moustapha Kant{\'{e}}, Orna Kupferman,
  and Daniel Lokshtanov, editors, {\em 41st International Symposium on
  Theoretical Aspects of Computer Science, {STACS} 2024, March 12-14, 2024,
  Clermont-Ferrand, France}, volume 289 of {\em LIPIcs}, pages 21:1--21:18.
  Schloss Dagstuhl - Leibniz-Zentrum f{\"{u}}r Informatik, 2024.

\bibitem[BKO22]{BickKO22}
Aviv Bick, Gillat Kol, and Rotem Oshman.
\newblock Distributed zero-knowledge proofs over networks.
\newblock In Joseph~(Seffi) Naor and Niv Buchbinder, editors, {\em Proceedings
  of the 2022 {ACM-SIAM} Symposium on Discrete Algorithms, {SODA} 2022, Virtual
  Conference / Alexandria, VA, USA, January 9 - 12, 2022}, pages 2426--2458.
  {SIAM}, 2022.

\bibitem[Bod98]{Bodlaender98}
Hans~L. Bodlaender.
\newblock A partial \emph{k}-arboretum of graphs with bounded treewidth.
\newblock {\em Theor. Comput. Sci.}, 209(1-2):1--45, 1998.

\bibitem[CFP19]{CrescenziFP19}
Pierluigi Crescenzi, Pierre Fraigniaud, and Ami Paz.
\newblock Trade-offs in distributed interactive proofs.
\newblock In Jukka Suomela, editor, {\em 33rd International Symposium on
  Distributed Computing, {DISC} 2019, October 14-18, 2019, Budapest, Hungary},
  volume 146 of {\em LIPIcs}, pages 13:1--13:17. Schloss Dagstuhl -
  Leibniz-Zentrum f{\"{u}}r Informatik, 2019.

\bibitem[EL22]{EsperetL22}
Louis Esperet and Benjamin L{\'{e}}v{\^{e}}que.
\newblock Local certification of graphs on surfaces.
\newblock {\em Theor. Comput. Sci.}, 909:68--75, 2022.

\bibitem[Epp92]{Eppstein92}
David Eppstein.
\newblock Parallel recognition of series-parallel graphs.
\newblock {\em Inf. Comput.}, 98(1):41--55, 1992.

\bibitem[FFM{\etalchar{+}}21]{FeuilloleyFMRRT21}
Laurent Feuilloley, Pierre Fraigniaud, Pedro Montealegre, Ivan Rapaport,
  {\'{E}}ric R{\'{e}}mila, and Ioan Todinca.
\newblock Compact distributed certification of planar graphs.
\newblock {\em Algorithmica}, 83(7):2215--2244, 2021.

\bibitem[FFM{\etalchar{+}}23]{FeuilloleyF0RRT23}
Laurent Feuilloley, Pierre Fraigniaud, Pedro Montealegre, Ivan Rapaport,
  {\'{E}}ric R{\'{e}}mila, and Ioan Todinca.
\newblock Local certification of graphs with bounded genus.
\newblock {\em Discret. Appl. Math.}, 325:9--36, 2023.

\bibitem[FGNP21]{FraigniaudGNP21}
Pierre Fraigniaud, Fran{\c{c}}ois~Le Gall, Harumichi Nishimura, and Ami Paz.
\newblock Distributed quantum proofs for replicated data.
\newblock In James~R. Lee, editor, {\em 12th Innovations in Theoretical
  Computer Science Conference, {ITCS} 2021, January 6-8, 2021, Virtual
  Conference}, volume 185 of {\em LIPIcs}, pages 28:1--28:20. Schloss Dagstuhl
  - Leibniz-Zentrum f{\"{u}}r Informatik, 2021.

\bibitem[FIP10]{FraigniaudIP10}
Pierre Fraigniaud, David Ilcinkas, and Andrzej Pelc.
\newblock Communication algorithms with advice.
\newblock {\em J. Comput. Syst. Sci.}, 76(3-4):222--232, 2010.

\bibitem[FKL10]{FraigniaudKL10}
Pierre Fraigniaud, Amos Korman, and Emmanuelle Lebhar.
\newblock Local {MST} computation with short advice.
\newblock {\em Theory Comput. Syst.}, 47(4):920--933, 2010.

\bibitem[FMO{\etalchar{+}}19]{FraigniaudMORT19}
Pierre Fraigniaud, Pedro Montealegre, Rotem Oshman, Ivan Rapaport, and Ioan
  Todinca.
\newblock On distributed merlin-arthur decision protocols.
\newblock In Keren Censor{-}Hillel and Michele Flammini, editors, {\em
  Structural Information and Communication Complexity - 26th International
  Colloquium, {SIROCCO} 2019, L'Aquila, Italy, July 1-4, 2019, Proceedings},
  volume 11639 of {\em Lecture Notes in Computer Science}, pages 230--245.
  Springer, 2019.

\bibitem[GH16]{GhaffariH16}
Mohsen Ghaffari and Bernhard Haeupler.
\newblock Distributed algorithms for planar networks {I:} planar embedding.
\newblock In George Giakkoupis, editor, {\em Proceedings of the 2016 {ACM}
  Symposium on Principles of Distributed Computing, {PODC} 2016, Chicago, IL,
  USA, July 25-28, 2016}, pages 29--38. {ACM}, 2016.

\bibitem[GMN23]{GallMN23}
Fran{\c{c}}ois~Le Gall, Masayuki Miyamoto, and Harumichi Nishimura.
\newblock Distributed quantum interactive proofs.
\newblock In Petra Berenbrink, Patricia Bouyer, Anuj Dawar, and
  Mamadou~Moustapha Kant{\'{e}}, editors, {\em 40th International Symposium on
  Theoretical Aspects of Computer Science, {STACS} 2023, March 7-9, 2023,
  Hamburg, Germany}, volume 254 of {\em LIPIcs}, pages 42:1--42:21. Schloss
  Dagstuhl - Leibniz-Zentrum f{\"{u}}r Informatik, 2023.

\bibitem[GMR89]{GoldwasserMR89}
Shafi Goldwasser, Silvio Micali, and Charles Rackoff.
\newblock The knowledge complexity of interactive proof systems.
\newblock {\em {SIAM} J. Comput.}, 18(1):186--208, 1989.

\bibitem[GS16]{GoosS16}
Mika G{\"{o}}{\"{o}}s and Jukka Suomela.
\newblock Locally checkable proofs in distributed computing.
\newblock {\em Theory Comput.}, 12(1):1--33, 2016.

\bibitem[HKN24]{HasegawaKN24}
Atsuya Hasegawa, Srijita Kundu, and Harumichi Nishimura.
\newblock On the power of quantum distributed proofs.
\newblock In Ran Gelles, Dennis Olivetti, and Petr Kuznetsov, editors, {\em
  Proceedings of the 43rd {ACM} Symposium on Principles of Distributed
  Computing, {PODC} 2024, Nantes, France, June 17-21, 2024}, pages 220--230.
  {ACM}, 2024.

\bibitem[HT74]{hopcroft1974efficient}
John Hopcroft and Robert Tarjan.
\newblock Efficient planarity testing.
\newblock {\em Journal of the ACM (JACM)}, 21(4):549--568, 1974.

\bibitem[KKP10]{KormanKP10}
Amos Korman, Shay Kutten, and David Peleg.
\newblock Proof labeling schemes.
\newblock {\em Distributed Comput.}, 22(4):215--233, 2010.

\bibitem[KOS18]{kol2018interactive}
Gillat Kol, Rotem Oshman, and Raghuvansh~R Saxena.
\newblock Interactive distributed proofs.
\newblock In {\em Proceedings of the 2018 ACM Symposium on Principles of
  Distributed Computing}, pages 255--264, 2018.

\bibitem[NPY20]{NaorPY20}
Moni Naor, Merav Parter, and Eylon Yogev.
\newblock The power of distributed verifiers in interactive proofs.
\newblock In Shuchi Chawla, editor, {\em Proceedings of the 2020 {ACM-SIAM}
  Symposium on Discrete Algorithms, {SODA} 2020, Salt Lake City, UT, USA,
  January 5-8, 2020}, pages 1096--115. {SIAM}, 2020.

\bibitem[Whi31]{Whitney1931}
Hassler Whitney.
\newblock Non-separable and planar graphs.
\newblock {\em Proceedings of the National Academy of Sciences of the United
  States of America}, 17 2:125--7, 1931.

\end{thebibliography}
\end{document}